\def\anon{0}
\documentclass[11pt]{article}
\ifnum\anon=1  \fi 

\usepackage{fullpage}
\usepackage{amssymb}
\usepackage{amsmath}
\usepackage{amsthm}
\usepackage{multirow}
\usepackage{enumerate}
\usepackage{graphicx}
\usepackage{mathrsfs}
\usepackage[utf8]{inputenc}
\usepackage{mdframed}
\usepackage{thmtools, thm-restate}

\usepackage[hidelinks,hypertexnames=false]{hyperref}
\usepackage[nameinlink]{cleveref}
\usepackage{url}            
\usepackage{footnotebackref}

\usepackage{amsfonts}
\usepackage[T1]{fontenc}
\usepackage{bm}
\usepackage{changepage}
\usepackage[dvipsnames,svgnames,table]{xcolor}
\usepackage{subcaption}
\usepackage{enumitem}
\usepackage{algpseudocode}

\usepackage{lineno}

\usepackage{bbm}

\newtheorem{theorem}{Theorem}[section]

\newtheorem{lemma}[theorem]{Lemma}
\newtheorem{observation}[theorem]{Observation}
\newtheorem{proposition}[theorem]{Proposition}
\newtheorem{definition}[theorem]{Definition}
\newtheorem{claim}[theorem]{Claim}
\newtheorem{fact}[theorem]{Fact}
\newtheorem{remark}[theorem]{Remark}

\newtheorem{question}{Question}

\newcommand{\change}[1]{#1}

\newcommand{\paren}[1]{\left(#1\right)}

\newcommand{\brac}[1]{\left[#1\right]}

\newcommand{\set}[1]{\left\{#1\right\}}

\newcommand{\veca}{{\mathbf{a}}}

\newcommand{\vecx}{{\mathbf{x}}}
\newcommand{\vecy}{{\mathbf{y}}}
\newcommand{\vecz}{{\mathbf{z}}}

\newcommand{\bool}{\{0,1\}}
\newcommand{\bs}[1][n]{\{0,1\}^{#1}_{#1/2}}
\newcommand{\bss}[1][n]{\{0,1\}^{2#1}_{#1}}
\newcommand{\bg}[1][n/2]{\{0,1\}^{#1}}
\newcommand{\pds}{\cP(D,d)}

\newcommand{\bx}{\mathbf{x}}
\newcommand{\by}{\mathbf{y}}

\newcommand{\cTgrid}{\mathcal{T}_{\textnormal{grid}}}
\newcommand{\cTslice}{\mathcal{T}_{\textnormal{slice}}}
\renewcommand{\ge}{\geqslant}
\renewcommand{\geq}{\geqslant}
\renewcommand{\le}{\leqslant}
\renewcommand{\leq}{\leqslant}
\renewcommand{\epsilon}{\varepsilon}

\newcommand{\F}{\mathbb{F}}

\newcommand{\R}{\mathbb{R}}

\newcommand{\Z}{\mathbb{Z}}

\newcommand{\cB}{\mathcal{B}}

\newcommand{\cD}{\mathcal{D}}

\newcommand{\cF}{\mathcal{F}}
\newcommand{\cG}{\mathcal{G}}

\newcommand{\cP}{\mathcal{P}}

\newcommand{\cT}{\mathcal{T}}

\DeclareMathOperator*{\E}{\mathbb{E}}

\newcommand{\poly}{\textnormal{poly}}

\usepackage[linesnumbered,ruled,vlined]{algorithm2e} 
\SetKwComment{Comment}{/* }{ */}

\SetCommentSty{mycommfont}

\definecolor{thmcolor}{RGB}{235, 235, 235}
\definecolor{citecolor}{RGB}{1, 210, 56}
\definecolor{lemmacolor}{RGB}{130, 169, 252}
\usepackage[most]{tcolorbox}
\newtcolorbox{algobox}{colback=lightgray!5!white,colframe=lightgray!75!black}
\newtcolorbox{thmbox}{colback=thmcolor!5!white,colframe=black!75!black}
\newtcolorbox{lemmabox}{colback=lemmacolor!5!white,colframe=blue!75!blue}

\hypersetup{
	colorlinks,
	linkcolor={blue},
	citecolor={citecolor},
	urlcolor={blue}
}

\usepackage[
	backend=biber,
	style=alphabetic,
	sorting=nyt,
	backref=true,
	maxcitenames = 8,
	mincitenames = 5,
	maxalphanames = 8,
	minalphanames = 5,
	maxnames = 10,
	minnames = 5
]{biblatex}

\begin{document}
\title{Low-Degree Testing Over Boolean Slices}

\if\anon0{\author{Prashanth Amireddy\thanks{School of Engineering and Applied Sciences, Harvard University, Cambridge, Massachusetts, USA. Supported in part by Madhu Sudan's Simons Investigator Award and NSF Award CCF 2152413 and Salil Vadhan's Simons Investigator Award. Part of this work was done during a visit to the University of Copenhagen supported by the European Research Council (ERC) under grant agreement no. 101125652 (ALBA). Email: \texttt{pamireddy@g.harvard.edu}.} \and
		Amik Raj Behera\thanks{Department of Computer Science, University of Copenhagen, Denmark. Supported by Srikanth Srinivasan's start-up grant from the University of Copenhagen. Email: \texttt{ambe@di.ku.dk}. } \and
		Srikanth Srinivasan \thanks{Department of Computer Science, University of Copenhagen, Denmark. Supported by the European Research Council (ERC) under grant agreement no. 101125652 (ALBA). Email: \texttt{srsr@di.ku.dk}. } \and
		Madhu Sudan\thanks{School of Engineering and Applied Sciences, Harvard University, Cambridge, Massachusetts, USA. Supported in part by a Simons Investigator Award, NSF Award CCF 2152413 and AFOSR award FA9550-25-1-0112. Email: \texttt{madhu@cs.harvard.edu}.} \and Sophus Valentin Willumsgaard \thanks{Department of Computer Science, University of Copenhagen, Denmark. Supported by the European Research Council (ERC) under grant agreement no. 101125652 (ALBA). Email: \texttt{sophus.willumsgaard@di.ku.dk}. }   }}\else{
}\fi
\maketitle

\begin{abstract}
	We study low-degree testing for group-valued functions over a Boolean slice. Specifically given a degree parameter $d$ and oracle access to a function
	$f:\{0,1\}^n_{n/2}\to G$ where $\{0,1\}^n_k$ denotes the set of vectors in $\{0,1\}^n$ of Hamming weight $k$ and $G$ is an Abelian group (not necessarily finite), the low-degree testing problem asks us to distinguish the case where $f$ is a polynomial of degree at most $d$ (with coefficients from $G$) or is $\epsilon$-far from the set of all such polynomials. Classical works in this area considered functions with domain $\F_q^n$ and range $\F_q$. More recent works have considered the setting where the domain is the Boolean cube [Bafna, Srinivasan, Sudan (Random Structures and Algorithms 2020), Amireddy, Srinivasan, Sudan (RANDOM 2023)], or when the domain is the slice (i.e., $\{0,1\}^n_{k}$) and the range is $\F_2$ [David, Dinur, Goldenberg, Kindler and Shinkar (SIAM  Journal on Computing 2017), Kalai, Lifshitz, Minzer and Ziegler (FOCS 2024)]. Each of the changes introduces new challenges in designing and analyzing low-degree tests and this happens again in our setting with domain being a slice and range is general. Indeed the previous methods fail even when the domain is a Boolean slice and the range is $\F_3$.\\

	Our main theorem gives a test that makes $O_d(1)$ (specifically $\exp(d^{O(1)})$) queries to $f$ and accepts degree-$d$ functions while rejecting functions that are $\epsilon$-far with probability $\Omega(\varepsilon)$.
	The central proof idea is to reduce this low-degree testing problem to the problem of low-degree testing on the cube. Specifically we show how to randomly embed the $n/2$-dimensional cube $\{0,1\}^{n/2}$ in the $n$-dimensional slice while nearly preserving the proximity of $f$ to the space of degree-$d$ polynomials on this cube (with high probability). While the embedding is simple and natural, the analysis involves a careful induction (seen in some prior works on low-degree testing) with a novel use of a basis of degree-$d$ polynomials on slices (from a work of Anstee, R\'{o}nyai and Sali (Graphs and Combinatorics 2002)). Such a basis of functions is non-trivial and has several nice combinatorial and algebraic closure properties. We show how these properties are useful by using them to analyze our low-degree tests.
\end{abstract}


\tableofcontents

\newpage

\section{Introduction}\label{sec:intro}

The low-degree testing problem, namely the task of testing if a multivariate function given as an oracle is (close to) a low-degree polynomial with few oracle queries, has been a fundamental problem in theoretical computer science for over three decades. (Some of the earlier major milestones in this line of work include \cite{BLR, RubSud,AroraS,ALMSS,RazSafra,AroraSudan,AKKLR,KaufRon}.) This problem has been studied in many different settings, settings that vary depending on the domain of the function, the range of the function, on the relationship between the degree and field/group size, and on the coarseness/fineness of the analysis. In this work we consider this problem in yet another new setting, where the domain is a Boolean ``slice'' and the range is an Abelian group. Here a ``slice'' is a subset of points in the Boolean cube $\{0,1\}^n$ all of the same Hamming weight --- we use $\{0,1\}^n_k$ to denote the $k$-th slice, namely points of Hamming weight $k$. A polynomial over a group $G$ has coefficients from $G$ while the variables take on integer (in our case one of 0/1) values with evaluation defined in the natural way. (For simplicity the reader may simply think of $G$ as the reals while reading this section.)

The slice is often an important domain to understand, especially in the context of average case analysis of algorithms. Just as the uniform distribution over $\{0,1\}^n$ models a uniformly random graph and the $p$-biased distribution models a graph drawn from $G_{n,p}$, the slice naturally corresponds to a random graph drawn from $G(n,m)$ --- a graph on $n$ vertices with exactly $m$ edges.

Our main motivation for studying this setting however is technical, to see what kind of tools can be brought to bear on this problem in our setting. Indeed low-degree testing has been a great source of connections to powerful mathematical tools, including studies of (higher order) Fourier analysis, affine invariance, high-dimensional expansion, (global) hypercontractivity. The two directions of prior works that come closest to our setting are from the works of David, Dinur, Goldenberg, Kindler and Shinkar~\cite{DDGKS17} and Kalai, Lifshitz, Minzer and Ziegler~\cite{KLMZ} who consider functions from the slice to $\F_2$; and the works of Bafna, Srinivasan and Sudan~\cite{BSS} and Amireddy, Srinivasan and Sudan~\cite{ASS} who consider the setting of functions from the Boolean cube to Abelian groups. It turns out neither of these settings capture our setting adequately and indeed even a natural test is not obvious. The latter works rely strongly on the ability to set variables independently allowing tests of $n$-dimensional functions to sample points from a tensor product space. This allows them to invoke hypercontractivity which is well analyzed in such settings. The former works~\cite{DDGKS17,KLMZ} roughly pick a random linear subspace of $\F_2^n$ and condition it on being in the middle slice $\{0,1\}^n_{n/2}$. While this conditioning is satisfied only with inverse polynomial probability in $n$ (for $O(1)$ dimensional spaces) that is still a relatively high probability event and allows them to build dense models that are closer to $\F_2^n$ and lift the behavior of tests from there. If the range however is $\R$ (or even $\F_3$), the analogous test would pick subspaces in $\R^n$ (or $\F_3^n$) and condition it on being in $\{0,1\}^n_{n/2}$, which of course happens with zero probability (or exponentially small in $n$). With such low probability occurrences it is hard to envision dense models capturing the behavior of the tests.

Nevertheless in this work we are able to design a low-degree test for our setting. Namely given a degree parameter $d$ we design a test $T_d$ that makes $O_d(1)$ queries to an oracle for a function $f:\{0,1\}^n_{n/2} \to G$ and accepts degree $\leq d$ polynomials while rejecting functions that are $\epsilon$-far with probability $\Omega_{d}(\varepsilon)$. The key ingredient in our test is to embed a copy of an $n/2$-dimensional cube in the $n/2$-slice in $n$ dimensions. While this embedding is quite natural, it is a priori unclear how to analyze it, and to do so, we have to blend an inductive argument going back to an analysis of low-degree tests due to Bhattacharyya, Kopparty, Schoenebeck, Sudan, and Zuckerman~\cite{BKSSZ} with a very special basis of polynomials of degree at most $d$ over the slice from a work of Anstee, R\'{o}nyai and Sali~\cite{ARS}. We elaborate more on the basis below, but note first that while the inductive approach is natural it was not a priori clear it would work and indeed the question of whether such an approach could work is raised explicitly in \cite{DDGKS17}.

To understand the need for a special basis, note that a typical analysis of properties of low-degree polynomials uses the standard monomial basis, and a common step in the reasoning is that if some coefficient is non-zero then the polynomial is non-zero. In our setting the standard set of monomials is clearly dependent and does not form a basis.
(To see this, note that for every \(e \geq 1\), the polynomials $\sum_{i=1}^n x_i^e - (n/2)$ vanish over the slice for every group.
More complex polynomials also turn out to be zero making the basis challenging to construct.) It turns out that a nice basis for polynomials is available in the literature, though it is not widely known in the CS literature. The reasons this basis is useful are that it has many natural and desirable properties such as the following.
\begin{itemize}
	\item It is a monomial basis (i.e. spanned by monomial functions) and is a basis for polynomials over any Abelian group. There are other bases e.g. the space of homogeneous polynomials of a given degree that form a basis over fields of characteristic $0$ but not over other fields.
	\item It is graded, i.e. basis elements of degree at most $d$ generate polynomials of degree at most $d$ for each $d$; and downward-closed, i.e. factors of monomials in the basis are also in the basis.
	\item It is closed under a weak form of restriction. This roughly translates to the fact that the basis for degree-$d$ polynomials over $\{0,1\}^{n}_{n/2}$ specializes to a basis for degree-$d$ polynomials over $\{0,1\}^{n-1}_{n/2}$ when $x_1$ is set to $0$. If $d=1$ and the basis were chosen to be $1,x_2,\ldots,x_{n}$ then this property does not hold, while choosing the basis to be $1,x_1,\ldots,x_{n-1}$ does satisfy this property.

\end{itemize}


In what follows we describe our main results before expanding on the proof techniques.


	






\subsection{Our results}

We show the following.

\begin{theorem}[{\bf Low-degree test over slice}]\label{thm:ldt-slices}
	For every Abelian group $G$, degree parameter $d\ge 0$, even integer $n \ge 2$, there exists a $\exp(d^{O(1)})$ query test $\cTslice$ such that for all functions $f:\bool^n_{n/2} \to G$, we have
	\begin{itemize}
		\item (Completeness) If $f$ is degree-$d$, then $\cTslice$ accepts $f$ with probability $1$.
		\item (Soundness) For every $\varepsilon>0$, if $f$ is $\varepsilon$-far from degree-$d$, then $\cTslice$ rejects $f$ with probability $\Omega_{d}(\varepsilon)$.
	\end{itemize}
\end{theorem}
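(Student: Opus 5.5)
Our plan is to reduce to low-degree testing on the Boolean cube, where we may use the known test $\cTgrid$ of \cite{BSS,ASS}. That test is stated for functions $g:\{0,1\}^{m}\to G$: it makes $\exp(d^{O(1)})$ queries, accepts degree-$d$ functions with probability $1$, and rejects functions that are $\delta$-far with probability $\Omega_d(\delta)$.

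\textbf{The test.} Set $m=n/2$. The test $\cTslice$ does the following.
\begin{itemize}
\item Pick a uniformly random perfect matching $\pi=\{(a_1,b_1),\ldots,(a_m,b_m)\}$ of $[n]$.
\item Define the embedding $\phi_\pi:\{0,1\}^m\to\bool^n_{n/2}$ by $x_{a_i}=y_i$ and $x_{b_i}=1-y_i$. The image has weight exactly $m$.
\item Run $\cTgrid$ on $g=f\circ\phi_\pi$.
\end{itemize}
Completeness is immediate: $f\circ\phi_\pi$ is obtained from $f$ by an affine substitution, so it has degree at most $d$ whenever $f$ does. It remains to prove the soundness claim
\[
\Pr_\pi\big[\delta(f\circ\phi_\pi,\ \deg\le d)\ge c_d\,\delta(f)\big]\ge c_d' .
\]
Given this, averaging over $\pi$ yields rejection probability $\Omega_d(\varepsilon)$.

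\textbf{Induction for soundness.} Following \cite{BKSSZ}, we induct on $n$ and separately handle the regime where $\varepsilon$ is small. If the rejection probability is tiny, then for most matchings $\pi$ the restriction $f\circ\phi_\pi$ is close to some degree-$d$ polynomial $P_\pi$. We then restrict to sub-slices: fix a pair $(a,b)$ and set $x_a=0,x_b=1$ (or the reverse). This gives functions on $\{0,1\}^{n-2}_{(n-2)/2}$, and a random matching conditioned on containing $(a,b)$ induces a random matching on the remaining coordinates. By induction, most such restrictions $f_{a,b}$ are close to degree-$d$ polynomials $Q_{a,b}$.

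The goal is to glue the $Q_{a,b}$ into a single global degree-$d$ polynomial $Q$ on the slice. The key point is consistency. When two restrictions share a sub-slice (fix two disjoint pairs), their polynomials agree on a large fraction of it. By a Schwartz--Zippel/distance bound for degree-$d$ polynomials on slices, which says nonzero degree-$d$ functions are nonzero on an $\Omega_d(1)$ fraction once $n\gg d$, the two polynomials must agree exactly. For small $n=O(d)$, the test instead queries everything, which costs $\exp(d^{O(1)})$ queries.

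\textbf{Where the basis enters.} To glue, we write each $Q_{a,b}$ in the Anstee--R\'onyai--Sali basis of degree-$d$ polynomials on the sub-slice. The basis is closed under restriction, meaning setting a variable specializes the basis on $n$ variables to the basis on $n-1$ variables. Because of this, exact agreement on overlaps forces the coefficient vectors to agree coordinate-wise. So we can define $Q$ coefficient by coefficient: each basis monomial involves at most $d$ variables, and we take its coefficient from any restriction whose fixed coordinates avoid those variables. Downward closure and the graded structure ensure that $Q$ has degree $\le d$ and restricts to each $Q_{a,b}$. This uniqueness argument works over an arbitrary Abelian group $G$ because the basis is a $\mathbb{Z}$-basis, so no division is needed.

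Finally, a standard counting argument shows $f$ is close to $Q$, contradicting $\varepsilon$-farness unless the rejection probability is $\Omega_d(\varepsilon)$. We must also take care that the distance loss does not compound across inductive levels. As in \cite{BKSSZ}, this is handled by proving the claim in the "threshold" form $\min(\varepsilon,\tau_d)$ with a constant independent of $n$.

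\textbf{Main obstacle.} The hardest step is the consistency-to-gluing argument. We must show that pairwise agreement of the local $Q_{a,b}$ on overlapping sub-slices yields a global polynomial. This is exactly where the restriction-closure of the ARS basis is needed, and it is where we expect most of the technical work.
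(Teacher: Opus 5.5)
Your test and your completeness argument are the paper's. The soundness sketch, however, has a genuine gap at the choice of restriction. You fix both endpoints of an edge, $x_a=0$ and $x_b=1$, to stay on a pure slice. This throws away $x_a$, and the one-step ``global from local'' statement that a BKSSZ-type argument needs then becomes false.

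Take $f=x_1x_2\cdots x_{d+1}$. This is an ARS basis monomial of degree $d+1$, so by the distance lemma $f$ is about $2^{-(d+1)}$-far from degree $d$. Yet every restriction $f_{a,b}$ that fixes one of $x_1,\dots,x_{d+1}$ (to $0$ or to $1$) has degree at most $d$, and these are a $\Theta(d/n)$ fraction of all restrictions. For $\tau$ below $2^{-(d+1)}$ this gives $\E_{(a,b)}[\min(\delta_d(f_{a,b}),\tau)]=(1-\Theta(d/n))\,\tau$, while $\min(\delta_d(f),\tau)=\tau$. So any inequality passing from level $n$ to level $n-2$ through these restrictions loses a factor $1-\Omega(d/n)$, and $\prod_{n'\le n}(1-\Omega(d/n'))\to 0$. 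The losses do compound. Stating the claim in ``threshold form'' does not help, because the BKSSZ device only works when the per-step failure probabilities are summable.

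The same example shows that gluing from a star of good restrictions is impossible here. $(a,b)$ and $(b,a)$ have disjoint domains, so nothing ties $\{x_a=0\}$ to $\{x_a=1\}$: for vertex $1$ above, both orientations of every edge at $1$ are good, yet $f$ is far. Your gluing only handles disjoint pairs with ``most'' pairs good, which is the wrong quantitative regime. Also, reading off a coefficient ``from any restriction avoiding those variables'' fails for the substitution $x_b=1$, which merges the coefficients of $m$ and $x_bm$.

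The paper restricts only by $x_b:=1-x_a$ and keeps $x_a$ as a new free cube coordinate. This forces every statement onto mixed domains $\{0,1\}^{n_1}\times\{0,1\}^{2n_0}_{n_0}$, using the product basis and the mixed-domain distance lemma. In the example, $f^{(1,b)}$ with $b>d+1$ still has degree $d+1$, and only the $O(d^2)$ edges inside $[d+1]$ are good. In general each $f^{(a,b)}$ sees all of $\{x_a\ne x_b\}$, and a star or matching of $t$ good edges covers all but a $2^{-(t-1)}$ fraction of the domain. Three ingredients then give \Cref{lem:restrn-large-dist}:
\begin{itemize}
\item the star/matching dichotomy;
\item the restriction property of the ARS basis (\Cref{lem:restrn-basis});
\item the gluing claim \Cref{clm:final-agreement}.
\end{itemize}
The lemma says that if at least a $2t^2/\binom{2n_0}{2}$ fraction of edges are good, then $f$ is close to degree $d$. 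This makes the per-step failure probability $O(d^2/n^2)$, which is summable.

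Your sketch also lacks the other two pieces the chain needs. First, the small-distance regime is handled not step by step but in one shot over the whole random matching, via the second-moment bound \Cref{lemma:main-informal}, which yields \Cref{lem:close-distance}. Second, once $\poly(d)$ slice coordinates remain, the chain is closed by the robust local characterization \Cref{lem:final-few}, which rests on the ARS leading-term argument of \Cref{prop:zero-error}. ``Querying everything when $n=O(d)$'' does not replace this. In your induction the base case is soundness of the matching test itself on a small slice, i.e.\ exactly the statement that a function that is not degree $d$ has some matching restriction that is not degree $d$.
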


\noindent We note that the rejection probability of the above test can be made $\Omega(\varepsilon)$ by repeating the test an appropriate $O_d(1)$ (in fact, $\exp(d^{O(1)})$) number of times. Furthermore, the test $\cTslice$ of~\Cref{thm:ldt-slices} can be easily described as follows (a more formal description is given in~\Cref{sec:overview}):
\begin{enumerate}
	\item Choose a perfect matching $M$ over the complete graph on $n$ vertices uniformly at random.
	\item Obtain oracle access to the restricted function $f^{(M)}:\bool^{n/2} \to G$ defined by setting $x_j = 1-x_i$ for all edges $(i,j)$ of the matching $M$ where $i<j\in [n]$.
	\item Test if $f^{(M)}$ is degree-$d$ using the low-degree tests of~\cites{BSS, ASS}.
\end{enumerate}

\begin{remark}
	We note that for $G=\Z_2$, a lower bound of $\Omega(2^{d})$ on the number of queries needed in~\Cref{thm:ldt-slices} follows by noting that the dual distance of the underlying code is at least $2^{d+1}$,\footnote{This follows by observing that the code underlying degree-$d$ evaluations over the slice is obtained by puncturing the code underlying degree-$d$ evaluations over the entire Boolean cube. In particular, the dual distance of the latter code is at least $2^{d+1}$ and dual distance cannot decrease on puncturing.} and then using the approach of~\cite{AKKLR} who showed a $\Omega(2^{d})$ lower bound for the Boolean cube setting.
\end{remark}

\begin{remark}
	\label{rem:new}
	For the third step, we use the low-degree test given by~\cite{BSS,ASS} to handle functions defined on the entire Boolean cube, although using the tester of~\cite{AKKLR,BKSSZ} suffices for the case of $\Z_2$. Even for the case of $d=1$ and $G=\Z_2$, our test and its analysis differ from the previous works~\cite{DDGKS17, KLMZ} on this problem. We also think that this new approach simplifies the conceptual landscape of this problem by giving a black-box reduction from the slice setting to the (much more well-studied) cube setting.
\end{remark}

Furthermore, we demonstrate the power of our techniques by showing a similar result for {\em imbalanced} slices (i.e., for $\bool^n_k$ where \change{$k\ne n/2$}). In fact, we use our low-degree test for the balanced slice to do this. We closely follow that approach of~\cite{DDGKS17}, which handles the $d=1$ case (over $\Z_2$).
This is done by ``reducing'' the problem to a {\em direct product testing} problem which was already solved by Dinur and Steurer~\cite{DinurSteurer}. In our case, we make use of a high dimensional variant of this given by Dinur, Filmus and Harsha~\cite{DFH}. In that sense, this is again a reduction to a problem that has been well-studied. However, performing this reduction for degree greater than $1$ again involves many nuances which we resolve by using the monomial basis described in the previous paragraphs.

\begin{theorem}[{\bf Imbalanced slices}]\label{thm:unbal-slices}
	For every Abelian group $G$, degree parameter $d\ge 0$, positive integers $k\ge 4d$ and $n\ge \change{2}k$, there exists a $\exp(d^{O(1)})$ query test $\cT^n_{k}$ such that for all functions $f:\bool^n_k \to G$, we have
	\begin{itemize}
		\item (Completeness) If $f$ is degree-$d$, then $\cT^n_{k}$ accepts $f$ with probability $1$.
		\item (Soundness) For every $\varepsilon > 0$, if $f$ is $\varepsilon$-far from degree-$d$, then $\cT^n_{k}$ rejects $f$ with probability $\Omega_{d}(\varepsilon)$.
	\end{itemize}
\end{theorem}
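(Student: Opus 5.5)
The plan is to reduce the imbalanced case to the balanced case, \Cref{thm:ldt-slices}, through a direct-product-testing argument in the spirit of \cite{DDGKS17}, with the high-dimensional agreement theorem of \cite{DFH} as the main imported tool.

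The test $\cT^n_k$ works as follows. Pick a uniformly random set $S\subseteq[n]$ of size $2k$ (this is possible since $n\ge 2k$). Let $f|_S:\bool^{S}_{k}\to G$ be the restriction of $f$ to points supported inside $S$, which is a balanced slice of dimension $2k$. Run $\cTslice$ on $f|_S$, and accept iff it accepts. Completeness is immediate, because the restriction of a degree-$d$ polynomial to the sub-slice is again degree-$d$.

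For soundness, let $\delta$ be the rejection probability. By \Cref{thm:ldt-slices}, for each $S$ there is a degree-$d$ polynomial $P_S$ on $\bool^S_k$ with $\mathrm{dist}(f|_S,P_S)\le O_d(\varepsilon_S)$, where $\E_S[\varepsilon_S]\le O_d(\delta)$. We view $S\mapsto P_S$ as a direct-product-type assignment. Each $P_S$ is represented canonically in the Anstee--R\'onyai--Sali basis; its restriction properties make the representation compatible with sub-slices, so $P_S$ is determined by a table of coefficients indexed by basis monomials of size $\le d$ inside $S$.

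The next step is to establish pairwise consistency. Take $S,S'$ with a large intersection, chosen by the natural two-step sampling. Their intersection contains a sub-slice of the form $\bool^{T}_{k'}$ that is large and still not too imbalanced; here $k\ge 4d$ keeps it above degree-$d$ thresholds. On that sub-slice both $P_S$ and $P_{S'}$ are close to $f$, so they are close to each other. Since they are degree-$d$ with $k\ge 4d$, the Schwartz--Zippel-type distance on slices (degree-$d$ polynomials differ on an $\Omega_d(1)$ fraction) forces them to agree exactly there. The closure-under-restriction property of the basis then lets us conclude that the coefficients of shared basis monomials agree.

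With consistency in hand, the agreement theorem of \cite{DFH} produces a global table of coefficients, that is, a global degree-$d$ polynomial $P$ on $\bool^n_k$, with $P_S=P|_S$ for most $S$ (fraction $1-O_d(\delta)$). Averaging over $S$ then gives $\mathrm{dist}(f,P)\le O_d(\delta)$, since a uniform point of $\bool^n_k$ lies in $\bool^S_k$ for uniform $S$.

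The main obstacle I expect is the translation into the direct-product framework. Agreement must be phrased at the level of local functions, namely restrictions of $P$ to $d$-sized or $O(d)$-sized coordinate sets, so that \cite{DFH} applies. This is exactly where the non-uniqueness of monomial representations on slices bites, and the downward-closed, restriction-closed basis has to be used to make the local views well defined and consistent.
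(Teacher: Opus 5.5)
Your test is exactly the paper's test, and your plan (balanced-slice test on a random $2k$-set, then the agreement theorem of \cite{DFH} to glue the $P_S$) is also the paper's plan. The gap is the step where you say that ``closure-under-restriction'' of the Anstee--R\'onyai--Sali basis turns functional agreement of $P_S,P_{S'}$ on $\bool^{S\cap S'}_k$ into agreement of their coefficient tables. That property is false for general pairs.

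The basis of $\bool^S_k$ is defined by the ballot condition relative to the order on $S$. \Cref{lem:restrn-basis} preserves basis membership only when the deleted coordinates avoid the last $2d$ positions. Already for $d=1$ the degree-$\le 1$ basis of $\bool^S_k$ is $\{1\}\cup\{x_i : i\in S\setminus\{\max S\}\}$. Suppose $a=\max S$ lies in $S'$ and $\max S'>a$. Then the canonical forms of the exactly-degree-$1$ function $x_a$ are $k-\sum_{i\in S\setminus\{a\}}x_i$ on $\bool^S_k$ and $x_a$ on $\bool^{S'}_k$. These agree as functions on $\bool^{S\cap S'}_k$, yet their coefficients differ on every shared monomial $x_i$ with $i\in(S\cap S')\setminus\{a\}$.

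So even with $\eta=0$ your tables are inconsistent on the overlap, and the DFH hypothesis cannot be derived from closeness to $f$. Such pairs are not an $O_d(\eta)$-fraction, so they cannot be absorbed into the error. In addition, the index sets of your tables depend on $S$, so they are not local functions on $\binom{S}{\le d}$ over a common alphabet. You identify this as the main obstacle but give no mechanism that resolves it.

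The missing idea is the paper's anchor set. By averaging, fix $D$ with $|D|=2d$ such that $\E_{U\supseteq D}[\delta_d(f^{(U)})]\le\eta$, relabel so that $D=\{n-2d+1,\dots,n\}$, and use only $U=S\cup D$ with $S\in\binom{[n-2d]}{2k-2d}$. At least $2d$ coordinates follow every element of $S$. Hence a monomial of degree $\le d$ lies in the basis of $\bool^U_k$ iff its $D$-part satisfies the ballot condition within $D$. The degree-$\le d$ basis therefore factors as (any $C\subseteq S$ with $|C|\le d$) times ($D$-monomials from a family independent of $S$). This makes $L_S(C):=p_{S\cup D,C}(\vecx|_D)\in G^{\le d}[\vecx|_D]$ a genuine local function on $\binom{S}{\le d}$.

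Since $S_1\setminus S_2\subseteq[n-2d]$, \Cref{lem:restrn-basis} then yields \Cref{clm:poly}: $L_{S_1}$ and $L_{S_2}$ agree on $S_1\cap S_2$ iff $P_{U_1}\equiv P_{U_2}$ on $\bool^{U_1\cap U_2}_k$. With this you can bound $\gamma=O_d(\eta)$ and apply \cite{DFH} on $[n-2d]$ with $\ell=2k-2d$.

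The price is that points sampled through $U\supseteq D$ are no longer uniform, both in the overlap sampling and in the final averaging. The paper repairs this with the $9^d$ loss of \Cref{clm:imbal-prob-bd} and by conditioning on $\vecx\cap D$ at the end. Your closing ``uniform point of $\bool^n_k$ lies in $\bool^S_k$ for uniform $S$'' step needs the same correction once $D$ is introduced.
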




\subsection{Technical contributions}




A linearity test for the slice was given by~\cite{DDGKS17} (for functions from $\bool^n_k$ to $\F_2$), which was later extended by~\cite{KLMZ} to higher degrees (also to the more challenging ``1\% regime'').\footnote{More precisely, \cite{KLMZ} show that if a function passes their test even with slightly non-trivial probability, then it has non-trivial agreement with a low-degree polynomial. The more standard setting, which we consider here, is the ``99\% regime'' where we only prove results for functions that pass a given low-degree test with high probability.} However, these tests do not seem to work when the co-domain is changed to a larger field (e.g.~$\R$) as discussed in the introduction. We instead give a reduction from slices to Boolean grids and use the low-degree tests over grids given by~\cite{BSS,ASS} (which work for any Abelian group as the co-domain) whose analyses follow an iterative argument similar to the  tests of~\cite{BKSSZ,HSS}. 

Changing the structure of both the domain (from $\bool^n$ to slices) and the co-domain (from $\F_2$ to general Abelian groups) of the functions being tested brings in several new challenges that we will sketch here. Some of these challenges can be solved by bringing together ideas from prior works that handle at most one of these two changes. However, for the other challenges, we supply new ideas that could also be of broader interest in studying low-degree tests and Boolean slices, both of which are natural objects on their own and show up often in theoretical computer science. In particular, we believe that the monomial basis for slices of~\cite{ARS} that we use (see~\Cref{prop:monomial-basis}) can have other potential applications for problems concerning slices.

In order to point to our technical contributions and how the aforementioned basis helps us, we will need a quick overview of our proof (see~\Cref{sec:overview} for a more elaborate overview).
At an abstract level, we reduce the low-degree testing problem over slices to over the Boolean cube by choosing a certain random embedding of the $(n/2)$-dimensional cube in the $(n/2)$-th slice. Showing soundness of our test then boils down to showing that if a function is close to low-degree over many such random embeddings, then it is close to low-degree on the slice. Our high level idea is to treat the random embedding as an iterative process where we gradually reduce the ambient dimension of the embedding from $n$ to $n/2$ and argue that in each step, we don't lose too much in distance. With this setup, we now discuss three critical challenges posed in our setting and how the monomial basis of~\cite{ARS} lets us overcome these.

\begin{enumerate}
	\item {\bf Local characterizations.} To even show that our test rejects functions that are not low-degree with positive probability is non-trivial. While product domains admit simple (``local'') characterizations for being low-degree (à la Combinatorial Nullstellensatz), it is apriori not clear how to handle slice domains. However, once we have access to the above basis, it becomes fairly clear how to show that our tests have the required local characterizations for being low-degree (see~\Cref{prop:zero-error}).
	\item {\bf Iterative analysis.} Next, recall from the above high-level intuition of our test that we need to show that if a function is close to low-degree under many ``restrictions'' (e.g.~embeddings that we referred to above), then it is close to low-degree ``globally''. While traditional analyses of this step look at the monomial representation of the corresponding close-by polynomials to come up with the global polynomial by carefully ``gluing'' them, in our case, we need to look at all the polynomials in the above monomial basis to avoid inconsistencies. This requires us to ensure that the monomial basis behaves nicely under the restrictions that the test considers. And indeed we show that this holds true for our particular test (see~\Cref{clm:rijrji} and~\Cref{lem:restrn-basis}). This also raises interesting open questions about more general criteria for when such ``gluing'' of polynomials can always be done.
	\item {\bf Imbalanced slices.} Following~\cite{DDGKS17}, our approach to handle imbalanced slices (i.e., \change{say when $k < n/2$}) is to reduce the problem to the balanced slice by considering a certain random embedding of $\bool^{2k}_k$ in $\bool^n_k$. While~\cite{DDGKS17} implicitly uses the basis $1,x_1,\dots, x_{n-1}$ to handle the linear case, our extension of this argument to higher degree, among other things, involves using properties of the above monomial basis for slice (see~\Cref{clm:poly}).
\end{enumerate}

We note that there have been other natural (and closely related) bases for functions over slices considered in the past~(see e.g.~\cite{Filmus16}), which have more of an analytic flavor. We also remark here that slices (especially when $k\ll n$) are canonical {\em high-dimensional expanders} and indeed this was a major motivation behind the question of linearity testing considered in~\cite{DDGKS17, KaufmanLubotzky}. Thus, it seems natural to ask whether our higher degree version of the test (over general Abelian groups) reveals anything interesting about high-dimensional expanders.

\subsection{Open problems}

We list a few natural questions that arise from our work.

\paragraph{Gluing polynomials over subdomains.}

The key strategy used throughout this paper and in \cite{BKSSZ} can be
described in the following way:
Starting with some function
\(f: D \to G\)
defined on some domain \(D\),
we restrict to some randomly chosen subdomain \(D_{i} \subseteq D\),
and perform a low-degree test there,
either by using an induction hypothesis on the size of the domain,
or using that \(D_{i}\) is a simpler domain like a Boolean grid.

Such a test only works under the condition that if \(f\)
is far from being a degree-\(d\) polynomial,
then many of the restrictions
are also far from being a degree-\(d\) polynomial,
or conversely,
if \(f\) is close to polynomials \(P_{i}\) of degree-\(d\) on many domains \(D_{i}\),
then there exists a global polynomial \(P: D \to G\) that is close to \(f\).

This leads us to ask the following general question:
\begin{question}
	For an Abelian group \(G\), a set \(D \subseteq G^{n}\)
	and a collection of subsets \(D_{i} \subseteq D\),
	when does it hold
	that for any set of degree-\(d\) polynomials
	\(Q_{i} \in G^{\leq d}[x_{1}, \ldots, x_{n}]\)
	with \(Q_{i} \equiv Q_{j}\) on \(D_{i} \cap D_{j}\),
	there exists a polynomial \(Q \in G^{\leq d}[x_{1}, \ldots, x_{n}]\)
	satisfying \(Q \equiv Q_{i}\) on \(D_{i}\).
	That is, when can we ``glue'' degree-\(d\) polynomials defined on the subdomains \(D_{i}\),
	to a global one defined on the domain \(D\)?
\end{question}
As a key step in the proof,~\cite{BKSSZ} (raises and) answers the above question for the case when $D = \F_2^n$ with the subsets $D_i$ being hyperplanes over $\F_2$.
Similarly, in the local testing over grids works of~\cite{ASS,BSS}, the domain $D =\bool^n$ whereas $D_i$ are ``subcubes'' obtained by restricting some variables to constants. Similarly in a related work of Dinur, Filmus and Harsha~\cite{DFH}, the authors again use a certain {\em agreement tester}, to glue polynomials defined over restricted domains into one global polynomial.


In the same spirit, in this paper, we show that \(D=\bs\) with subsets being certain embeddings of the grid \(\bg[n/2]\) also has the desired structure  (see \Cref{prop:zero-error} and~\Cref{lem:global}).
Our key contribution has been finding a nice basis for the large
domain \(D\), such that the restrictions of the basis to the subdomains \(D_{i}\) are
well-behaved.
A better understanding of the general question could lead to further progress on low-degree tests over other domains.

\paragraph{Efficient decoding algorithms over the slice.}
Our results can be stated as showing that the code underlying low-degree evaluations over Boolean slices is locally testable. A recent work of Amireddy, Behera, Srinivasan and Sudan~\cite{ABSS25-SZ-Lemma} gives tight bounds on the distance of this code. Hence it is also natural to ask whether the same code also admits efficient decoding, local decoding, and list-decoding algorithms.

\subsection{Organization}
We state the required preliminaries in~\Cref{sec:prelims}, including the main monomial basis for functions on slices that we use. In~\Cref{sec:overview} we give an overview of our proof approach for testing over the balanced slice $\bool^n_{n/2}$ at a high level before starting out with the formal proofs. Then, in~\Cref{sec:zero-error}, we show that our low-degree test over the balanced slice has positive soundness. In order to strengthen this and show that the test has a constant soundness, we divide the analysis broadly into two cases: the ``small distance case'' (\Cref{sec:small-dist}) and the ``large distance case'' (\Cref{sec:large-dist}). Finally, we put everything from these sections together to finish the analysis of the low-degree test over the balanced slice in~\Cref{sec:put}. At the end in~\Cref{sec:unbal-slices}, we give our low-degree test and its analysis for imbalanced slices.

\section{Preliminaries}\label{sec:prelims}
We start with some basic notations and definitions.
Throughout the paper, we let $\bool^{n}_k$ denote the {\em $k$-th Boolean slice}, i.e., elements of $\bool^n$ of Hamming weight $k$, and let $G$ denote an arbitrary Abelian group. For $D\subseteq \bool^n$, we say that a function $f:D\to G$ is {\em degree-$d$} if it can be represented by a polynomial of degree at most $d$, where in this paper, a polynomial over an Abelian group $G$ refers to some linear combination of multilinear monomials of degree at most $d$ with coefficients coming from $G$. We let $\cP_d(D,G)$, or simply $\cP_d(D)$ when $G$ is clear from context, denote the family of degree-$d$ functions with domain $D$ and co-domain $G$. For $f,g:D\to G$, we denote their relative Hamming distance by $\delta(f,g):=\Pr_{\vecx \in D}[f(\vecx)\ne g(\vecx)]$.
For $f:D\to G$ and a family of functions with domain $D$ (with co-domain $G$), we also have the notation $\delta_{\cF}(f):=\min_{P\in \cF} \delta(f,P)$, which we refer to as the distance of $f$ from $\cF$. We say that $f$ is {\em $\varepsilon$-close} to $\cF$ if $\delta_\cF(f) \le \varepsilon$ and is {\em $\varepsilon$-far} otherwise. When $\cF$ is the family $\cP_d(D)$ and $D$ is clear from context, we denote
$\delta_d(f):=\delta_{\cP_d(D)}(f)$. For a domain $D$, a subset $S\subseteq D$ and function $f:D\to G$, we let $f|_S:S\to G$ denote the function $f$ restricted to the domain $S$.

For $S\subseteq [n]$ and $\vecx\in \bool^n$, we let $\vecx|_S$ denote the restriction of the string $\vecx$ to the coordinates of $S$ and for $\veca\in \bool^{|S|}$, we let $\veca^S$ denote the assignment to the coordinates indexed by $S$ by identifying $S$ with $[|S|]$ in increasing order of the coordinates. Similarly, we let $\bool^{S}_k$ denote the strings in $\bool^S$ of Hamming weight $k$. For disjoint subsets $S,T\subseteq [n]$, and strings $\vecx\in \bool^{S}$ and $\vecy\in \bool^T$, we let $\vecx \circ \vecy \in \bool^{S\cup T}$ denote their concatenation.

\subsection{Low-degree testing and slices}\label{subsec:ldt}

We will use the low-degree test of~\cite{BSS,ASS} as a crucial subroutine.

\begin{theorem}[{\bf Low-degree test over grids}, \cite{BSS,ASS}]\label{thm:grids}
	For every Abelian group $G$, degree parameter $d\ge 0$ and $n\ge 1$, there exists a $\exp(O(d))$ query test $\cTgrid$ such that for all functions $f:\bool^n \to G$, we have
	\begin{itemize}
		\item (Completeness) If $f$ is degree-$d$, then $\cTgrid$ accepts $f$ with probability $1$.
		\item (Soundness) For every $\varepsilon>0$, if $f$ is $\varepsilon$-far from degree-$d$, then $\cTgrid$ rejects $f$ with probability $\Omega(\varepsilon)$.
	\end{itemize}
\end{theorem}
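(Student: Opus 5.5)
This result is quoted from~\cite{BSS,ASS}, so in the paper it is used as a black box. If I had to reprove it, I would follow the iterative strategy of~\cite{BKSSZ} as adapted to grids in those works.

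\emph{The test.} Fix $k=\Theta(d)$. Choose a random map $\pi:[n]\to[k]$ and random bits $b\in\bool^n$. This defines an embedding of $\bool^k$ into $\bool^n$ via $y\mapsto x$ with $x_i=y_{\pi(i)}\oplus b_i$. Query $f$ on all $2^k=\exp(O(d))$ points of the image, and accept iff the restricted function $g:\bool^k\to G$ has degree at most $d$. Equivalently, accept iff $g$ is orthogonal to every monomial of degree greater than $d$, which can be checked by M\"obius inversion on the cube.

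\emph{Completeness.} Substituting $x_i=y_{\pi(i)}$ or $1-y_{\pi(i)}$ into a multilinear polynomial and multilinearizing (using $y^2=y$) never increases degree. So every degree-$d$ function is accepted with probability $1$.

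\emph{Soundness, small-distance case.} Suppose $\delta_d(f)=\varepsilon$ is below a small threshold $\varepsilon_0(d)$. Let $P$ be the nearest degree-$d$ polynomial and $E=\{x: f(x)\neq P(x)\}$. The test rejects whenever the random subcube meets $E$ in exactly one point. This holds because a nonzero function supported on a single point of $\bool^k$ has degree $k>d$, and adding $P$ does not change degree above $d$. The probability that the subcube hits $E$ is about $2^k\varepsilon$. To show that two hits are rarer, I would bound the pairwise correlation of the random embedding, which is pairwise nearly independent, and conclude a rejection probability of $\Omega(\varepsilon)$.

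\emph{Soundness, large-distance case.} Here I would argue by induction on $n$, in the spirit of~\cite{BKSSZ}, viewing the random embedding as a sequence of merges of two coordinates ($x_i=x_j$ or $x_i=1-x_j$). Assume the rejection probability is small. Then for most merges the restricted function on $\bool^{n-1}$ is close to some degree-$d$ polynomial $P_{ij}$. The key step is to show that these local polynomials agree on overlaps, so they glue into a global degree-$d$ polynomial close to $f$; this contradicts $f$ being far, and is the gluing question raised earlier in the paper.

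\emph{Main obstacle.} I expect the main difficulty to be keeping the loss in distance bounded across the $n-k$ merging steps. Only a constant number of steps may lose a constant factor. For this I would use hypercontractivity (small-set expansion) of the noise-like operator induced by a random merge. It should show that a set of density $\varepsilon$ cannot concentrate on a constant fraction of restrictions unless $\varepsilon$ already exceeds an absolute constant. This is where~\cite{BSS,ASS} rely on the product structure of $\bool^n$.
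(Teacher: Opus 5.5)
The paper does not prove this theorem; it imports it from \cite{BSS,ASS}, as you note. Your overall template (random embedding of $\bool^k$, a small-/large-distance split, \cite{BKSSZ}-style merging) is the one those works use and the one the paper mirrors for the slice. There is, however, a genuine gap: your small-distance step fails as written, because the random embedding is \emph{not} pairwise nearly independent.

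If $y,y'\in\bool^k$ differ exactly on $J\subseteq[k]$, then $x(y)\oplus x(y')$ is the indicator of $\pi^{-1}(J)$. This set contains each $i\in[n]$ independently with probability $|J|/k$. So $x(y')$ is a $(|J|/k)$-noisy copy of the uniform point $x(y)$, and for the $k2^{k-1}$ adjacent pairs the correlation is $1-2/k$. For instance, if $E$ is a subcube of density $\varepsilon$, then $\Pr[x(y),x(y')\in E]=\varepsilon(1-1/k)^{\log_2(1/\varepsilon)}\approx\varepsilon^{1+\Theta(1/k)}$, not $\approx\varepsilon^2$. Summed over adjacent pairs, this swamps $2^k\varepsilon$ unless $\varepsilon\le k^{-\Theta(k)}$, so your inclusion--exclusion bound on ``exactly one hit'' becomes vacuous. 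Controlling these correlated pairs is exactly where \cite{BSS,ASS} bring in hypercontractivity, i.e.\ small-set expansion of the induced noise operator. Even then, it only shows that two hits are rare below a $d$-dependent threshold. In the paper, the analogous role is played by the second-moment bound of \Cref{lemma:main-informal} inside \Cref{lem:close-distance}.

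Conversely, the large-distance step does not need hypercontractivity, and your ``main obstacle'' paragraph misidentifies the mechanism. The loss across the $n-k$ merges is controlled algebraically. Choose $t\ge d+2$ so that $2^{-(t-1)}+\varepsilon_0$ lies below the small-distance threshold, and suppose at least $2t^2$ of the $\binom{n}{2}$ merges make $f$ $\varepsilon_0$-close.
\begin{itemize}
\item A sunflower argument gives a star or a matching of $t$ good merges.
\item Their local polynomials are pairwise consistent on overlaps, by the distance lemma.
\item They then glue (\cite[Claim 3.8]{ASS}, reproduced as \Cref{clm:final-agreement}) into a global degree-$d$ polynomial that is $(2^{-(t-1)}+\varepsilon_0)$-close to $f$.
\end{itemize}
Hence a far function becomes close under a random merge with probability only $O(t^2/n^2)$. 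This is summable down to $\poly(d)$ variables. Once the distance falls into $[\varepsilon_0,\varepsilon_1]$, you switch to the (hypercontractive) small-distance bound. The last $\poly(d)$ merges are handled by an existence-plus-counting argument. This is the structure of the paper's own slice analysis (\Cref{lem:restrn-large-dist}, \Cref{lem:final-few}, \Cref{sec:put}), which the authors describe as following \cite{BKSSZ,BSS,ASS}. Note also that gluing does not follow from ``agreement on overlaps'' for arbitrary good merges; it uses the star/matching structure and $t\ge d+2$. As it stands, your sketch places both key tools in the wrong step.
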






We will start with a formal description of the final test we will use for the proof of our main theorem~\Cref{thm:ldt-slices}. For this, we first set up some notation.

For a function $f:\bool^n_{n/2} \to G$ over the slice and coordinates $i<j\in [n]$, we let $f^{(i,j)}$ denote the function over $(n-1)$ coordinates corresponding to setting the variable $x_j = 1-x_i$. We state the more formal and general definition of the restriction $f^{(i,j)}$ below. Before that, we note that we refer to the pair $(i,j)$ as an {\em edge} or a {\em restriction}, and we always follow the convention that the left coordinate of an edge is less than the right coordinate.

\begin{definition}[{\bf Edge restriction}]\label{defn:restriction}
	For a function $f:\bool^{n_1} \times \bool^{2n_0}_{n_0} \to G$ and $i<j\in [2n_0]$, let $f^{(i,j)}:\bool^{n_1+1}\times \bool^{[2n_0]\setminus \{i,j\}}_{n_0-1} \to G$ denote the restricted function (corresponding to the {\em edge restriction} $(i,j)$) defined for $\vecx\in \bool^{n_1+1}$ and $\vecy\in \bool^{[2n_0]\setminus \{i,j\}}_{n_0}$ as
	$$f^{(i,j)}(\vecx,\vecy) = f(\vecx^{[n_1]}, \vecy \circ x_{n_1+1}^{\{i\}}\circ (1-x_{n_1+1})^{\{j\}}).$$
\end{definition}

%

That is, we apply $f$ by taking the first part of the input to be the first $n_1$ indices of $\vecx$ and the second part of the input to be the string obtained by ``completing'' $\vecy$ by setting the $i$-th coordinate to $x_{n_1+1}$ and the $j$-th coordinate to be $1-x_{n_1+1}$. It is easy to verify that the final string lies in the domain of $f$ and hence $f^{(i,j)}$ is well-defined. We note that a single edge restriction reduces the number of coordinates by $1$.
Since substituting some variable $x_{i_1}$ of a polynomial to $x_{i_2}$ or $1-x_{i_2}$ cannot increase the degree of the polynomial, we have the following observation:

\begin{observation}\label{obs:rest}
	For $d,n_1\ge 0$ and $n_0\ge 1$, if
	$f:\bool^{n_1}\times \bool^{2n_0}_{n_0} \to G$
	is degree-$d$, then for all $i<j\in [2n_0]$, $f^{(i,j)}$ is degree-$d$.
\end{observation}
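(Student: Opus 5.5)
The plan is to take a representing polynomial for $f$ and check directly that composing it with the substitution in \Cref{defn:restriction} gives a polynomial of degree at most $d$ that represents $f^{(i,j)}$. Since $f$ is degree-$d$, fix a polynomial $P(\vecz,\vecw)$ in variables $z_1,\dots,z_{n_1}$ and $w_1,\dots,w_{2n_0}$. Take $P$ to be a $G$-linear combination of multilinear monomials of degree at most $d$, with $P(\vecz,\vecw)=f(\vecz,\vecw)$ for every $(\vecz,\vecw)\in\bool^{n_1}\times\bool^{2n_0}_{n_0}$.

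Next, define $Q$ on the new variables $x_1,\dots,x_{n_1+1}$ and $(y_k)_{k\in[2n_0]\setminus\{i,j\}}$ by substitution: $z_\ell\mapsto x_\ell$ for $\ell\le n_1$, $w_k\mapsto y_k$ for $k\ne i,j$, $w_i\mapsto x_{n_1+1}$ and $w_j\mapsto 1-x_{n_1+1}$. Fix an input $(\vecx,\vecy)$ in the domain of $f^{(i,j)}$. The completed string $\vecy\circ x_{n_1+1}^{\{i\}}\circ(1-x_{n_1+1})^{\{j\}}$ has weight exactly $n_0$, so it lies in $\bool^{2n_0}_{n_0}$. Hence $Q(\vecx,\vecy)=P(\vecx^{[n_1]},\cdot)=f(\ldots)=f^{(i,j)}(\vecx,\vecy)$, i.e. $Q$ agrees with $f^{(i,j)}$ on its whole domain.

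The only point needing care is that $Q$ is again a legitimate degree-$\le d$ polynomial in the paper's sense, namely a $G$-combination of multilinear monomials. Consider a monomial $g\cdot m$ of $P$ with $g\in G$ and $\deg m\le d$. If $m$ involves neither $w_i$ nor $w_j$, it maps to a multilinear monomial of the same degree. If $m$ involves exactly one of them, it maps to $m'x_{n_1+1}$ or $m'(1-x_{n_1+1})=m'-m'x_{n_1+1}$, where $m'$ is the rest of $m$. The coefficient $-g$ makes sense in $G$, and the degree does not increase. If $m$ contains both $w_iw_j$, it maps to $m'x_{n_1+1}(1-x_{n_1+1})$. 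This vanishes identically on Boolean inputs, so that term can simply be dropped, or multilinearized using $x^2=x$, which again gives degree at most $\deg m$.

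Collecting terms gives a multilinear $G$-polynomial of degree at most $d$ that agrees with $f^{(i,j)}$ everywhere, so $f^{(i,j)}$ is degree-$d$. There is no real obstacle. The one subtlety is the $w_iw_j$ case: it uses the fact that we only evaluate at Boolean points, where multilinearization never increases degree, together with the fact that integer multiples such as $-g$ are well-defined in any Abelian group.
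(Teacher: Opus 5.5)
Your proposal is correct and takes the same route as the paper. The paper justifies the observation in one line: substituting $w_j\mapsto 1-x_{n_1+1}$ (and $w_i\mapsto x_{n_1+1}$) into a representing polynomial cannot increase its degree. Your write-up simply makes explicit two details the paper leaves implicit, namely that the completed string has weight $n_0$ and that terms containing $w_iw_j$ vanish after multilinearization.
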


\begin{remark}
	Alternatively, one can define $f^{(i,j)}:S^{(i,j)} \to G$ as the function $f$ restricted to the subset $S^{(i,j)} := \{(\vecx,\vecy)\in \bool^{n_1}\times \bool^{2n_0}_{n_0}:y_i \ne y_j\}$. By the natural affine bijection between the two domains $\varphi:S^{(i,j)} \to \bool^{n_1+1} \times \bool^{2n_0\setminus \{i,j\}}_{n_0-1}$ defined by $\varphi(\vecx,\vecy) = (\vecx\circ y_i, \vecy|_{[2n_0]\setminus \{i,j\}})$, the degree of $f^{(i,j)}$ in both the definitions remains the same, and indeed we switch between the two definitions throughout the paper.
\end{remark}
Now, we define notation to handle successive edge restrictions by considering matchings.

\begin{definition}[{\bf Matching restriction}]\label{defn:match-restn}
	For $f:\bool^{n_1} \times \bool^{2n_0}_{n_0}\to G$ and a perfect matching $M$ over vertex set $[2n_0]$, we define the restricted function $f^{(M)}:\bool^{n_1} \times \bool^{\{i_1,i_2,\dots,i_{n_0}\}} \to G$ as follows:
	Let $(i_1,j_1),(i_2,j_2),\dots, (i_{n_0},j_{n_0})$ be the edges of $M$. Then, we define
	$$f^{(M)}(\vecx,\vecy):=f^{(i_1,j_1)(i_2,j_2)\dots(i_{n_0},j_{n_0})}(\vecx,\vecy).$$
\end{definition}

That is, we apply the restrictions $(i_1,j_1),\dots, (i_{n_0},j_{n_0})$ in this order.\footnote{Note, however, that the order does not matter. Applying the same restrictions in another order would also yield the same function $f^{(M)}$ up to a relabeling of coordinates.} In some contexts, for the sake of simplicity, we treat the second domain of $f^{(M)}$ as being $\bool^{n_0}$ instead of $\bool^{\{i_1,i_2,\dots,i_{n_0}\}}$. Even though our final tests only involve perfect matchings, the analysis goes through partial matchings. Therefore, we define for a partial matching $T$ over $[2n_{0}]$ with $n_2$ edges given by $(i_1,j_1),(i_2,j_2),$ $\dots,(i_{n_2},j_{n_2})$ in increasing order of their first coordinates, the restricted function $f^{(T)}:\bool^{[n_1]\cup \{i_1,\dots,i_{n_2}\}}$ $\times \bool^{[2n_0]\setminus \{i_1,j_1,\dots, i_{n_2},j_{n_2}\}}_{n_0-n_2}\to G$ similarly as follows:

$$f^{(T)}(\vecx,\vecy) := f^{(i_1,j_1)(i_2,j_2)\dots(i_{n_2},j_{n_2})}(\vecx,\vecy).$$

For the sake of simplicity in notation, we may sometimes relabel variables so that $f^{(T)}$ has domain $\bool^{n_1+n_2} \times \bool^{2n_0-2n_2}_{n_0-n_2}$.

\subsection{Bases for functions on slices}

As mentioned in~\Cref{sec:intro}, our low-degree tests use properties of certain monomial bases for functions defined on slices that we will introduce in this subsection.

We first note that functions over $\bool^n \to G$ have a basis given by the monomials $\cB(\vecx) = \{\prod_{i\in S} x_i:{S\subseteq [n]}\}$; importantly this monomial basis is {\em downward-closed} and {\em graded} (defined later in~\Cref{defn:props}). In order to work with polynomials over slices, we are going to need a similarly well-behaved monomial basis. While taking the same set of monomials $\cB(\vecx) = \{\prod_{i\in S} x_i : {S\subseteq [n]}\}$ is not even a basis (e.g.~the non-zero polynomial $\sum_{i=1}^n x_i -n/2$ evaluates to zero  identically on the balanced slice),~\Cref{prop:monomial-basis} establishes such a monomial basis along with the additional properties we need for our applications. In order to set it up, we start with the definition of a basis; we note that the formal definition below is important as we are dealing with arbitrary Abelian groups that may not be fields.

\begin{definition}[{\bf Basis and monomial basis}]\label{defn:basis}
	For a domain $D\subseteq \bool^n$ and $d\ge 0$, we say that a set of functions $\cB$ with domain $D$ and co-domain $\Z$ forms a {\em basis} for functions with domain $D$ and co-domain $G$ if it holds for every $f:D\to G$ that there exist unique elements $(\alpha_B)_{B\in \cB} \in G$ such that
	$$f\equiv \sum_{B\in \cB} \alpha_B \cdot B.$$

	We say that $\cB$ is a {\em monomial basis} (and denote it by $\cB = \cB(\vecx)$) if furthermore each function $B\in \cB$ can be expressed as a monomial of the variables $\vecx$.
\end{definition}

The following properties of monomial bases will be helpful for the analysis of our tests.

\begin{definition}[{\bf Downward-closed and graded bases}]\label{defn:props}
	We say that a monomial basis $\cB(\vecx)$ for a domain $D\subseteq \bool^n$
	is {\em downward-closed} if for all $T\subseteq  S\subseteq [n]$,
	we have that $\prod_{i\in S} x_i\in \cB(\vecx)$ implies
	$\prod_{i\in T} x_i \in \cB(\vecx)$.

	We say that a basis $\cB=\cB(\vecx)$ (that is not necessarily a monomial basis)
	for a domain $D\subseteq \bool^n$ is {\em graded} if it holds for all $d\ge 0$
	that $\cB^{\le d}:= \cB \cap \cP_d(D,\Z)$ forms a basis for $\cP_d(D)$.
	That is, the degree-$d$ functions in $\cB(\vecx)$
	form a basis for degree-$d$ functions over $D$, for each $d\geq 0.$
\end{definition}

\begin{fact}
	\label{fac:basis-bool-cube}
	For $D=\bool^n$, a downward-closed and graded monomial basis is given by $\cB(\vecx) = \{\prod_{i\in S} x_i:S\subseteq [n]\}$.
\end{fact}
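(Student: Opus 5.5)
We need to show that the multilinear monomials $\prod_{i\in S}x_i$, $S\subseteq[n]$, form a downward-closed, graded monomial basis for functions $\bool^n\to G$, where "basis" is meant in the strong sense of \Cref{defn:basis}. That is, every $f:\bool^n\to G$ must have a \emph{unique} representation $f\equiv\sum_S \alpha_S\prod_{i\in S}x_i$ with $\alpha_S\in G$.

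The plan is to build the coefficients explicitly by Möbius inversion over the Boolean lattice. This uses only integer arithmetic, so it works over any Abelian group. Identify $\vecx\in\bool^n$ with its support $T=\{i:x_i=1\}$, and note that $\prod_{i\in S}x_i$ evaluated at $T$ equals $1$ if $S\subseteq T$ and $0$ otherwise.

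\textbf{Existence.} For each $S$ set
\[
\alpha_S:=\sum_{U\subseteq S}(-1)^{|S|-|U|}f(\mathbb{1}_U)\in G .
\]
Evaluating the polynomial at $T$ gives $\sum_{S\subseteq T}\alpha_S=\sum_{U\subseteq T}f(\mathbb{1}_U)\sum_{U\subseteq S\subseteq T}(-1)^{|S|-|U|}$. The inner integer sum equals $1$ if $U=T$ and $0$ otherwise, so the evaluation is $f(\mathbb{1}_T)$.

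\textbf{Uniqueness.} By linearity it suffices to show that if $\sum_S\alpha_S\prod_{i\in S}x_i\equiv 0$ then every $\alpha_S=0$. Induct on $|T|$. Evaluating at $T$ gives $\sum_{S\subseteq T}\alpha_S=0$. By the induction hypothesis all $\alpha_S$ with $S\subsetneq T$ vanish, so $\alpha_T=0$. This step uses no division, which is the only point where working over a general Abelian group needs any care.

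\textbf{Downward-closed.} This is immediate, since the family contains every subset of $[n]$.

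\textbf{Graded.} Let $f$ be degree-$d$. By the definition in \Cref{sec:prelims}, $f$ is then a $G$-combination of multilinear monomials of degree at most $d$. By uniqueness, this combination is the representation of $f$, so every $\alpha_S$ with $|S|>d$ is zero. Conversely, the elements of $\cB^{\le d}$ are exactly the monomials with $|S|\le d$: any monomial with $|S|>d$ is not degree-$d$ over $\Z$, again by uniqueness of representation with $G=\Z$. Hence $\cB^{\le d}$ spans $\cP_d(\bool^n)$, and it does so uniquely because it is a subset of the basis.

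There is no real obstacle here. The only step needing care is the uniqueness argument, which must avoid any division so that it holds over an arbitrary Abelian group $G$.
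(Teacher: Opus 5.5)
This statement is a standard fact, and the paper gives no proof of it, so there is no argument of the paper's to compare against. Your proof is correct and complete.

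It handles the points that actually need care for a general Abelian group $G$:
\begin{itemize}
\item Existence via M\"obius inversion uses only integer multiples of the values of $f$.
\item Uniqueness follows from a division-free triangular induction on $|T|$.
\item For gradedness, you note that $\cB^{\le d}$ is defined through $\cP_d(D,\Z)$. Showing that the monomials with $|S|>d$ are not in $\cB^{\le d}$ therefore requires uniqueness of representation over $\Z$, and you supply it.
\end{itemize}
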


Now we present an explicit monomial basis for the slice that we will make use of in the analysis of our low-degree tests. For this, we will need the definition of {\em ballot sequences} (or subsets with {\em ballot property}) which might remind the reader of the Catalan numbers.

\begin{definition}[{\bf Ballot property}]\label{defn:monomial-basis}
	We say that a subset $S\subseteq [n]$ (and the corresponding monomial $\prod_{i\in S} x_i$) has the {\em ballot property} if for all suffixes $T\subseteq [n]$ of $S$ (where we treat $S$ and $T$ as bit-strings corresponding to the indicator vectors of the subsets), it holds that the number of ones in $T$ is at most the number of zeros in $T$. More formally, $S$ has the ballot property if and only if
	for all $i\in [n]$, it holds that $$|S\cap \{i,\dots, n\}|\le |([n]\setminus S)\cap \{i,\dots, n\}|.$$
\end{definition}

The following lemma states that ballot sequences form a basis for functions on the slice.

\begin{proposition}[{\bf Monomial basis for slice},~{\cite[Corollary 4.4]{ARS}}]\label{prop:monomial-basis}
	For integers $n\ge 1$ and $0\le k\le n$ and Abelian group $G$,
	the monomials given by
	$$\cB_k^*(\vecx) = \bigg\{\prod_{i \in S} x_i: S \subseteq [n]\text{~is of size at most~}\min\{k,n-k\}\text{~and~} S\text{~has the {\em ballot property}}\bigg\},$$
	form a downward-closed and graded monomial basis for the set of functions from $\bool^n_k$ to $G$.
\end{proposition}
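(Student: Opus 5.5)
The plan is to prove everything over $\Z$ first and then transfer to an arbitrary Abelian group $G$, which is just a $\Z$-module.

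\textbf{Downward-closedness.} This is immediate. If $S$ has the ballot property and $T\subseteq S$, then for every suffix $\{i,\dots,n\}$ the count $|T\cap\{i,\dots,n\}|$ is at most the count for $S$, and the count of non-elements of $T$ is at least that for $S$. Also $|T|\le |S|\le \min\{k,n-k\}$. So $T$ is again a ballot set of allowed size.

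\textbf{Step 1: a unitriangular evaluation matrix.} Let $m=\min\{k,n-k\}$. A reflection argument shows that the number of ballot sets of size exactly $j\le m$ is $\binom{n}{j}-\binom{n}{j-1}$. Summing over $j\le m$ telescopes to $\binom{n}{m}=|\bool^n_k|$, so the evaluation matrix $E$ (rows indexed by points of the slice, columns by monomials in $\cB^*_k$) is square.

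I would show $E$ is unitriangular after suitable orderings. The construction is as follows.
\begin{itemize}
\item To each ballot set $S$ associate a point $p(S)\in\bool^n_k$ with $p(S)\supseteq S$. Read $S$ as a bracket word from the right: an element of $S$ is a closing bracket matched to a non-element to its right. Let $p(S)$ contain $S$ together with a canonical choice (say, lexicographically extremal) of $k-|S|$ further coordinates among the unmatched non-elements.
\item Check that $S\mapsto p(S)$ is a bijection onto the slice.
\item Check that $x_{S'}(p(S))=1$ with $S'\neq S$ forces $S'$ to be strictly smaller than $S$ in a fixed order (e.g.\ size, then a lexicographic order).
\end{itemize}
Then $E$ is triangular with ones on the diagonal, hence invertible over $\Z$. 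Consequently every $f:\bool^n_k\to G$ has a unique expansion $f=\sum_B\alpha_B B$ with $\alpha_B\in G$: apply $E^{-1}$, which has integer entries, to the vector of values. This gives the basis property over every $G$.

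The hardest step is choosing $p(S)$ so that triangularity really holds. If the bracket-matching choice fails, I would fall back on an induction on $n$. Split according to $x_n$: the slice $\bool^n_k$ is the union of the copy of $\bool^{n-1}_{k-1}$ with $x_n=1$ and the copy of $\bool^{n-1}_{k}$ with $x_n=0$. The ballot sets containing $n$ and those not containing $n$ should correspond to ballot bases of these two smaller slices, with a shift by one in the suffix condition. This would yield a block-triangular form whose diagonal blocks are unimodular by induction.

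\textbf{Step 2: gradedness.} Fix $d\le m$. The plan has three parts.
\begin{enumerate}
\item Over $\Q$, the space $\cP_d(\bool^n_k,\Q)$ has dimension $\binom{n}{d}$. This is the standard dimension count for the slice, obtainable from the Johnson scheme decomposition or by a straightening argument using $x_S\bigl(\sum_{j\notin S}x_j-(k-|S|)\bigr)\equiv 0$ on the slice.
\item The degree-$\le d$ ballot monomials are $\sum_{j\le d}\bigl(\binom nj-\binom n{j-1}\bigr)=\binom nd$ in number. They are linearly independent by Step 1, so they span $\cP_d(\bool^n_k,\Q)$.
\item Take any monomial $x_T$ with $|T|\le d$. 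It has a unique integer expansion in the full basis $\cB^*_k$, and it also has a rational expansion in the degree-$\le d$ ballot monomials. By uniqueness of the rational expansion in the full basis, these coincide. Hence $x_T$ is an integer combination of ballot monomials of degree $\le d$.
\end{enumerate}
Now let $P=\sum_T \beta_T x_T$ with $\beta_T\in G$ and $|T|\le d$ be any degree-$d$ $G$-valued polynomial. Substituting these integer identities rewrites $P$ as a $G$-combination of $\cB^{*\le d}_k$. Uniqueness of this representation is inherited from the full basis, so $\cB^{*\le d}_k$ is a basis of $\cP_d(\bool^n_k,G)$. For $d>m$ the statement reduces to the case $d=m$, since $\cB^{*\le m}_k=\cB^*_k$ already spans all functions.
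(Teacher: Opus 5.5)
Your Step 1 has a genuine gap, and everything else rests on it. For $2\le k\le n-1$, no choice of $p(S)$ and no ordering can make the evaluation matrix $E$ unitriangular.

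The column of the constant monomial $x_\emptyset$ is all ones. In any triangular arrangement with nonzero diagonal, it must therefore be the first column (lower-triangular) or the last (upper-triangular). Its diagonal partner row $p(\emptyset)$ must then vanish on every other basis monomial, so $p(\emptyset)$ may contain no nonempty ballot set. But $|p(\emptyset)|=k\ge 2$, so $p(\emptyset)$ contains some $i\le n-1$, and $\{i\}$ is a ballot set of size $1\le\min\{k,n-k\}$.

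Concretely, take $n=4$, $k=2$. The leftmost completion is not even injective: $p(\emptyset)=p(\{2\})=\{1,2\}$. The rightmost completion is a bijection, but $\{3\}\subseteq p(\emptyset)=\{3,4\}$ while $\emptyset\subseteq p(\{3\})=\{2,3\}$, which is a cycle. Here $\det E=\pm1$, yet $E$ is not permutation-equivalent to a triangular matrix.

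The fallback does not rescue this as written. The suffix $\{n\}$ rules out $n\in S$, so no basis monomial involves $x_n$, and splitting on $x_n$ yields no block-triangular structure.

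A repair in the spirit of your fallback is to induct on $n$, splitting on $x_1$ instead of $x_n$.
\begin{itemize}
\item \textbf{Reduce to $k\le n/2$.} The ballot family is the same for $k$ and $n-k$. Pulling back along $\mathbf{x}\mapsto\mathbf{1}-\mathbf{x}$ turns the family for $\bool^n_{n-k}$ into $\{\prod_{i\in S}(1-x_i)\}$ on $\bool^n_k$. By downward-closedness, this set is related to $\{x_S\}$ by an integer unitriangular matrix.
\item \textbf{Split on $x_1$ when $k\le n/2$.} The ballot sets avoiding $1$ are exactly the family for $\bool^{\{2,\dots,n\}}_k$. The ballot sets containing $1$ are $\{1\}\cup T$, with $T$ ranging over the family for $\bool^{\{2,\dots,n\}}_{k-1}$. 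The monomials containing $x_1$ vanish when $x_1=0$.
\item \textbf{Conclude.} $E$ is therefore block lower-triangular, with the two smaller evaluation matrices on the diagonal, and these are unimodular by induction.
\end{itemize}
Alternatively, you could prove integer spanning via a straightening relation and combine it with your reflection count.

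Once Step 1 is repaired, your Step 2 argument for gradedness is correct: it uses $\dim_{\mathbb{Q}}\cP_d=\binom nd$, uniqueness of rational expansions to force the integer expansions into degree $\le d$, and then transfer to $G$. For comparison, the paper does not prove this proposition; it simply cites the Gr\"obner-basis (standard monomial) result of Anstee, R\'onyai and Sali.
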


\paragraph{Example.} Consider the slice $\{0,1\}^5_2$. The above lemma asserts that the set of monomials
$$\cB^*_2(\vecx) = \{1,x_1,x_2,x_3,x_4,x_1x_2,x_1x_3,x_1x_4,x_2x_3,x_2x_4\}$$
forms a downward-closed, graded monomial basis for $\{0,1\}^5_2$. In particular,
$$\cB_2^{* \le 1}(\vecx) = \{1,x_1,x_2,x_3,x_4\}$$
forms a basis for degree-$1$ functions over $\bool^5_2$.

\begin{remark} A more intuitive description of the basis given by~\Cref{prop:monomial-basis} is as follows: Consider the bijection between the parenthesized expressions $\{(,)\}^n$ and subsets $S\subseteq [n]$ by mapping an index $i\in [n]$ to $``("$ if $i\in S$ and to $``)"$ otherwise. Then we note that subsets with ballot property correspond to parenthesized expressions that can be ``completed'' to a well-parenthesized expression by concatenating some string on its left. In particular, note that none of the monomials in the above basis contain the variable $x_n$ (this is not surprising since we can always replace $x_n$ with $k-\sum_{i=1}^{n-1}x_i$ over $\{0,1\}^n_k$).\end{remark}

We will use the following easy lemma (proved in~\Cref{prf:cart}) to construct graded monomial bases for mixed domains such as $\{0,1\}^{n_1}\times \{0,1\}^{2n_0}_{n_0}.$

\begin{lemma}[{\bf Basis of a Cartesian product}]\label{lem:cart}
	Let $D_1 \subseteq \bool^{n_1}$ and $D_2 \subseteq \bool^{n_2}$ and let
	$\cB_1=\cB_1(\vecx)$ and $\cB_2=\cB_2(\vecy)$ be bases for functions
	over $D_1$ and $D_2$ respectively. Then
	$$\cB(\vecx,\vecy) = \{b_1(\vecx)\cdot b_2(\vecy):b_1\in \cB_1 \text{~and~}b_2\in \cB_2\}$$
	is a basis for functions over $D_1\times D_2$. Moreover, if $\cB_1$ and
	$\cB_2$ are both monomial (resp.\ downward-closed, resp.\ graded) bases, then
	so is $\cB(\vecx,\vecy)$.
\end{lemma}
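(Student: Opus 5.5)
We need to prove \Cref{lem:cart}: the products $b_1(\vecx)b_2(\vecy)$ form a basis for functions on $D_1\times D_2$, and they inherit the monomial, downward-closed and graded properties. The plan is to show existence and uniqueness of the expansion directly, by applying the basis property of $\cB_1$ and then of $\cB_2$ pointwise.

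\textbf{Existence.} Fix $f:D_1\times D_2\to G$. For each $\vecy\in D_2$, the slice $\vecx\mapsto f(\vecx,\vecy)$ is a function on $D_1$, so it has a unique expansion $\sum_{b_1\in\cB_1}\beta_{b_1}(\vecy)\, b_1(\vecx)$ with $\beta_{b_1}(\vecy)\in G$. Each coefficient $\beta_{b_1}:D_2\to G$ is then itself a function on $D_2$, so it expands uniquely as $\sum_{b_2\in\cB_2}\alpha_{b_1,b_2}\, b_2(\vecy)$. Substituting the second expansion into the first gives $f=\sum\alpha_{b_1,b_2}\, b_1 b_2$. (The number of basis functions equals $|D_1|$, resp.\ $|D_2|$, so these sums are finite.)

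\textbf{Uniqueness.} Suppose $\sum\alpha_{b_1,b_2}\, b_1b_2\equiv 0$. Fixing $\vecy$, we get $\sum_{b_1}\big(\sum_{b_2}\alpha_{b_1,b_2}b_2(\vecy)\big)b_1\equiv 0$ on $D_1$, so by uniqueness for $\cB_1$ each inner sum vanishes for every $\vecy$. By uniqueness for $\cB_2$, every $\alpha_{b_1,b_2}=0$. Uniqueness for a general $f$ then follows by subtracting two representations, since the representations are linear in the coefficients.

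\textbf{Inherited properties.} If $\cB_1,\cB_2$ are monomial bases, then each product of a monomial in $\vecx$ with a monomial in disjoint variables $\vecy$ is a monomial. For downward closure, a submonomial of $\prod_{S}x_i\prod_T y_j$ has the form $\prod_{S'}x_i\prod_{T'}y_j$ with $S'\subseteq S$ and $T'\subseteq T$, and each factor lies in its respective basis.

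\textbf{Gradedness.} This is the only step needing care. We must show that $\cB^{\le d}$ spans exactly $\cP_d(D_1\times D_2)$. One direction is immediate in the monomial case: a product of monomials of degrees $d_1$ and $d_2$ with $d_1+d_2\le d$ has degree at most $d$. For the converse, write a degree-$d$ $f$ as a sum of monomials $m_1(\vecx)m_2(\vecy)$ with $\deg m_1+\deg m_2\le d$. By gradedness, $m_1|_{D_1}$ is a combination of elements of $\cB_1^{\le\deg m_1}$, and similarly $m_2|_{D_2}$ is a combination of elements of $\cB_2^{\le\deg m_2}$. Expanding the product writes $m_1m_2$ using products $b_1b_2$ with $\deg b_1+\deg b_2\le d$, and these lie in $\cB^{\le d}$. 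Here we use that a monomial basis element has a well-defined syntactic degree. If the bases are graded but not monomial, the same argument works once one defines the degree of a basis function as its minimal representing degree. Uniqueness of the coefficients within $\cB^{\le d}$ is inherited from uniqueness in $\cB$, so $\cB^{\le d}$ is indeed a basis for $\cP_d(D_1\times D_2)$.

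The main subtlety is exactly this gradedness step: we need that products with total degree at most $d$ are precisely the members of $\cB^{\le d}$. In the monomial case this is clear, and that is the case used throughout the paper.
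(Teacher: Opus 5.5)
Your proof is correct and follows the paper's argument in substance. Uniqueness is shown by fixing $\vecy$ and applying uniqueness in $\cB_1$ and then in $\cB_2$, and gradedness by multiplying graded expansions of the two monomial factors. Your fiberwise existence step (expanding $f(\cdot,\vecy)$ in $\cB_1$ and then the coefficient functions in $\cB_2$) is a slightly cleaner variant of the paper's monomial-by-monomial spanning argument, since it does not need that every function is a $G$-combination of monomials.

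For gradedness you only need that products $b_1b_2$ with $\deg b_1+\deg b_2\le d$ lie in $\cB^{\le d}$, not that they are precisely its members, so that closing remark can be dropped.
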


\subsection{Polynomial distance lemmas}
Here we recall the ``polynomial distance lemma'' (widely known as the ``Schwartz-Zippel lemma'') over the Boolean cube, and state similar results for slices (and their Cartesian product with the Boolean cube). As usual, we let $G$ denote an arbitrary Abelian group.

\begin{lemma}[{\bf Polynomial distance lemma over the Boolean cube},~{\cite{ore1922hohere, DL78, Zippel79, Schwartz80}}]\label{lem:SZ}
	Let $n\ge 1$, $d\ge 0$ be integers.
	If \(P: \bg[n] \to G\) is a non-zero degree-\(d\) function, then we have
	\begin{align*}
		\Pr_{\bx \in \bg[n]}[P(\bx) \neq 0]
		\;\geq\;
		\frac{1}{2^d}.
	\end{align*}
\end{lemma}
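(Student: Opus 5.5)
We will prove the statement by induction on $n$, in the style of the classical Ore/DeMillo--Lipton argument, working with the multilinear representation of $P$ over $G$. The base case $n=1$ (or $d=0$) is immediate. If $d=0$, a non-zero $P$ is a non-zero constant and is non-zero everywhere. If $n=1$ and $d\ge 1$, then $P=\alpha_0+\alpha_1x_1$ with $(\alpha_0,\alpha_1)\neq(0,0)$. Such a $P$ cannot vanish at both $x_1=0$ and $x_1=1$, because that would force $\alpha_0=0$ and then $\alpha_1=0$. Hence $\Pr[P\ne 0]\ge 1/2\ge 1/2^d$.

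For the inductive step, we write $P(\bx)=x_n\,Q(x_1,\dots,x_{n-1})+R(x_1,\dots,x_{n-1})$. Here $Q$ has degree at most $d-1$ and $R$ has degree at most $d$, both with coefficients in $G$. By \Cref{fac:basis-bool-cube}, multilinear monomials form a basis over any Abelian group. So $P\not\equiv 0$ as a function means some coefficient is non-zero, i.e., $Q\not\equiv0$ or $R\not\equiv0$ as functions on $\bg[n-1]$. We then condition on $x_n$ and consider two cases.

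\emph{Case $Q\not\equiv 0$.} By induction, $Q(\by)\ne 0$ for at least a $2^{-(d-1)}$ fraction of $\by\in\bg[n-1]$. For each such $\by$, the restricted univariate function $x_n\mapsto x_nQ(\by)+R(\by)$ is non-constant, because its values at $x_n=0$ and $x_n=1$ differ by $Q(\by)\neq 0$. It is therefore non-zero for at least one of the two choices of $x_n$. Averaging gives $\Pr[P\ne0]\ge \tfrac12\cdot 2^{-(d-1)}=2^{-d}$.

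\emph{Case $Q\equiv 0$.} Then $P=R$ is a non-zero function of degree at most $d$ in $n-1$ variables. Induction directly gives $\Pr[P\neq 0]\ge 2^{-d}$.

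The only point needing care, and the closest thing to an obstacle, is that $G$ is an arbitrary Abelian group rather than a field. We therefore avoid any argument that divides or uses roots of univariate polynomials. The argument above uses only the fact that a function of the form $a x+b$ with $a\neq 0$ cannot vanish on both points of $\{0,1\}$. It also uses the equivalence between ``non-zero function'' and ``non-zero coefficient vector,'' which is exactly the uniqueness in \Cref{fac:basis-bool-cube}. With this, the induction goes through verbatim for all $G$.
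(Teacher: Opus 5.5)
Your proof is correct and is essentially the paper's approach: the paper only cites this lemma, but its appendix proof of \Cref{lem:dist} runs the same classical Ore/DeMillo--Lipton induction (write $P = P_0 + x_{n_1}P_1$, apply induction to $P_1$ at degree $d-1$, and use that when $P_1\neq 0$ at most one value of $x_{n_1}$ makes $P$ vanish). The appeal to \Cref{fac:basis-bool-cube} is superfluous: if $Q\equiv 0$ and $R\equiv 0$ pointwise, then $P\equiv 0$ pointwise, so your case split is exhaustive without uniqueness of representation.
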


Now, we state a similar distance lemma for other domains that we will encounter in our low-degree tests, which we derive from a recent work of Amireddy, Behera, Srinivasan and Sudan~\cite{ABSS25-SZ-Lemma}.

\begin{lemma}[{\bf Polynomial distance lemma over slice},~{\cite[Theorem 1.1]{ABSS25-SZ-Lemma}}]\label{lem:dist}
	There exists an absolute constant $\alpha>0$ such that the following holds. Let $n_1\ge 0$ be an integer, $n\ge 1$ an even integer, $d\le n^\alpha$ be the degree parameter, and $D=\bool^{n_1}\times \bool^n_{n/2}$. If $P \in \cP_d(D,G)$ is a non-zero degree-$d$ function, then we have
	$$\Pr_{\vecx\in D}[P(\vecx)\ne 0] \ge \frac{1}{2^d}\cdot \paren{1-\frac{1}{n^\alpha}}.$$
\end{lemma}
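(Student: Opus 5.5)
We need to prove Lemma \ref{lem:dist} for the product domain $D=\bool^{n_1}\times\bool^n_{n/2}$, assuming the slice-only case from~\cite[Theorem 1.1]{ABSS25-SZ-Lemma}. That cited result gives, for a nonzero degree-$d$ $Q:\bool^n_{n/2}\to G$ with $d\le n^{\alpha}$, the bound $\Pr[Q\ne 0]\ge 2^{-d}(1-n^{-\alpha})$. The plan is to reduce the product case to this by peeling off the cube variables, exactly as in the inductive proof of Lemma~\ref{lem:SZ}.

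\textbf{Step 1: expand $P$ in the product basis.} By Lemma~\ref{lem:cart} applied to Fact~\ref{fac:basis-bool-cube} and Proposition~\ref{prop:monomial-basis}, the products $x_T\cdot B(\vecy)$, with $T\subseteq[n_1]$ and $B\in\cB^*_{n/2}$, form a graded basis of $D$. So we can write
$$P(\vecx,\vecy)=\sum_{T\subseteq[n_1]} x_T\, Q_T(\vecy),$$
where each $Q_T$ is a $G$-combination of basis elements $B$ with $|T|+\deg B\le d$. Hence $Q_T$ has degree at most $d-|T|$. Since the expansion is unique and $P\ne0$, some $Q_T$ is nonzero as a function on the slice.

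\textbf{Step 2: choose a maximal nonzero coefficient.} Pick $T$ maximal under inclusion among the sets with $Q_T\not\equiv0$, and set $t=|T|$. Fix any $\vecy$ with $Q_T(\vecy)\ne0$. Then $\vecx\mapsto P(\vecx,\vecy)$ is a multilinear polynomial over $G$ on $\bool^{n_1}$ with degree at most $d-\deg Q_T$, and its coefficient on $x_T$ is $Q_T(\vecy)\ne 0$. This holds because no $T'\supsetneq T$ has a nonzero coefficient, and the coefficients on $x_{T'}$ are exactly $Q_{T'}(\vecy)$. So this polynomial is nonzero.

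\textbf{Step 3: combine the two distance bounds.} In Step 2 the polynomial in $\vecx$ has degree at most $d$, but we want the sharper bound that its degree is at most $t$ plus whatever comes from the $\vecy$ side. The cleaner route is to restrict $\vecx$ first rather than $\vecy$:

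\begin{itemize}
\item Set $x_i=0$ for $i\notin T$. This leaves $\sum_{T'\subseteq T}x_{T'}Q_{T'}(\vecy)$.
\item This function is a nonzero polynomial of degree at most $t$ in $\vecx_T$, with coefficients in the group $G'$ of functions from the slice to $G$.
\item By Lemma~\ref{lem:SZ} applied with range $G'$, for a $2^{-t}$ fraction of assignments $\vecx_T$ the resulting function of $\vecy$ is nonzero.
\end{itemize}

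This alone does not yet give the density bound, so instead we carry out a genuine product argument. For uniformly random $\vecx\in\bool^{n_1}$, the function $R_{\vecx}(\vecy)=P(\vecx,\vecy)$ is a degree-$\le d$ function on the slice. Viewed as a function of $\vecx$, it is a polynomial with coefficients $Q_{T'}$. We aim to show
$$\Pr_{\vecx}\big[R_{\vecx}\ne0 \text{ with } \deg R_{\vecx}\le d-j\big]$$
is large enough to combine with the slice bound.

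The standard way to do this is induction on $n_1$. Write $P=x_1P_1+P_0$, where $P_1$ has degree at most $d-1$, and use the three-case argument from the proof of Lemma~\ref{lem:SZ}:
\begin{itemize}
\item If exactly one of the restrictions $P|_{x_1=0}$, $P|_{x_1=1}$ is nonzero, then that restriction has degree at most $d$. But $P_1=P|_{x_1=1}-P|_{x_1=0}$ is nonzero with degree at most $d-1$, so the nonzero restriction itself has degree at most $d-1$. This gives $\tfrac12\cdot 2^{-(d-1)}(1-n^{-\alpha})$.
\item If both restrictions are nonzero, averaging gives $2^{-d}(1-n^{-\alpha})$.
\end{itemize}
The base case $n_1=0$ is the cited theorem. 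The hypothesis $d\le n^\alpha$ is preserved under the induction because degrees only decrease.

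The main obstacle is verifying the degree claim in the first case: that $P|_{x_1=1}=P_0+P_1$ has degree at most $d-1$ whenever $P_0\equiv0$. This is exactly where we need the graded product basis. It guarantees that $P_1$, as a function on $\bool^{n_1-1}\times\bool^n_{n/2}$, genuinely has degree at most $d-1$. Without the graded basis, we would only know that some representative polynomial of $P$ has degree $d$, which does not control the degree of the restriction.
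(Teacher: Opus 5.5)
Your final argument is correct and is essentially the paper's proof: induction on $n_1$, writing $P=x_1P_1+P_0$, with the slice theorem as base case and the two cases ``both restrictions nonzero'' (induction at degree $d$) and ``exactly one nonzero'' (it equals $\pm P_1$, induction at degree $d-1$); the paper instead splits on whether $P_1\equiv 0$ and otherwise uses that the values of $P$ at $x_1=0$ and $x_1=1$ differ wherever $P_1\neq 0$, an immaterial difference. Contrary to your closing remark, the graded basis is not needed, since splitting \emph{any} degree-$\le d$ representing polynomial as $x_1P_1+P_0$ already shows $P|_{x_1=1}-P|_{x_1=0}\equiv P_1$ has degree at most $d-1$; also, Steps 1--2 and the first half of Step 3 are unused detours that can be deleted.
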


Since~\cite{ABSS25-SZ-Lemma} do not consider Cartesian products with a Boolean cube, we provide a proof of~\Cref{lem:dist} in~\Cref{app:prelims}.

\begin{remark}\label{rem:sz-lem}
	We note that a weaker form of the above theorem was shown by~{\cite[Lemma 5.1.6]{ABPSS25}}, where the authors obtain a lower bound of $\binom{n-2d}{k-d}/\binom{n}{k}$ for the domain $D=\bool^n_k$. Consequently, for the domain $\bool^{n_1} \times \bool^{n}_{\lfloor n/2\rfloor}$, we get a bound of $1/2^{2d}$.
\end{remark}

\section{The Testing Algorithm and Proof Overview}\label{sec:overview}

With this setup, we give our main low-degree test over the balanced slice\footnote{For the imbalanced slices result, we refer to~\Cref{sec:unbal-slices}.} corresponding to~\Cref{thm:ldt-slices} below.
\vspace{3mm}

\begin{algorithm}[H]
	\caption{$\cTslice$ (Low-degree test over the balanced slice)}
	\label{algo:test-slice}

	\DontPrintSemicolon

	\KwIn{Oracle access to $f: \bs[n] \to G$, degree parameter $d$}

	Sample a uniformly random perfect matching $M$ over the complete graph on vertices $[n].$\;
	Let $M = \set{(i_{1}, j_{1}), \ldots, (i_{n/2}, j_{n/2})}$ \;
	Let $f^{(M)}: \bool^{n/2} \to G$ be the restriction of $f$ obtained by setting $x_{j_{k}} = 1 - x_{i_{k}}$ for every $k \in [n/2]$ \;
	Accept if and only if $\mathcal{T}_{\mathrm{grid}}(f^{(M)}, d)$ accepts.
\end{algorithm}

\vspace{3mm}


We note that the completeness of the test is immediate since every restriction of a degree-$d$ function remains degree-$d$ (by a repeated application of~\Cref{obs:rest}), so we can appeal to the completeness of $\cTgrid$ (see~\Cref{thm:grids}) to conclude that if $f:\bool^{n}_{n/2}\rightarrow G$ is degree-$d$, then $\cTslice$ accepts $f$ with probability $1$. We thus proceed to the main part of the theorem, which is the proof of the soundness of the above test.

\paragraph{Local and Robust Local characterizations.} We start by proving what is a basic `local characterization' property of the test $\cTslice$: namely, that if the test accepts with probability $1$, then the function $f$ is indeed a degree-$d$ function. While this may seem obvious, there are  natural situations~\cite{FriedlSudan, GuoKoppartySudan} where even some functions that are far from being degree-$d$ pass all the `obvious' local tests for degree-$d$ functions. In our setting, we are able to prove a local characterization quite simply by using the monomial basis for the slice from \Cref{prop:monomial-basis}. This is done in \Cref{prop:zero-error} below. More generally, we can also prove this over mixed domains $D = \{0,1\}^{n_1}\times \{0,1\}^{2n_0}_{n_0}$ for integers $n_0,n_1\geq 0.$

Using this, we can prove a `robust local characterization' over such mixed domains, which states that if a function $f: \{0,1\}^{n_1}\times\{0,1\}^{2n_0}_{n_0}\rightarrow G$ is close to a degree-$d$ function when restricted to any perfect matching on $[2n_0],$ then $f$ is close to a degree-$d$ function. This is done in \Cref{lem:final-few}.

\paragraph{Iterative analysis of~\cite{BKSSZ}.} With the above local characterizations in hand, we are ready to begin the analysis of $\cTslice$ in general. We use the general proof template of \cite{BKSSZ}, which is to view the process of restriction iteratively, and show that each step is unlikely to turn a function that is far from any degree-$d$ function to one that is `too close' to degree-$d.$ While~\cite{BKSSZ} used this to analyze low-degree tests over $\F_2^n,$ this idea has also been shown to be useful~\cite{BSS} for testing degree-$d$ functions from $\{0,1\}^n$ to any group $G$. Here, we use it to reduce the correctness of $\cTslice$ to low-degree testing over $\{0,1\}^n.$

More precisely, fix a function $f:\{0,1\}^{n}_{n/2}\rightarrow G$ that is an input to $\cTslice.$ We analyze what happens to the function $f$ when we pick the edges of the perfect matching $M$ one-by-one and restrict $f$ accordingly: recall that for each edge $\{i,j\}\in M,$ $f$ is restricted by setting the variable $x_j = 1-x_i$ (where $i < j$). This naturally leads to functions defined on mixed domains of the form $D = \{0,1\}^{n_1}\times \{0,1\}^{2n_0}_{n_0}.$ For the rest of the outline, we assume that we are dealing with such a function. Abusing notation, we will continue to call this function $f.$ The polynomial distance lemmas from \Cref{sec:prelims} imply that the space of degree-$d$ functions over such domains also yield codes of distance $\approx 2^{-d}$.

At the outset, we fix two distance thresholds $\varepsilon_0$ and $\varepsilon_1$ where $2^{-(d+1)} \gg \varepsilon_1 > \varepsilon_0 = \Omega_d(1).$ Our argument breaks into two parts: the `small distance case', where $\varepsilon := \delta_d(f) \leq \varepsilon_1$ \footnote{Recall that $\delta_d(f)$ is the distance from $f$ to the closest degree-$d$ function.} and the `large distance case' where $\varepsilon > \varepsilon_1.$

\subparagraph{Small distance case.} In the small distance case, $\varepsilon$ is small enough that there is a unique degree-$d$ function $P\in \cP_d(D)$ that is $\varepsilon$-close to $f.$  For a uniformly random perfect matching $M$ over $[2n_0]$, the function $f^{(M)}$ defines the restriction of $f$ to a random subset of its domain. The sampling properties of this random subset were recently investigated~\cite{ABSS25-SZ-Lemma} and using the second-moment bounds from that result, we can show that $\delta_d(f^{(M)}) \approx \varepsilon$ in expectation. Since the grid test $\cTgrid$ rejects with probability $\Omega(\delta_d(f^{(M)}))$, this finishes the proof for small $\varepsilon$. The formal proof is in \Cref{lem:close-distance}.


\subparagraph{Large distance case.} Finally, we come to the large distance case, i.e. $\varepsilon > \varepsilon_1$. Here, we argue that when we pick a random edge $(i,j)$ of the matching $M$ and restrict the function $f$ to $f^{(i,j)}$, the distance $\delta_d(f^{(i,j)})$ is not much smaller than $\varepsilon = \delta_d(f)$. In particular, we would like to argue that $\delta_d(f') > \varepsilon_0$ w.h.p.. Following~\cite{BKSSZ,BSS,ASS}, we prove this in the contrapositive by showing that if there are many  pairs $(i_1,j_1),\ldots, (i_t,j_t),$ such that $f^{(i_s,j_s)}$ is $\varepsilon_0$-close to degree-$d$, then $f$ is $\varepsilon_1$-close to degree-$d$, contradicting our assumption on $\varepsilon$ (see~\Cref{lem:restrn-large-dist}). To do this, we fix polynomials\footnote{These polynomials are uniquely defined as functions because $\varepsilon_0 \ll 2^{-(d+1)}$, which is within the unique decoding radius of the code.} $Q_s$ (for each $s\in [t]$) such that $f^{(i_s,j_s)}$ is $\varepsilon_0$-close to $Q_s,$ and show that there is a `global' degree-$d$ polynomial $Q$ that restricts to $Q_s$ under each of the appropriate restrictions (i.e. $Q^{(i_s,j_s)} = Q_s$ for all $s$). The existence of such a $Q$ is typically the most intricate part of the analysis, and once this is done, it is relatively straightforward to show that $f$ is $\varepsilon_1$-close to $Q$. Here, we are able to give a relatively simple argument by using the monomial basis from \Cref{prop:monomial-basis} to reduce this to the case of $\{0,1\}^n$, which has already been solved by~\cite{BSS,ASS} (see~\Cref{clm:rijrji} and~\Cref{clm:final-agreement}). At a high-level, we exploit the fact that the monomial basis from \Cref{prop:monomial-basis} behaves very similarly to the standard basis of the Boolean cube under restrictions of the first few variables. For example, for $i < j \leq n-2d,$ if we have a polynomial $R$ expressed in the basis of the slice $\{0,1\}^n_{n/2},$ then applying a substitution of the form $x_j := 1-x_i$ to $R$ followed by multilinearization (i.e. setting $x_i^2 = x_i$) yields another polynomial $R'$ in the basis corresponding to the mixed domain $\{0,1\}\times \{0,1\}^{n-2}_{n/2-1}$ (see~\Cref{lem:restrn-basis}).


\section{Local Characterization of Low-Degree Functions}\label{sec:zero-error}


In this section, we show that if a function $f:\{0,1\}^n_{n/2}\rightarrow G$ is sufficiently close to degree-$d$ when restricted to any perfect matching $M$, then $f$ is itself close to degree-$d.$

\subsection{Matchings characterize low degree}\label{subsec:matchings-char}

We start by proving the case when $f^{(M)}$ is a low-degree function for each $M.$  We prove a more general form of this statement over mixed domains, as this will be useful later on in the proof.

\begin{proposition}[{\bf Characterization of low degree}]\label{prop:zero-error}
	For every $d\ge 0$ and function $f:\{0,1\}^{n_1}\times \{0,1\}^{2n_0}_{n_0}\to G$, $f$ is degree-$d$ if and only if for all perfect matchings $M$ over $[2n_0]$, $f^{(M)}$ is degree-$d$.
\end{proposition}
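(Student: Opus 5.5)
The forward direction is immediate: if $f$ is degree-$d$, then by repeated application of \Cref{obs:rest} every $f^{(M)}$ is degree-$d$. For the converse I will argue by contrapositive, using the graded monomial basis for the mixed domain. Combining \Cref{fac:basis-bool-cube}, \Cref{prop:monomial-basis} and \Cref{lem:cart}, the monomials $\vecx^A\vecy^S := \prod_{a\in A}x_a\prod_{i\in S}y_i$ form a graded monomial basis for $D=\bool^{n_1}\times\bool^{2n_0}_{n_0}$. Here $A\subseteq[n_1]$ is arbitrary, and $S\subseteq[2n_0]$ has the ballot property and $|S|\le n_0$. So we may write uniquely $f=\sum_{A,S}\alpha_{A,S}\,\vecx^A\vecy^S$. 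Because the basis is graded and the representation is unique, if $f$ is not degree-$d$ then some $\alpha_{A,S}\neq 0$ with $|A|+|S|>d$.

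Among the nonzero coefficients, I first pick those of maximum total degree $D=|A|+|S|$; necessarily $D>d$. Among these I pick one, $(A,S)$, maximizing $\sum_{i\in S} i$. Next I build a perfect matching $M$ adapted to $S$. The ballot property says that every suffix of $[2n_0]$ contains at least as many non-elements of $S$ as elements. Hence, exactly as in parenthesis matching (reading right to left), each $i\in S$ can be matched to a distinct $m(i)>i$ with $m(i)\notin S$. The remaining $2n_0-2|S|$ coordinates are then paired arbitrarily. After the restriction, the surviving $\vecy$-variables are the left endpoints of the edges; in particular each $i\in S$ survives, with $y_{m(i)}=1-y_i$.

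The key step is computing the coefficient of the multilinear monomial $\vecx^A\prod_{i\in S}y_i$ in $f^{(M)}$, viewed as a function on the cube. Its multilinear representation is unique by \Cref{fac:basis-bool-cube}. After substitution and multilinearization ($y^2=y$), a basis monomial $\vecx^{A'}\vecy^{T}$ becomes $\vecx^{A'}$ times a product over edges touched by $T$. An edge with both endpoints in $T$ contributes $y(1-y)=0$. An edge touched once contributes $y_\ell$ or $1-y_\ell$, where $\ell$ is the left endpoint. So the image is either $0$ or a polynomial supported on $\vecx^{A'}$ times monomials in the set $\pi(T)$ of touched left endpoints, with top term $\pm\vecx^{A'}\vecy^{\pi(T)}$ and $|\pi(T)|=|T|$.

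For $\vecx^A\vecy^S$ to appear in the image of $(A',T)$, we need $A'=A$ and $\pi(T)\supseteq S$. If $|T|>|S|$, the total degree exceeds $D$, so $\alpha_{A,T}=0$ by maximality. Thus only $|T|=|S|$ with $\pi(T)=S$ matters. Such $T$ is obtained from $S$ by replacing some $i$ by $m(i)>i$, so if $T\ne S$ then $\sum_{t\in T}t>\sum_{i\in S}i$, and again $\alpha_{A,T}=0$ by the choice of $S$. Therefore the coefficient of $\vecx^A\vecy^S$ in $f^{(M)}$ is exactly $\alpha_{A,S}\neq0$. This is a degree-$D>d$ monomial in the unique multilinear representation over the cube, so $f^{(M)}$ is not degree-$d$, as required.

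The main obstacle is precisely this cancellation issue. Different ballot sets $T$ can collapse onto the same restricted monomial, possibly with opposite signs. The two-level extremal choice (maximal total degree, then maximal index sum) together with the matching orientation $m(i)>i$ is what makes the leading coefficient triangular. I would double-check carefully that the ballot property guarantees the required matching to later non-$S$ coordinates.
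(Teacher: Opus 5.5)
Your proof is correct and is essentially the paper's argument. Both expand $f$ in the product basis from \Cref{lem:cart} and pick an extremal top-degree basis term. Both then use the ballot property to match each element of its $\vecy$-support to a larger non-element, and show that this term's image is the only surviving contribution to a degree-$D$ monomial of $f^{(M)}$. The paper builds the matching greedily, pairing the largest element with the largest non-element and so on, which settles the point you wanted to double-check.

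The only differences are cosmetic. You break ties by the index sum $\sum_{i\in S} i$ and track individual monomials $\vecx^A\vecy^S$. The paper instead uses graded-lexicographic order on the $\vecy$-support, works with coefficient polynomials $Q_S(\vecx)$, and shows that the coefficient of $z_1\cdots z_t$ in $P^{(M)}$ has degree exactly $D-t$.
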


\begin{proof}
	We first handle the forward implication of the above equivalence which is the easier direction. If $f$ can be represented by a polynomial $P$ of degree at most $d$ over its domain, then we observe that we can obtain a polynomial for $f^{(M)}$ by substituting the variables $x_j=1-x_i$ in $P$, for all edges $(i,j)$ of the matching $M$; clearly the degree of $P$ cannot increase upon these substitutions and the resulting polynomial agrees with $f^{(M)}$ over its domain $\{0,1\}^{n_1 + n_0}$.

	Now, we move to the harder direction of the equivalence. The proof is algebraic and makes use of the graded monomial basis for functions over the slice discussed in~\Cref{sec:prelims} and the corresponding construction of the basis of the Cartesian product $\{0,1\}^{n_1}\times \{0,1\}^{2n_0}_{n_0}$ obtained by taking the Cartesian product $\mathcal{B}$ of the natural bases of $\{0,1\}^{n_1}$ (\Cref{fac:basis-bool-cube}) and $\{0,1\}^{2n_0}_{n_0}$ (\Cref{prop:monomial-basis}). We use $x_1,\ldots, x_{n_1}$ to index the $n_1$ variables corresponding to the first part of the input from $\{0,1\}^{n_1}$ and $y_1,\ldots, y_{2n_0}$ to index the variables corresponding to the slice.

	Assume that $f$ is not a degree-$d$ function. This means that the unique polynomial $P$ representing $f$ in the basis $\mathcal{B}$ has degree equal to $D$ where $d < D \leq n_1+n_0.$\footnote{Note that every monomial in $\mathcal{B}$ has degree at most $n_1+n_0.$} Viewing $P$ as a polynomial in $\vecy$ with coefficients that are polynomials in $\vecx$, we can write
	\[
		P(\vecx,\vecy) = \sum_{|S|\leq D} Q_S(\vecx) \cdot \prod_{i\in S} y_i,
	\] where $Q_S$ is a multilinear polynomial of degree at most $D-|S|$ for each $|S|$.
	Fix a $T$ such that $\deg(Q_T) + |T| = D$ and $T$ is as large as possible in the graded lexicographic order.\footnote{In the graded lexicographic order, $T_1 < T_2$ if either $|T_1| < |T_2|$ or $|T_1|= |T_2|$ and the largest element in the symmetric difference $T_1\Delta T_2$ lies in $T_2$.} Assume that $t := |T|.$

	We recall that by the definition of the basis (\Cref{prop:monomial-basis}), for each $i\in [2n_0]$, we have that
	\begin{equation}
		\label{eq:basis-prop}
		|T\cap \{i,\ldots, 2n_0\}|\leq |([2n_0]\setminus T)\cap \{i,\ldots, 2n_0\}|.
	\end{equation}
	In particular, this implies that there is a matching between $T$ and $[2n_0]\setminus T$ so that each element $i\in T$ is matched to some $j> i.$ Such a matching can be constructed by sorting the elements of $T = \{i_1 < \cdots < i_t\}$ and greedily matching $i_t$ with the largest element of $[2n_0]\setminus T$, $i_{t-1}$ with the second largest element and so on. Property (\ref{eq:basis-prop}) ensures that each $i_p$ is matched to some $j_p > i_p.$
	We extend this matching to a perfect matching $M$ on the set $[2n_0].$

	Assume that $M  = \{(i_1,j_1),\ldots, (i_{n_0},j_{n_0})\}$ where as above we assume that $T = \{i_1,\ldots, i_t\}$ and $i_p < j_p$ for each $p\in [n_0].$

	Clearly, $P^{(M)}$ has degree at most $D$. We claim in fact that the polynomial $P^{(M)}$ has degree equal to $D$, which would finish the proof of the proposition. To see this, let us define $P^{(M)}$ to be a polynomial in the variables $\vecx$ and $\vecz := (z_1,\ldots,z_{n_0})$ where $y_{i_p}$ is replaced by $z_{p}$ in $f^{(M)}$ and $y_{j_p}$ is replaced by $(1-z_p)$ in $P^{(M)}$ for each $p\in [n_0].$ (I.e. the restriction is defined as in \Cref{sec:prelims} but we rename the surviving $\vecy$-variables to $\vecz$-variables for notational simplicity.) We write $P^{(M)}$ as
	\[
		P^{(M)}(\vecx,\vecz) = \sum_{|S|\leq D} R_S(\vecx) \cdot \prod_{i\in S}z_i.
	\]
	By the definition of $P^{(M)}$, the coefficient $R_{[t]}(\vecx)$ of the monomial $z_1\cdots z_t$ in $P^{(M)}$ is given by
	\[
		R_{[t]} = \sum_{\substack{T': T'\in \binom{[2n_0]}{\leq D},  \\ \forall p\in [t]:\ |T'\cap \{i_p,j_p\}| = 1\\
		T'\cap \{i_{p+1},\ldots,i_{n_0}\} = \emptyset }} (-1)^{|T'\cap \{j_1,\ldots,j_p\}|}Q_{T'}.
	\]
	However, note that for any $T'\neq T$ indexing a summand on the right hand side above, it holds that $T' > T$ in the graded lexicographic order (this is because $j_p > i_p$ for each $p\in [D]$). By our choice of $T$, we see that $\deg(Q_{T'}) < D-t$ for all $T'\neq T$ in the above sum.\footnote{We define the degree of the Zero polynomial to be $-\infty.$} As $\deg(Q_T) = D-t$, this implies that $\deg(R_{[t]}) = D-t.$ This shows that $P^{(M)}$ indeed has degree exactly $D>d$.
\end{proof}

In the rest of this section, we will show a more robust version of the above result. Indeed, this is an intermediate step in the analysis of our low-degree test over the slice.

\subsection{Matchings robustly characterize low-degree}\label{subsec:matchings-robust-char}

We prove the following lemma,
which is a strengthening of~\Cref{prop:zero-error},
in that if a function over the slice is far from low-degree,
then not only is there a matching restriction under which it is not low-degree,
but in fact it remains far from low-degree under such a restriction.
As hinted above,
indeed in the final analysis of our low-degree test in~\Cref{sec:put},
we will further strengthen~\Cref{lem:final-few}
and show that a similar statement holds not just for {\em one} but for {\em many} matchings.
In fact, we will crucially use~\Cref{lem:final-few}
in the proof of~\Cref{thm:ldt-slices}
for intermediate domains that are a Cartesian product of a Boolean cube and a slice,
and that is the reason we state~\Cref{lem:final-few} for such mixed domains.

\begin{proposition}[{\bf Robust local characterization of low-degree}]\label{lem:final-few}
	Let $d\ge 1$ be the degree parameter and $n_1\ge 0, n_0 \ge d+1$
	be integers and
	$f:\bool^{n_1} \times \bool^{2n_0}_{n_0} \to G$.
	Then there exists a perfect matching $M$ over $[2n_0]$ such that we have
	$$\delta_d(f^{(M)}) \ge \min\{\delta_d(f),1/2^{d+3}\}.$$
\end{proposition}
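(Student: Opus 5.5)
Fix a closest degree-$d$ function $P$ to $f$, so $\delta(f,P)=\delta_d(f)$, and write $E=\{f\ne P\}$ for the error set. I would choose $M$ to realise the robust version of the argument behind \Cref{prop:zero-error}, by averaging over matchings drawn from a suitable structured distribution $\mathcal{M}$ on perfect matchings.

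\textbf{Approach via the characterization.} Suppose $\delta_d(f^{(M)})<\min\{\delta_d(f),2^{-(d+3)}\}$ for every $M$ in the support of $\mathcal{M}$, and let $Q_M$ be the closest degree-$d$ function to $f^{(M)}$. Since $2^{-(d+3)}$ is well below half the distance $2^{-d}$ of degree-$d$ functions on the cube $\bool^{n_1+n_0}$ (\Cref{lem:SZ}), each $Q_M$ is the unique decoding of $f^{(M)}$. The goal is to show that the family $(Q_M)$ is consistent on overlaps of the subdomains $S^{(M)}\subseteq D$. Such a consistent family then glues, via the basis argument, to a global degree-$d$ function $Q$ with $Q^{(M)}=Q_M$ for all $M$. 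Once that holds, the average of $\delta(f^{(M)},Q^{(M)})$ over $M\sim\mathcal{M}$ equals $\delta(f,Q)$, provided $\mathcal{M}$ is chosen so that a random point of a random $S^{(M)}$ is uniform on $D$. A uniformly random perfect matching has this property by symmetry: a uniform $y$ in the slice has a uniformly random perfect matching that is "alternating" with respect to it. This yields $\delta_d(f)\le\delta(f,Q)=\E_M\delta_d(f^{(M)})<\delta_d(f)$, a contradiction.

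\textbf{Simpler route.} It is better to avoid gluing and argue by averaging directly. Choose $M$ uniformly at random, so that $\E_M\,\delta(f^{(M)},P^{(M)})=\delta_d(f)$. Distinguish two cases.

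In the first case, $\delta_d(f)\ge 2^{-(d+2)}$, say. By averaging, with probability bounded away from zero the restriction $f^{(M)}$ is still $\ge 2^{-(d+3)}$-far from $P^{(M)}$. I then need that it is also far from every other degree-$d$ function. If $f^{(M)}$ were within $2^{-(d+3)}$ of some $Q_M\ne P^{(M)}$, then $P^{(M)}-Q_M$ would be a nonzero degree-$d$ function on the cube that is nonzero on at least a $2^{-d}$ fraction of points, so $f^{(M)}$ would be more than $2^{-d}-2^{-(d+3)}$ far from $P^{(M)}$. Such $M$ can have only limited mass, using that $P$ is the closest codeword together with the distance lemma \Cref{lem:dist} applied to $P-Q$. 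Making this rigorous requires a global $Q$ again.

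In the second case, $\delta_d(f)$ is small. Then by unique decoding $P^{(M)}$ is the closest codeword to $f^{(M)}$ whenever $\delta(f^{(M)},P^{(M)})<2^{-(d+1)}$, so $\delta_d(f^{(M)})=\delta(f^{(M)},P^{(M)})$. I need some $M$ with $\delta(f^{(M)},P^{(M)})\ge\delta_d(f)$ while staying below the unique-decoding threshold. Since the mean is exactly $\delta_d(f)$, some $M$ attains at least the mean. If that $M$ jumps above $2^{-(d+1)}$, its distance is at least the mean anyway, unless it is decoded to another codeword; that would force the restricted distance to be at least $2^{-d}-2^{-(d+1)}\ge 2^{-(d+3)}$, which still meets the bound.

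\textbf{Main obstacle.} The case analysis has a subtlety: if $f^{(M)}$ is close to some $Q_M\ne P^{(M)}$, then $\delta(f^{(M)},Q_M)$ could be small. Its exact distance is at least $2^{-d}-\delta(f^{(M)},P^{(M)})$, and this lower bound is small when $\delta(f^{(M)},P^{(M)})$ is large. To rule this out, I would use the gluing via \Cref{prop:zero-error}'s basis argument, with the hypothesis $n_0\ge d+1$ ensuring the relevant monomials fit. This gluing (or the extremal choice of $M$ avoiding it) is the step I expect to be hardest.
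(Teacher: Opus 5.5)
Your first paragraph sets up the right framework, and it is the paper's: assume every restriction is $\delta$-close with $\delta\le 2^{-(d+3)}$, glue the decoded $P_M$ into a global $F$, and average over a uniform $M$. But the step you defer, consistency and gluing, is the whole content of the proof, and you never supply it. Your ``simpler route'' does not replace it. In its second case, if $a:=\delta(f^{(M)},P^{(M)})>2^{-(d+1)}$ and $f^{(M)}$ decodes to some $Q\ne P^{(M)}$, you only get $\delta(f^{(M)},Q)\ge 2^{-d}-a$. That can be $0$, so the claimed bound $2^{-d}-2^{-(d+1)}$ is false, as you concede under ``Main obstacle''. The ``limited mass'' claim in the first case is likewise unjustified. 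Consistency also cannot be proved directly for arbitrary pairs $M_1,M_2$ by a distance argument: in general $S_{M_1}\cap S_{M_2}$ is a vanishingly small fraction of $S_{M_1}$, and all of $f$'s error could sit there.

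The missing ideas are these.
\begin{enumerate}
\item[(i)] Compare only \emph{adjacent} matchings, i.e.\ ones differing in exactly two edges. Then $S_{M_1}\cap S_{M_2}$ is cut out by the alternating 4-cycle $y_{i_1}=1-y_{i_2}=y_{i_3}=1-y_{i_4}$. So it is a subcube $\cong\bool^{n_1+n_0-1}$ containing exactly half of each $S_{M_i}$. Hence $P_{M_1},P_{M_2}$ differ on at most a $4\delta\le 2^{-(d+1)}<2^{-d}$ fraction of it, and \Cref{lem:SZ} forces them to agree there.
\item[(ii)] Fix a point $(\vecx,\vecy)$. The matchings $M$ with $(\vecx,\vecy)\in S_M$ are exactly the perfect matchings between the zeros and the ones of $\vecy$, i.e.\ permutations of $[n_0]$. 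A transposition of such a permutation gives an adjacent matching. So any two matchings containing the point are joined by a path of adjacent matchings that all contain it, and (i) gives $P_{M_1}(\vecx,\vecy)=P_{M_2}(\vecx,\vecy)$. Thus $F$ with $F|_{S_M}\equiv P_M$ is well defined \emph{pointwise}; no coefficient-level gluing in a basis is needed.
\item[(iii)] $F$ is degree-$d$ by \Cref{prop:zero-error}, since every $F^{(M)}=P_M$ is degree-$d$. This is the only place the basis of \cite{ARS} enters.
\end{enumerate}
Then $\delta(f,F)=\E_M[\delta(f|_{S_M},P_M)]\le\delta$, as you say. Without (i)--(iii) your proposal does not prove the statement.
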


\begin{proof}
	At a high level,
	the proof strategy is similar to that of~\Cref{lem:restrn-large-dist}.
	However, now the restrictions involve perfect matchings instead of edges.
	We will show a ``pairwise consistency'' claim in this setting as well and use it to construct a (``global'') degree-$d$ polynomial that is close to $f$ using degree-$d$ polynomials for the restrictions corresponding to the matchings.

	Let
	$D=\bool^{n_1}\times \bool^{2n_0}_{n_0}$,
	and $\delta\le 1/2^{d+3}$ be such that
	$\delta_d(f^{(M)}) \le \delta$ for all perfect matchings $M$.
	Then, we will show that $\delta_d(f) \le \delta$.
	This would then finish the proof of~\Cref{lem:final-few}.
	Let $S_M\subseteq D$ denote the subset of the original domain corresponding to the perfect matching $M$:
	$$S_M = \{ (\vecx,\vecy):y_i \ne y_j\text{~for all~} (i,j)\in M\}.$$
	By assumption, we know that for every perfect matching $M$,
	there exists a degree-$d$ function $P_M:S_M \to G$ such that
	\begin{align}\label{eqn:close-all-matchings}\delta(f|_{S_M},P_M) \le \delta.\end{align}
	Our goal is to use these ``local'' polynomials $P_M$ to come up with a ``global'' polynomial that is close to $f$.
	As the first step, we first show a local consistency among the local polynomials.

	\begin{claim}[{\bf Pairwise consistencies}]\label{clm:pairwise-last-few}
		For every pair of perfect matchings $M_1$ and $M_2$ over $[2n_0]$ that differ in exactly two edges, i.e., $|M_1\setminus M_2| = |M_2\setminus M_1| = 2$ (we call such matchings to be {\em adjacent}), we have
		$$P_{M_1}|_{S_{M_1}\cap S_{M_2}} \equiv P_{M_2}|_{S_{M_1}\cap S_{M_2}}.$$
	\end{claim}

	\begin{proof}
		We first note that
		$|S_{M_1}| = |S_{M_2}| = 2^{n_1 + n_0}$
		and
		$|S_{M_1} \cap S_{M_2}| = 2^{n_1 + n_0-1}$
		(since we must have $y_{i_1} = 1-y_{i_2} = y_{i_3} = 1-y_{i_4}$,
		where $i_1,i_2,i_3,i_4\in [2n_0]$
		denote the vertices incident to the edges in which $M_1$ and $M_2$ differ).
		Therefore,
		\begin{align*}
			\delta(P_{M_1}|_{S_{M_1}\cap S_{M_2}},P_{M_2}|_{S_{M_1}\cap S_{M_2}})
			 & \le \delta(f|_{S_{M_1}\cap S_{M_2}},P_{M_1}|_{S_{M_1}\cap S_{M_2}}) + \delta(f|_{S_{M_1}\cap S_{M_2}},P_{M_2}|_{S_{M_1}\cap S_{M_2}}) \\
			 & \le \frac{|S_{M_1}|}{|S_{M_1}\cap S_{M_2}|} \cdot \delta(f|_{S_{M_1}},P_{M_1}|_{S_{M_1}})                                             \\
			 & ~~~~~~+ \frac{|S_{M_2}|}{|S_{M_1}\cap S_{M_2}|} \cdot \delta(f|_{S_{M_2}},P_{M_2}|_{S_{M_2}})                                               \\
			 & \le 2\cdot (\delta(f|_{S_{M_1}},P_{M_1})+\delta(f|_{S_{M_2}},P_{M_2}))                                                                \\
			 & \le 4\cdot \delta.
		\end{align*}
		Now, identifying
		$S_{M_1} \cap S_{M_2}$
		with $\bool^{n_1+n_0-1}$ in the natural way (which will preserve degree-$d$ property of functions) and treating
		$P_{M_1}|_{S_{M_1}\cap S_{M_2}}$ and $P_{M_2}|_{S_{M_1}\cap S_{M_2}}$
		as functions over $\bool^{n_1+n_0-1}$, we observe that distance between the two functions is at most $4\delta < 1/2^{d}$.
		Now, using the distance lemma over the Boolean cube (\Cref{lem:SZ}),
		we thus conclude that
		$$P_{M_1}|_{S_{M_1}\cap S_{M_2}} \equiv P_{M_2}|_{S_{M_1}\cap S_{M_2}},$$
		finishing the proof of~\Cref{clm:pairwise-last-few}.
	\end{proof}

	We now use the above pairwise consistency claim to show that actually there is a ``global'' function $F:D\to G$ that agrees with each of the functions $P_M$ under the restriction $M$, for all perfect matchings $M$ over $[2n_0]$.
	\begin{claim}[{\bf Global function from pairwise consistencies}]\label{clm:global-from-pairs}
		There exists a function $F:D\to G$ such that for all perfect matchings $M$ over $[2n_0]$, we have
		$$F|_{S_M} \equiv P_M.$$
	\end{claim}
	\begin{proof}
		To guarantee such a function $F$, it suffices to show that for every pair of matchings $M_1$ and $M_2$ over $[2n_0]$ (that are not necessarily adjacent), if $(\vecx,\vecy)\in S_{M_1}\cap S_{M_2}$, then $P_{M_1}(\vecx,\vecy)=P_{M_2}(\vecx,\vecy)$. If $M_1$ and $M_2$ are adjacent, we are done by using~\Cref{clm:pairwise-last-few}. Otherwise, the idea is to move from $M_1$ to $M_2$ through a sequence of adjacent matchings, all of which contain the point $(\vecx,\vecy)$.

		More formally, assume, without loss of generality, that $\vecy = 0^{n_0} 1^{n_0}$ and $M_1 = \{(1,n_0+1),\dots, (n_0,2n_0)\}$ (so that $(\vecx,\vecy) \in S_{M_1}$). Now, since $S_{M_2} \ni (\vecx,\vecy)$ (otherwise, we have nothing to prove), all the edges of $M_2$ must be from $[n_0]$ to $[2n_0]\setminus [n_0]$. For any such matching $M=\{(1,i_1),\dots, (n_0,i_{n_0})\}$ from $[n_0]$ to $[2n_0]\setminus [n_0]$ (which we will say is {\em good}), let $\sigma(M)$ denote the permutation over $[n_0]$ that maps $j\in [n_0]$ to $i_j - n_0 \in [n_0]$. It is easy to see that $\sigma(M_1)$ corresponds to the identity permutation. Now for arbitrary good matchings $M$ and $M'$, suppose $\sigma(M')$ can be obtained from $\sigma(M)$ by swapping some two elements (say $\sigma(M)_i = \sigma(M')_j$ and $\sigma(M)_j = \sigma(M')_i$ where $i<j\in [n_0]$). Then the edges $(i,n_0 + \sigma(M)_i)$ and $(j,n_0 + \sigma(M)_j)$ are present in $M$ and similarly the edges $(i, n_0 + \sigma(M)_j)$ and $(j, n_0 + \sigma(M)_i)$ are present in $M'$; all the other edges of $M$ and $M'$ are identical. In other words, $M$ and $M'$ are adjacent. We now use the fact that one can get any arbitrary permutation of $[n_0]$ starting with the identity permutation by repeatedly swapping appropriate elements for a finite number of steps. Thus, there is a finite sequence of good matchings from $M_1$ to $M_2$, say $M_1 = M^{(1)},\dots, M^{(t)} = M_2$ such that the matchings $M^{(i)}$ and $M^{(i+1)}$ are adjacent for all $i\in [t-1]$. Since each matching in the sequence is good, $(\vecx,\vecy)\in S_{M^{(i)}}$ for all $i\in [t]$, which implies that $P_{M^{(i)}}(\vecx,\vecy) = P_{M^{(i+1)}}(\vecx,\vecy)$ for all $i\in [t-1]$ using~\Cref{clm:pairwise-last-few}. Hence, we conclude that $P_{M_{1}}(\vecx,\vecy) = P_{M_2}(\vecx,\vecy),$ which finishes the proof of~\Cref{clm:global-from-pairs}.
	\end{proof}

	We now finish the proof of~\Cref{lem:final-few} using~\Cref{clm:global-from-pairs}. In fact, we will show that $F$ is the desired degree-$d$ function that is $\delta$-close to $f$. First, we argue why $F$ has degree at most $d$: we recall from~\Cref{prop:zero-error} that a function $g:\bool^{n_1}\times \bool^{2n_0}_{n_0}\to G$ is degree-$d$ if and only if for all perfect matchings $M$ over $[2n_0]$, the corresponding restriction $g^{(M)}$ is degree-$d$.


	Since $P_M \equiv F|_{S_M}$ is degree-$d$, we know that $F^{(M)}$ is degree-$d$ for all $M$. Hence, we conclude that $F$ must be degree-$d$. Now, since sampling a perfect matching $M$ over $[2n_0]$ uniformly at random, and then a random point in the corresponding $S_M$ produces a uniformly random point in $D=\bool^{n_1} \times \bool^{2n_0}_{n_0}$, we have
	$$\delta(f,F) = \E_{M}[\delta(f|_{S_M},F|_{S_M})] = \E_M[\delta(f|_{S_M}, P_M)] \le \delta,$$ where the last inequality is using~\eqref{eqn:close-all-matchings}. This finishes the proof of~\Cref{lem:final-few}.
\end{proof}

\section{Small Distance Case}\label{sec:small-dist}
From the polynomial distance lemma (\Cref{lem:dist}),
we have that for large enough \(n_{0}\),
that a function
\(f: \bg[n_{1}] \times \bss[n_{0}] \to G\)
with
\(\delta_{d}(f) \leq 2^{-(d+2)}\)
is closest to a unique polynomial \(P\) in \(\pds\).
This will simplify our analysis,
and we will in this section show that if
\(\delta_{d}(f) \leq 2^{-(d+2)}\),
then we can with high probability
say that the distance is preserved after restricting to a matching.
To prove this,
we need a small generalization of a sampling lemma from \cite{ABSS25-SZ-Lemma}, for which we provide a short justification in~\Cref{app:small-dist}.

\begin{lemma}\label{lemma:main-informal}
	There exists an absolute constant $\alpha > 0$ for which the following holds for all $\rho>0$.
	Let $g: \bs \to [0,1]$ be a function with
	\(\mathbb{E}[g(\bx)] = \rho\).
	Then,
	\begin{align*}
		\mathbb{E}_{M, \bx, \by}
		[g^{(M)}(\bx) \cdot g^{(M)}(\by)] \;
		\leq
		\; \rho^{2-2/n^{\alpha}} + \rho\cdot 2^{-n^{\alpha}},
	\end{align*}
	where \(M\) is a uniformly random perfect matching,
	and \(\bx, \by\) are uniformly random independent elements from $\bool^{n/2}$.

\end{lemma}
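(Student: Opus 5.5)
Since $\bx,\by$ are independent uniform points of $\bool^{n/2}$ and $M$ is uniform, the pair $(\Phi_M(\bx),\Phi_M(\by))$ is a pair of points of $\bs$. Here $\Phi_M$ denotes the embedding of the cube into the slice induced by $M$. So I will rewrite the left-hand side as $\E_{(\vecu,\vecv)\sim\mu}[g(\vecu)g(\vecv)]$ for a symmetric coupling $\mu$ on $\bs\times\bs$, each marginal of which is uniform on the slice.

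\textbf{Step 1: describe $\mu$.} Given $M$, the pair $(\vecu,\vecv)$ is determined coordinatewise on each edge. On each edge $(i,j)$ the two points either agree (probability $1/2$) or are complementary. Hence the Hamming distance $|\vecu\oplus\vecv|$ is $2B$ with $B\sim\mathrm{Bin}(n/2,1/2)$. Conditioned on $\vecu$ and on the distance, $\vecv$ is uniform among slice points at that distance: by symmetry of the uniform matching, the set of disagreeing coordinates is uniform among sets containing equally many ones and zeros of $\vecu$. Thus $\mu$ is a mixture, over $r=B$, of the Johnson-scheme operators $J_r$ (uniform swap of $r$ ones with $r$ zeros). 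So the quantity equals $\E_B\langle g, J_B g\rangle$.

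\textbf{Step 2: spectral bound.} Decompose $g=\sum_{\ell}g_\ell$ into the eigenspaces $V_\ell$ of the Johnson scheme, which are the harmonic components of degree $\ell\le n/2$. Each $J_r$ acts on $V_\ell$ by an eigenvalue $\lambda_\ell(r)$, a Hahn/Eberlein polynomial. I will then average over $B$ to get $\bar\lambda_\ell=\E_B\lambda_\ell(B)$. Since $B$ is concentrated near $n/4$, i.e.\ a random swap of about half of each side, I expect $|\bar\lambda_\ell|\le 2^{-\Omega(\ell)}+2^{-n^{\Omega(1)}}$ for $\ell\ge1$, while $\bar\lambda_0=1$. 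This mirrors the noise operator at noise rate $1/2$ on the cube, where the cube analogue would be exactly $\rho^2$.

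\textbf{Step 3: hypercontractivity to get $\rho^{2-o(1)}$.} The main term with $\ell=0$ gives $\rho^2$. For levels $1\le\ell\le n^{\alpha'}$ I would use a hypercontractive or level-$\ell$ inequality on the slice, e.g.\ log-Sobolev for the Johnson graph (Lee–Yau), which holds uniformly up to $1-o(1)$ factors at low degree. Writing $\E_B\langle g,J_Bg\rangle=\langle g,T g\rangle$ with $T$ a noise-like operator, this gives $\langle g,Tg\rangle\le\|g\|_{p}^2$ for $p=1+\eta$, where $\eta$ corresponds to an effective noise close to $1/2$ up to $n^{-\alpha}$ losses. Since $g\in[0,1]$, we have $\|g\|_p^2\le\rho^{2/p}$, so $2/p\approx 2-2n^{-\alpha}$ after tuning. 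Levels $\ell>n^{\alpha'}$ contribute at most $\max_{\ell>n^{\alpha'}}|\bar\lambda_\ell|\cdot\|g\|_2^2\le 2^{-n^{\alpha}}\rho$, giving the second term.

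\textbf{Main obstacle.} The crux is precise control of the averaged eigenvalues $\bar\lambda_\ell$ and a hypercontractive comparison that loses only $n^{-\alpha}$ in the exponent. I would first try to compare $T$ directly to the cube noise operator via the Filmus/Lee–Yau coupling, and to bound $\bar\lambda_\ell$ using the explicit Hahn-polynomial formulas together with the concentration of $B$. Alternatively, as the paper indicates, this is a mild generalization of the sampling lemma in \cite{ABSS25-SZ-Lemma}. There I would simply check that their proof applies verbatim to the coupling $\mu$ of Step 1, which is exactly the distribution their matching embedding induces.
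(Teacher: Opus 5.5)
Your Step 1 is correct. Conditioned on $\vecu$, the matching $M$ is a uniform bijection between the ones and the zeros of $\vecu$. Hence the left-hand side equals $\langle g,Tg\rangle$ with $T=\E_B J_B$, and $T\succeq 0$ because its quadratic form is $\E_M[(\E_{\bx}g^{(M)}(\bx))^2]$.

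\textbf{The gap is between Steps 2 and 3.} To reach $\rho^{2-2/n^{\alpha}}$ via $\langle g,Tg\rangle\le\|g\|_{1+\eta}^2\le\rho^{2/(1+\eta)}$, you need $\eta=O(n^{-\alpha})$. Comparing $T$ with a hypercontractive semigroup at $p=1+\eta$ (e.g.\ the one from the Lee--Yau log-Sobolev inequality) then needs $\bar\lambda_\ell$ of order at most $\eta^{c\ell}=n^{-\Omega(\ell)}$ on the levels $1\le\ell\le n^{\alpha'}$.

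The bound you anticipate, $|\bar\lambda_\ell|\le 2^{-\Omega(\ell)}$, is far too weak for this. It is consistent with $\bar\lambda_\ell=\epsilon^\ell$ for a constant $\epsilon$. Already on the cube, the $\epsilon$-correlated operator applied to the indicator of a codimension-$j$ subcube gives exactly $\rho^{2-\log_2(1+\epsilon)}$, an exponent bounded away from $2$. Taking $\rho=2^{-n^{\alpha}/2}$ shows the term $\rho\,2^{-n^{\alpha}}$ cannot absorb the excess.

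So the phrase ``effective noise close to $1/2$ up to $n^{-\alpha}$ losses'' in Step 3 silently assumes a much stronger spectral estimate than Step 2 provides, and neither estimate is proved. The right scale does hold at the bottom levels:
\begin{itemize}
\item $\bar\lambda_1=1-4\E[B]/n=0$;
\item a direct computation with $h=(x_1-x_2)(x_3-x_4)\in V_2$ gives $\bar\lambda_2=1/n$.
\end{itemize}
What is missing is a proof of $\bar\lambda_\ell\le n^{-\Omega(\ell)}$ uniformly for $\ell\le n^{\alpha'}$, and of $\bar\lambda_\ell\le 2^{-n^{\alpha}}$ beyond that. This needs a quantitative estimate on $\lambda_\ell(r)$ for $r$ near $n/4$, uniformly in $\ell$, or an equivalent combinatorial argument. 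Concentration of $B$ only tells you where to evaluate the Hahn polynomials, not how small their values are. That estimate is the entire content of the lemma, and your proposal leaves it as an expectation.

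\textbf{Your fallback is essentially the paper's proof, but you put the check in the wrong place.} Lemma 3.2 of \cite{ABSS25-SZ-Lemma} already proves this bound for the same matching distribution, for $g=\mathbbm{1}_S$. What must be generalized is the function class, not the coupling. The paper observes that the only property of $\mathbbm{1}_S$ used there is $\|\mathbbm{1}_S\|_p=\rho^{1/p}$. For $g\colon\{0,1\}^n_{n/2}\to[0,1]$ one still has $\|g\|_p^p=\E[g^p]\le\E[g]=\rho$, i.e.\ $\|g\|_p\le\rho^{1/p}$, which is the inequality you already wrote in Step 3. Stated this way, the fallback is a complete proof. The self-contained spectral route is not complete until the low-level eigenvalue bound above is supplied.
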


We are now ready to prove the main result of this section,
that distance to low-degree is preserved (in expectation)
after restricting to a random matching.

\begin{lemma}[{\bf Small distance lemma}]\label{lem:close-distance}
	There exist absolute constants $\alpha,C > 0$ such that for all integers $d\ge 1,n_1\ge 0$ and $n_0 \ge (d+1)^C$, and functions
	\(f: \bg[n_{1}] \times \bss[n_{0}] \to G\)
	with
	\(\delta_{d}(f) \leq   2^{-(4d+8)}\),
	we have
	\begin{align*}
		\E_{M}
		\left[\delta_{d}\left(f^{(M)}\right)\right]
		\geq \frac{\delta_{d}(f)}{2},
	\end{align*}
	where \(M\) is a uniformly random perfect matching over $[2n_0]$.
\end{lemma}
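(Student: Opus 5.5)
Let $P\in\cP_d(D)$ be a degree-$d$ polynomial closest to $f$, so $\varepsilon:=\delta_d(f)=\delta(f,P)\le 2^{-(4d+8)}$. Let $E=\{(\vecx,\vecy)\in D: f(\vecx,\vecy)\ne P(\vecx,\vecy)\}$ be the error set, with $g=\mathbbm{1}_E$, so that $\E[g]=\varepsilon$. For each perfect matching $M$, the restriction $P^{(M)}$ is degree-$d$ on $\bool^{n_1+n_0}$ by \Cref{obs:rest}. Set $\varepsilon_M:=\delta(f^{(M)},P^{(M)})=\E[g^{(M)}]$. Since a uniformly random point of a uniformly random $S_M$ is uniform on $D$, we have $\E_M[\varepsilon_M]=\varepsilon$.

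\textbf{Decoding on the restriction.} The first step is to show that on most matchings, $P^{(M)}$ is still the closest degree-$d$ polynomial to $f^{(M)}$. If $\varepsilon_M<2^{-(d+1)}$, then any other degree-$d$ polynomial $Q\neq P^{(M)}$ is at distance at least $2^{-d}$ from $P^{(M)}$ by \Cref{lem:SZ}. Hence $\delta(f^{(M)},Q)\ge 2^{-d}-\varepsilon_M>\varepsilon_M$, and so $\delta_d(f^{(M)})=\varepsilon_M$. In general we always have $\delta_d(f^{(M)})\ge \min\{\varepsilon_M,\,2^{-d}-\varepsilon_M\}$. It therefore suffices to bound the loss
\[
\E_M\bigl[\varepsilon_M\cdot\mathbbm{1}[\varepsilon_M\ge 2^{-(d+1)}]\bigr]\le \varepsilon/2,
\]
which gives $\E_M[\delta_d(f^{(M)})]\ge \varepsilon-\varepsilon/2$.

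\textbf{Second moment.} To bound the loss we use \Cref{lemma:main-informal}, applied to the mixed domain $\bool^{n_1}\times\bool^{2n_0}_{n_0}$. The mixed-domain version follows by averaging over the $\vecx$-part: apply the lemma for each fixed cube coordinate and use convexity of $\rho\mapsto\rho^{2-2/n^\alpha}$. This is the generalization the paper says it justifies in the appendix, and we take it as given. It yields
\[
\E_M[\varepsilon_M^2]\le \varepsilon^{2-2/n_0^{\alpha}}+\varepsilon 2^{-n_0^{\alpha}}.
\]
Markov then gives
\[
\E_M\bigl[\varepsilon_M\mathbbm{1}[\varepsilon_M\ge\tau]\bigr]\le \E_M[\varepsilon_M^2]/\tau \quad\text{with } \tau=2^{-(d+1)}.
\]
So the loss is at most $2^{d+1}\bigl(\varepsilon^{2-2/n_0^\alpha}+\varepsilon 2^{-n_0^\alpha}\bigr)$.

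\textbf{Main obstacle: the parameters.} We need $2^{d+1}\varepsilon^{1-2/n_0^\alpha}\le 1/4$ and $2^{d+1}2^{-n_0^\alpha}\le 1/4$. The second holds once $n_0^\alpha\ge d+3$, which is implied by $n_0\ge(d+1)^C$ for large $C$. For the first, take $n_0^\alpha\ge 4$, so that $1-2/n_0^\alpha\ge 1/2$. Then
\[
\varepsilon^{1-2/n_0^\alpha}\le \varepsilon^{1/2}\le 2^{-(2d+4)},
\]
and hence $2^{d+1}\varepsilon^{1/2}\le 2^{-(d+3)}\le 1/4$. This explains the threshold $2^{-(4d+8)}$, which is much smaller than needed. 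The total loss is therefore at most $\varepsilon/2$, completing the argument.

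The only delicate point is making sure the sampling lemma applies to the mixed domain and to the edge-restriction convention (uniform $M$, uniform point of $\bool^{n_1+n_0}$). The ``$\vecx,\vecy$ independent'' second-moment form is exactly $\E_M[\varepsilon_M^2]$ once the cube part is included, so the averaging argument above should suffice.
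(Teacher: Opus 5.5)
Your overall architecture matches the paper's. You take the closest $P$, use $\E_M[\varepsilon_M]=\varepsilon$, note that $\delta_d(f^{(M)})=\varepsilon_M$ whenever $\varepsilon_M\le 2^{-(d+1)}$ by \Cref{lem:SZ}, and control the remainder with the second-moment bound of \Cref{lemma:main-informal}. Your estimate $\E_M[\varepsilon_M\mathbbm{1}[\varepsilon_M\ge\tau]]\le\E_M[\varepsilon_M^2]/\tau$ is slightly more economical than the paper's. The paper instead bounds the loss by $\Pr_M[\varepsilon_M>\tau]$ and controls that via $\E_M[\varepsilon_M^2]\ge\tau^2\Pr_M[\varepsilon_M>\tau]$. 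This is why you see slack in the threshold $2^{-(4d+8)}$, and your parameter arithmetic is fine.

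The gap is in the step you take as given: transferring \Cref{lemma:main-informal} to the mixed domain. As you describe it, this step fails. Write $g_{\vecx}:=\mathbbm{1}_E(\vecx,\cdot)$ with density $\rho_{\vecx}$, and let $a_{\vecx}(M)$ be the density of $g_{\vecx}^{(M)}$. Then $\varepsilon_M^2=(\E_{\vecx}a_{\vecx}(M))^2\le\E_{\vecx}a_{\vecx}(M)^2$. Applying the lemma to each $g_{\vecx}$ then leaves you needing $\E_{\vecx}[\rho_{\vecx}^{p}]\le\varepsilon^{p}$, where $p=2-2/n^\alpha>1$ and $n=2n_0$. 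Convexity of $\rho\mapsto\rho^p$ gives exactly the reverse inequality. For example, if $E=A\times\bool^{2n_0}_{n_0}$ with $A$ of density $\varepsilon$, then $\E_{\vecx}[\rho_{\vecx}^p]=\varepsilon$, and your loss bound becomes $2^{d+1}\varepsilon$, which is useless.

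The fix uses the fact that \Cref{lemma:main-informal} is stated for $[0,1]$-valued functions on the slice, not just indicators. Put $g(\vecy):=\Pr_{\vecx}[f(\vecx,\vecy)\ne P(\vecx,\vecy)]\in[0,1]$. Because matching restrictions touch only the slice coordinates, $\varepsilon_M=\E_{\vecz}[g^{(M)}(\vecz)]$. Hence $\E_M[\varepsilon_M^2]=\E_{M,\vecz,\vecz'}[g^{(M)}(\vecz)g^{(M)}(\vecz')]\le\varepsilon^{p}+\varepsilon 2^{-n^\alpha}$ directly, with no per-$\vecx$ averaging; this is what the paper does. Your per-$\vecx$ route can also be rescued, at a factor-$2$ loss, via Minkowski's inequality in $L^2(M)$ together with concavity of $\rho\mapsto\rho^{p/2}$ and $\rho\mapsto\sqrt{\rho}$, but not via convexity. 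With either replacement, the rest of your argument goes through.
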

\begin{proof}
    Let $D:=\bool^{n_1} \times \bool_{n_0}^{2n_0}$ and $P \in \cP_d(D)$ be the degree-$d$ function that is $\delta$-close to $f$, where $\delta:=\delta_d(f)$. 
    We first note that 
    \begin{align}\label{eqn:exp-delta-one}\E_M\left[\delta(f^{(M)},P^{(M)})\right] = \delta,\end{align}
    since evaluating $f^{(M)}$ (resp.~$P^{(M)}$) at a random point in its domain (where $M$ is also random) corresponds to evaluating $f$ (resp.~$P$) at a random point in its domain. We will now show that 
    \begin{align}\label{eqn:exp-delta-two}
        \Pr_{M}\left[\delta(f^{(M)},P^{(M)})>\frac{1}{2^{d+1}}\right] \le \frac{\delta}{2}.
    \end{align}
    Since $\delta_d(f^{(M)}) = \delta(f^{(M)},P^{(M)})$ whenever $\delta(f^{(M)},P^{(M)})\le \frac{1}{2^{d+1}}$ (by \Cref{lem:SZ}), we infer that~$\eqref{eqn:exp-delta-one}$ and~\eqref{eqn:exp-delta-two} would together imply that 
    $$\E_{M}\left[\delta_d(f^{(M)})\right] \ge \delta-\frac{\delta}{2}=\frac{\delta}{2},$$
    which would then yield the bound of~\Cref{lem:close-distance}. The rest of the proof is to establish~\eqref{eqn:exp-delta-two}.

    Let $g:\bool^{2n_0}_{n_0} \to [0,1]$ defined by
    $$g(\vecx):=\delta(f(\cdot,\vecx),P(\cdot,\vecx)),$$ denote the distance between $f$ and $P$ for a given assignment to the last $2n_0$ coordinates. Thus, we have $\E[g(\vecx)] = \delta(f,P) =  \delta$ and $\E[g^{(M)}(\vecx)] = \delta(f^{(M)},P^{(M)})$ for all matchings $M$. Suppose towards a contradiction that~\eqref{eqn:exp-delta-two} does not hold. Then we have 
    $$\E_{M}\left[\delta(f^{(M)},P^{(M)})^2\right] \ge \frac{\delta}{2}\cdot \frac{1}{2^{2(d+1)}}= \frac{\delta}{2^{2d+3}},$$ and so
    \begin{align}\label{eqn:exp-lower-bd}
    \E_{M,\vecx,\vecy}\left[g^{(M)}(\vecx)\cdot g^{(M)}(\vecy)\right] = \E_M\left[ (\E[g^{(M)}(\vecx)])^2 \right] \ge \frac{\delta}{2^{2d+3}}. 
    \end{align}
    On the other hand,~\Cref{lemma:main-informal} implies (for suitable constants $\alpha,C>0$) that 
    \begin{align*}\label{eqn:exp-upper-bd}
        \E_{M,\vecx,\vecy}\left[g^{(M)}(\vecx)\cdot g^{(M)}(\vecy)\right] \le \delta^{2-2/n^\alpha} + \delta\cdot 2^{-n^\alpha} < \delta^{3/2} + \frac{\delta}{2^{n^\alpha}} \le \delta^{3/2} + \frac{\delta}{2^{2d+4}}.
    \end{align*}
    However, this contradicts the lower bound of~\eqref{eqn:exp-lower-bd} since $\delta \le 1/2^{4d+8}$. Therefore, we conclude that~\eqref{eqn:exp-delta-two} holds, finishing the proof of the small distance lemma (\Cref{lem:close-distance}).
\end{proof}

\section{Large Distance Case}\label{sec:large-dist}

We prove the following lemma, whose contrapositive roughly says
that if a function $f$ is far from degree-$d$, then many of the restricted functions $f^{(i,j)}$ are also far. The proof strategy closely resembles that of~\cite{BSS,ASS} who also handle similar ``edge'' restrictions corresponding to setting $x_j=1-x_i$, but becomes more intricate due to the nature of the domains of the functions being non-grids. Nevertheless, using properties of the graded monomial basis for functions on slices discussed in~\Cref{sec:prelims} (in particular, \Cref{prop:monomial-basis}), we are able to give a relatively simple reduction to the case of the Boolean grid already handled by~\cite{BSS,ASS}.

\subsection{Global function from local functions}\label{subsec:global-poly}

The main goal of this subsection is to show that if there are many restrictions of a function that are close to low-degree (via some ``local polynomials''), then the overall function is close to low-degree (via some ``global polynomial''). More formally, we show the following.

\begin{lemma}[{\bf Global from local}]\label{lem:restrn-large-dist}
	Let $d\ge 1$ be the degree parameter and $n_1\ge 0,n_0 \ge 4(d+2)$ be integers and $f:\bool^{n_1} \times \bool^{2n_0}_{n_0}\to G$. Suppose for some integer $t\in [d+2,n_0/4]$ and $0\le \delta\le 1/2^{2d+6}$, we have that
	\begin{align*}
		\Pr_{\substack{ i,j \in [2n_{0}] \\ i < j }}\brac{\delta_d(f^{(i,j)}) \; \leq \; \delta} \; \geq \; \dfrac{2t^{2}}{\binom{2n_{0}}{2}}.
	\end{align*}
	Then we have,
	\begin{align*}
		\delta_d(f) \; \leq \; \dfrac{1}{2^{t-1}} + \delta.
	\end{align*}
\end{lemma}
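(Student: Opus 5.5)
We follow the template of \cite{BSS,ASS} and the proof of \Cref{lem:final-few}. There are three steps: find many good edges that are pairwise compatible, glue their local polynomials into one global degree-$d$ polynomial, and then bound the distance from $f$ to it.

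\textbf{Step 1: many disjoint good edges.} Call an edge $(i,j)$ \emph{good} if $\delta_d(f^{(i,j)})\le\delta$. The hypothesis says there are at least $2t^2$ good edges. A graph with at least $2t^2$ edges either contains a matching of size $t$ or has a vertex cover of size less than $2t$. In the second case some vertex has degree above $t$, which gives a star of $t$ good edges. I would try to handle both configurations, a matching $\{(i_s,j_s)\}_{s\in[t]}$ or a star, since the star may be more convenient for the counting later. For each good edge $e=(i_s,j_s)$, fix a degree-$d$ function $Q_s$ on the domain $S_e=\{y_i\ne y_j\}$ with $\delta(f|_{S_e},Q_s)\le\delta$.

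\textbf{Step 2: pairwise consistency and gluing.} For two good edges $e,e'$ (disjoint, or sharing a vertex), the set $S_e\cap S_{e'}$ makes up a constant fraction (about $1/4$) of each of $S_e$ and $S_{e'}$. It is again a mixed domain of the form cube $\times$ slice, with $n_0-2\ge d+1$ slice coordinates. So $\delta(Q_e,Q_{e'})\le 8\delta$ on it. Since $8\delta<2^{-d}(1-o(1))$, \Cref{lem:dist} (or \Cref{rem:sz-lem} with slack from $\delta\le 2^{-2d-6}$) gives $Q_e\equiv Q_{e'}$ on $S_e\cap S_{e'}$.

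The central step is to build a single degree-$d$ $Q$ on $D$ with $Q|_{S_{e_s}}\equiv Q_s$ for all $s$. Here I would use the ARS basis. I would relabel coordinates so that the good edges' vertices lie among the first coordinates, with $i<j\le 2n_0-2d$. This relabelling is harmless because permuting slice coordinates preserves degree, and it uses $n_0\ge4(d+2)$ and $t\le n_0/4$. Under this relabelling, the substitution $y_j:=1-y_i$ followed by multilinearization sends ballot-basis polynomials on $D$ to ballot-basis polynomials on the restricted mixed domain (this is the content of \Cref{lem:restrn-basis} and \Cref{clm:rijrji} promised in the overview). In coordinates, each $Q_s$ then becomes a polynomial in the cube variable $z_s$ together with the remaining slice basis, exactly as for edge restrictions of the cube. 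Pairwise consistency gives the commutation relation $(Q_s)^{(e_{s'})}=(Q_{s'})^{(e_s)}$. The coefficientwise gluing argument of \cite{BSS,ASS} for $t\ge d+2$ consistent edge restrictions then produces a degree-$d$ $Q$: a basis monomial of degree $\le d$ avoids at least one of $t\ge d+1$ edges, so its coefficient can be read off from that $Q_s$, and consistency makes this choice well defined. I expect this step to be the main obstacle. The difficulty is making sure the basis is preserved under each restriction and that the coefficient reading is unambiguous.

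\textbf{Step 3: closeness.} Let $B=\{x: f(x)\ne Q(x)\}$. Each $S_{e_s}$ contains at most a $\delta$-fraction of points of $B$ where $f\neq Q_s$. Conversely, a point lying in none of the $S_{e_s}$ has $y_{i_s}=y_{j_s}$ for all $s$. For disjoint edges on a near-balanced slice, this event has probability roughly $2^{-t}$; for a star it has probability even less. A point in some $S_{e_s}$ where $f\ne Q$ is charged to that $s$. Averaging the charge over the $s$ for which $x\in S_{e_s}$ (i.e.\ $f$ disagrees with the local polynomials on average) gives $\delta(f,Q)\le \Pr[\forall s:\ x\notin S_{e_s}]+\delta\le 2^{-(t-1)}+\delta$. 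The factor $2^{-(t-1)}$ absorbs the slight correlation of slice coordinates, using $t\le n_0/4$.
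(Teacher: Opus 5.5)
\textbf{The gap is in Step 3.} The local closeness gives only $|B\cap S_{e_s}|\le\delta|S_{e_s}|$ for each $s$. Write $N(x)=\#\{s: x\in S_{e_s}\}$. Summing over $s$ then yields only $\sum_{x\in B}N(x)\le\delta\sum_s|S_{e_s}|=\delta\,|D|\,\E[N]$, where $\E[N]\approx t/2$.

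Your ``averaging the charge'' picks $s$ uniformly among the indices with $x\in S_{e_s}$. This weights the points of $S_{e_s}$ by $1/N(x)$, not uniformly. So the bound $\delta$ on the uniform density of $B$ inside $S_{e_s}$ says nothing about the weighted density, and $B$ may concentrate where $N$ is below its mean. What actually follows is the union bound $\Pr[B\cap\bigcup_s S_{e_s}]\le t\delta$, or $\Pr[N<\E N/2]+2\delta$ using concentration of $N$.

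The claimed bound $2^{-(t-1)}+\delta$ is in fact false. Take $d=1$, $n_1=0$, and $n_0$ large, so that $W:=\sum_{i\le 40}y_i$ is essentially $\mathrm{Bin}(40,1/2)$. Let $B=\{|W-20|\ge 10\}$ and $f=g\cdot\mathbbm{1}_B$ with $0\ne g\in G$. By symmetry, each of the $780\ge 2\cdot 19^2$ pairs $e\subseteq[40]$ satisfies $\delta_1(f^{(e)})\le \E[W(40-W)\mathbbm{1}_B]/\E[W(40-W)]\approx 1.66\cdot 10^{-3}\le 2^{-8}$. So the hypothesis holds with $t=19$. Yet $\Pr[B]\approx 2.22\cdot 10^{-3}<1/8$, and distinct degree-$1$ functions differ on at least a $1/4$ fraction (\Cref{rem:sz-lem}). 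Hence $\delta_1(f)=\Pr[B]>2^{-18}+1.66\cdot 10^{-3}$.

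\textbf{Steps 1 and 2 follow the paper's proof.} Both use the star/matching dichotomy, then pairwise consistency on $S_e\cap S_{e'}$. Both then use the ARS basis with \Cref{lem:restrn-basis} to turn functional consistency into the formal identities $R_{i,j}=R_{j,i}$ of \Cref{clm:rijrji}, and finish with the gluing \Cref{clm:final-agreement} of \cite{ASS}.

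There is one fixable imprecision in Step 2. In the star case, $S_e\cap S_{e'}$ is the disjoint union of $\{y_1=0,\,y_2=y_3=1\}$ and $\{y_1=1,\,y_2=y_3=0\}$, not a single cube$\times$balanced slice. Also, with only $n_0\ge 4(d+2)$ you cannot invoke \Cref{lem:dist}. Instead apply the $2^{-2d}$ bound of \Cref{rem:sz-lem} to each piece separately; this is why $\delta\le 2^{-2d-6}$ is assumed.

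\textbf{The paper's proof has the same flaw.} Your Step 3 is also the paper's final step. There, $|\bigcup_i B_i|\le\delta|D|$ does not follow from \eqref{eqn:bad-set}; the union bound only gives $t\delta|D|$. So what this argument proves, yours or the paper's, is $\delta_d(f)\le 2^{-(t-1)}+t\delta$. That weaker statement is all that is used later. In \Cref{sec:put} one takes $t=20d$ and $\delta=\varepsilon_0=2^{-20d}$, and $(2+20d)\,2^{-20d}\le 2^{-4d-8}$ still holds. You should therefore state and prove the lemma with $t\delta$ (or $2\delta+2^{-\Omega(t)}$) in place of $\delta$.
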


\begin{proof}

	We start with a high-level proof idea: The assumption in \Cref{lem:restrn-large-dist} tells us that there are several restrictions of $f$ that are close to degree-$d$ functions (on their respective domains). We refer to the close degree-$d$ polynomials as ``local'' polynomials (here local reflects the guarantee that closeness to degree-$d$ polynomial is guaranteed only for a subset of the domain). From these restrictions, we choose a subset of structured restrictions (we refer to them as ``star'' or ``matching''). The main technical step is to use the ``local'' polynomials and glue them together to construct a ``global'' polynomial of degree-$d$ (the global polynomial is going to be defined over the whole domain) that is close to $f$.

	Now we proceed with a formal proof.
	Call a restriction $(i,j) \in [2n_{0}]^2$ (where $i<j$) to be {\em good} if $\delta_d(f^{(i,j)}) \le \delta$.
	By assumption, we have at least $2t^2$ good restrictions.

	\paragraph{}Consider the undirected graph $\cG$
	over the vertex set $[2n_0]$ with edges given by pairs corresponding to the good restrictions. By assumption, we know $\cG$ has at least $2t^{2}$ edges. We claim that at least one of the following two cases occur\footnote{This is essentially the sunflower lemma applied to the set system corresponding to the edges of the graph $\cG$.}:
	\begin{itemize}
		\item \textbf{Case 1: (Star case)} There exists a vertex $v \in [2n_{0}]$ of degree at least $t$ (i.e, a {\em star} subgraph in $\cG$ with $t+1$ vertices), or
		\item \textbf{Case 2: (Matching case)} There exist at least $t$ pairwise disjoint edges (i.e., a {\em matching} of size $t$).
	\end{itemize}
	We argue this now. If $\cG$ has a vertex of degree $\geq t$, then we are done. Otherwise, we will construct a matching in $\cG$ that has at least $t$ edges. Assume that the degree of each vertex in $\cG$ is strictly less than $t$. We construct a matching of $\cG$ in the following greedy manner:
	Start with an empty matching $M$, add an arbitrary edge to $M$, and keep adding edges to $M$ until no more edges can be added.
	Let $M$ be the partial matching we get after the above algorithm stops. If $M$ has at most $2m$ vertices, then the total number of edges is at most $(2m)\cdot t$. Since the number of edges in $\cG$ is at least $2t^{2}$, the matching $M$ has at least $t$ edges. Thus we have shown that at least one of the above two cases happen.

	\paragraph*{}The rest of the proof is divided into two cases, depending on whether $\cG$ satisfies the Star case or the Matching case. Most of our arguments will be similar for the two cases, therefore we will handle the two cases in parallel and highlight any differences as they arise.
	For both cases, let the domain of $f$ be denoted by $D:=\{0,1\}^{n_1} \times \bool^{2n_0}_{n_0}$. If $\cG$ satisfies the Star case, we can assume without loss of generality that the star is formed by the edges $(1,2),(1,3), \ldots, (1,t+1)$. Similarly, if $\cG$ satisfies the Matching case, we can assume without loss of generality that the matching contains the edges $(1,2),(3,4),\dots, (2t-1,2t)$. We will now define a notation for the edges involved that we will use in both of the cases: For $i\in [t]$, let
	\begin{align*}
		e_{i} \; = \; \begin{cases}
			              (1,i+1)\text{, in the Star case,} \\
			              (2i-1,2i)\text{, in the Matching case.}
		              \end{cases}
	\end{align*}
	We shall abuse notation slightly and denote the restricted functions using $e_i$ instead of the pair corresponding to $e_i$. We then denote the domains of these restricted functions $f^{(e_i)}$ by
	\begin{align*}
		D_{i} \; := \begin{cases}
			            \; \bool^{n_1+1} \times \bool^{[2n_0]\setminus \{1,i+1\}}_{n_0-1},   & \text{ in the Star case,}     \\ \\
			            \; \bool^{n_1+1} \times \bool^{[2n_0]\setminus \{2i-1,2i\}}_{n_0-1}, & \text{ in the Matching case.}
		            \end{cases}
	\end{align*}
	By definition, for every $i \in [t]$, edge $e_{i}$ is a good restriction. For every $i \in [t]$, let $P_{i} \in \cP_d(D_i)$ denote a degree-$d$ function that is $\delta$-close to $f^{(e_i)}$.\\

	\noindent
	We will show that there exists a ``global'' degree-$d$ function $P\in \cP_d(D)$ that becomes identical to $P_i$ under the restriction $e_i$, for all $i\in [t]$. More formally, we will prove the following lemma.

	\begin{restatable}[{\bf Global function}]{lemma}{globallemma}\label{lem:global}
		For $f:D\to G$ (where $D=\bool^{n_1} \times \bool^{2n_0}_{n_0}$), and $(e_i=(a_i,b_i))_{i\in [t]}$ denoting the edges of the star or the matching, suppose $P_i$ is a degree-$d$ function that is $\delta$-close to $f^{(e_i)}$. Then there exists a degree-$d$ function $P\in \cP_d(D)$ such that we have
		\begin{align*}
			P^{(e_i)} \equiv P_i, \quad \text{ for all } \; i \in [t].
		\end{align*}
	\end{restatable}

	\noindent
	We prove \Cref{lem:global} in \Cref{subsec:gluing}. For now, we assume \Cref{lem:global} and proceed with the proof of \Cref{lem:restrn-large-dist}. We will now show that $f$ is close to $P$. More precisely, $\delta(f,P) \leq 1/2^{t-1} + \delta$, which will finish the proof of \Cref{lem:restrn-large-dist}.\\

	\noindent
	It will be convenient to identify the domains of the restriction functions (i.e., $D_i$) with subsets of the domain of $f$ (i.e., $D$).
	For $i\in [t]$, let $e_i=(a_i,b_i)$ (i.e., $(a_i,b_i)=(1,i+1)$ in the star case, and $(2i-1,2i)$ in the matching case) and $S_i \subseteq D$ denote the subset
	\begin{align*}
        S_i:= \{(\vecx,\vecy)\in D: y_{a_i} \ne y_{b_i} \}. 
    \end{align*}
	The final queries of our low-degree test lie in the $S_i$, so it is natural that they show up in the analysis. We now make some observations regarding restriction $(a_{i}, b_{i})$ and the corresponding domain $S_{i}$.

	\begin{observation}
		For every function $h: D \to G$ and for every $(\vecx,\vecy)\in S_i$, we have
		\begin{align*}
			h(\vecx,\vecy) \; = \; h^{(a_i,b_i)}(\vecx \circ y_{a_i}, \vecy^{[2n_0]\setminus\{a_i,b_i\}}),
		\end{align*}
		since $\vecy\in S_i$ implies that $\vecy|_{[2n_0]\setminus \{a_i,b_i\}} \in \bool^{[2n_0]\setminus \{a_i,b_i\}}_{n_0-1}$ and $y_{b_i} = 1-y_{a_i}$.
		Furthermore, for a uniformly random $(\vecx,\vecy)\in S_i$, the input to $h^{(a_{i}, b_{i})}$ (i.e,. $\left(\vecx \circ y_{a_i}, \vecy^{[2n_0]\setminus\{a_i,b_i\}}\right)$) is uniformly distributed over the domain of $h^{(e_i)}$ (i.e., $D_i$).
	\end{observation}

	\noindent
	Using the above observation, we have for all $i\in [t]$:
	\begin{align*}
		\delta(f|_{S_i},P|_{S_i}) \; = \; \Pr_{(\vecx,\vecy)\in S_i}[f(\vecx,\vecy)\ne P(\vecx,\vecy)] \; = \; \Pr_{(\vecx,\vecy)\in D_i}[f^{(e_i)}(\vecx,\vecy)\ne P^{(e_i)}(\vecx,\vecy)].
	\end{align*}
	Now since $P^{(e_i)}\equiv P_i$ (\Cref{lem:global}) and $\delta(f^{(e_i)},P_i)\le \delta$ by the definition of $P_i$, we get that
	\begin{align*}
		\delta(f|_{S_i},P|_{S_i}) \; = \; \Pr_{(\vecx,\vecy)\in D_i}[f^{(e_i)}(\vecx,\vecy)\ne P(\vecx,\vecy)] \; \leq \; \delta.
	\end{align*}

	\paragraph*{}Now, let $B_i \subseteq S_i$ (called {\em bad} points) denote the subset of $S_i$ where $f$ and $P$ differ. By the above bound, we have for all $i\in [t]$:
	\begin{equation}\label{eqn:bad-set}
		\dfrac{|B_i|}{|S_i|} \; \leq \; \delta.
	\end{equation}

	Moreover, we now show below that the union of $S_i$'s cover almost all of the domain $D$. In particular, we have the below upper bound for the intersection of their complements. In the star case, we have:
	\begin{align*}
		\bigg|\bigcap_{i=1}^t \overline{S_i}\bigg| \; = \; \bigg|\{(\vecx,\vecy)\in D:y_1 = y_2 = \cdots = y_{t+1}\}\bigg| \; \leq \; 2^{n_1}\cdot \bigg|\{\vecy\in \bool^{2n_0}_{n_0}:y_1 = y_2 = \cdots = y_{t+1}\}\bigg|.
	\end{align*}

	We will now need the following probability estimates.

	\begin{claim}[{\bf Probability estimates}]\label{clm:prob-estimates}
		For positive integers $t \le n/4$, we have the following probability estimates:
		\begin{itemize}
			\item $\Pr_{\mathbf{x} \in \bool^{2n}_{n}}[x_1 = x_2 = \dots = x_{t+1}] \; \leq \; {1}/{2^{t}}$,~and
			\item $\Pr_{\mathbf{x} \in \bool^{2n}_{n}}[x_{2i-1} = x_{2i} \; \text{ for all } \; i\in [t]] \; \leq \; {1}/{2^{t-1}}$.
		\end{itemize}
	\end{claim}

	\noindent Using \Cref{clm:prob-estimates}, we get,
	\begin{equation}\label{eqn:interscn-star}
		\bigg|\bigcap_{i=1}^t \overline{S_i}\bigg| \; \leq \; 2^{n_1} \cdot \binom{2n_0}{n_0} \cdot \frac{1}{2^t}.
	\end{equation}
	Similarly, in the matching case, we have the following upper bound:
	\begin{align*}
		\bigg|\bigcap_{i=1}^t \overline{S_i}\bigg| \; & = \; \bigg|\{(\vecx,\vecy)\in D:y_{2i-1} = y_{2i}\text{~for all }i\in [t]\}\bigg|                           \\
		                                              & \leq \; 2^{n_1}\cdot \bigg|\{\vecy\in \bool^{2n_0}_{n_0}:y_{2i-1} = y_{2i}\text{~for all }i\in [t]\}\bigg|.
	\end{align*}
	Again, using \Cref{clm:prob-estimates}, we get,
	\begin{equation}\label{eqn:interscn-matching}
		\bigg|\bigcap_{i=1}^t \overline{S_i}\bigg| \; \leq \; 2^{n_1} \cdot \binom{2n_0}{n_0} \cdot \frac{1}{2^{t-1}}.
	\end{equation}
	We now combine the above bounds to conclude that $f$ and $P$ are close. That is,
	\begin{align*}
		\bigg|\{(\vecx,\vecy)\in D : f(\vecx,\vecy)\ne P(\vecx,\vecy)\}\bigg| & \le \bigg|\bigg({\bigcap_{i=1}^t} \overline{S_i}\bigg) \cup \bigcup_{i=1}^t B_i\bigg| \tag{since $B_i$ is the subset of points in $S_i$ where $f$ and $P$ differ} \\
		                                                                      & \le |D|\cdot \frac{1}{2^{t-1}} + \bigg|\bigcup_{i=1}^t B_i\bigg|\tag{using \eqref{eqn:interscn-star} or \eqref{eqn:interscn-matching}}                            \\
		                                                                      & \le |D|\cdot \frac{1}{2^{t-1}} + \delta \cdot |D|. \tag{using \eqref{eqn:bad-set} and $B_i\subseteq S_i\subseteq D$}
	\end{align*}

	Hence, $\delta(f,P) \le 1/2^{t-1} + \delta$, thus finishing the proof of~\Cref{lem:restrn-large-dist}.
\end{proof}

\subsection{Gluing of local polynomials}\label{subsec:gluing}

In this subsection, we prove the existence of the global degree-$d$ function by gluing together the local degree-$d$ polynomials that each of the restrictions are close to,  i.e., we prove \Cref{lem:global}, which we recall below.

\globallemma*

\begin{proof}[Proof of~\Cref{lem:global}]
	We will work with the domains $S_i = \{(\vecx,\vecy)\in D:y_{a_i}\ne y_{b_i}\} \subseteq D$ instead of the domains $D_i$. In particular, we have for all $i\in [t]$ that $\delta(f|_{S_i}, Q_i) \le \delta$, where $Q_i\in \cP_d(S_i)$ is a degree-$d$ function defined as
	$$Q_i(\vecx, \vecy) = P_i(\vecx\circ y_{a_i},\vecy^{[2n_0]\setminus \{a_i,b_i\}}).$$
	Using these, the goal of this lemma is to construct a degree-$d$ function $P\in \cP_d(D)$ over the domain $D$ that agrees with $Q_i$ over the subset $S_i$, for all $i\in [t]$. This suffices because then for all $i\in[t]$ and $(\vecx,\vecy)\in D_i$, we will have
	\begin{align*}
		P^{(e_i)}(\vecx,\vecy) & = P(\vecx^{[n_1]}, \vecy\circ x_{n_1+1}^{\{a_i\}}\circ (1-x_{n_1+1})^{\{b_i\}})   \\
		                       & = Q_i(\vecx^{[n_1]}, \vecy\circ x_{n_1+1}^{\{a_i\}}\circ (1-x_{n_1+1})^{\{b_i\}}) \\
		                       & = P_i(\vecx,\vecy),
	\end{align*}
	where the first equality is using the fact that $(\vecx^{[n_1]}, \vecy\circ x_{n_1+1}^{\{a_i\}}\circ (1-x_{n_1+1})^{\{b_i\}})\in S_i$ and second equality is using the definition of $Q_i$. That would finish the proof of~\Cref{lem:global}.\\

	We have the following ``pairwise consistency'' claim for the restrictions $(e_i)_{i\in [t]}$:
	\begin{claim}[{\bf Pairwise consistencies}]\label{clm:overlap} For every $i\ne j\in [t]$, we have
		$$Q_i|_{S_{i}\cap S_j} \equiv Q_j|_{S_i\cap S_j}.$$
	\end{claim}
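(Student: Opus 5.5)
We prove \Cref{clm:overlap} exactly as \Cref{clm:pairwise-last-few} was proved: the two local polynomials are close to $f$ on the overlap, the overlap is a Boolean cube times a slice, and the polynomial distance lemma forces them to coincide. Fix $i\ne j\in[t]$.

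\textbf{Step 1: size of the overlap.} We first show that $S_i\cap S_j$ is a large fraction of both $S_i$ and $S_j$.

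In the matching case the edges $e_i$ and $e_j$ are disjoint. The conditions $y_{a_i}\ne y_{b_i}$ and $y_{a_j}\ne y_{b_j}$ then leave the remaining $2n_0-4$ coordinates as a copy of $\bool^{2n_0-4}_{n_0-2}$. Hence $S_i\cap S_j$ is in affine bijection with $\bool^{n_1+2}\times\bool^{2n_0-4}_{n_0-2}$, a mixed domain of the same type.

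In the star case $e_i=(1,i+1)$ and $e_j=(1,j+1)$. The conditions force $y_{i+1}=y_{j+1}=1-y_1$. So $S_i\cap S_j$ is in affine bijection with $\bool^{n_1+1}\times\bool^{2n_0-3}_{n_0-1}$ (or the complementary weight), which is \emph{not} balanced. Its cardinality is $2^{n_1}\cdot 2\binom{2n_0-3}{n_0-1}$. Since $|S_i|=2^{n_1}\cdot 2\binom{2n_0-2}{n_0-1}$, the ratio $|S_i\cap S_j|/|S_i|$ equals $\binom{2n_0-3}{n_0-1}/\binom{2n_0-2}{n_0-1}=1/2$. The matching case gives ratio about $1/2$ by a similar count.

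\textbf{Step 2: closeness on the overlap.} By the triangle inequality, exactly as in \Cref{clm:pairwise-last-few},
\[
\delta\bigl(Q_i|_{S_i\cap S_j},Q_j|_{S_i\cap S_j}\bigr)\le \frac{|S_i|}{|S_i\cap S_j|}\delta+\frac{|S_j|}{|S_i\cap S_j|}\delta\le 4\delta\le 2^{-(2d+4)} .
\]
Both restrictions remain degree-$d$ under these affine identifications, because substituting $y_b:=1-y_a$ or $y_b:=y_a$ does not raise degree.

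\textbf{Step 3: distance lemma (main obstacle).} It remains to apply a distance lemma to the difference $Q_i-Q_j$ on the overlap, which needs a Schwartz--Zippel bound valid for every $n_0\ge 4(d+2)$, not only for $d\le n^\alpha$.

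\begin{itemize}
\item \textbf{Tool: \Cref{rem:sz-lem}.} A nonzero degree-$d$ function on $\bool^{m_1}\times\bool^{m}_{k}$ is nonzero on at least a $\binom{m-2d}{k-d}/\binom{m}{k}$ fraction of points.
\item \textbf{Matching case.} Here $m=2n_0-4$ and $k=n_0-2$, so the fraction is roughly $2^{-2d}$; more precisely it is at least $2^{-2d}(1-O(d/n_0))\ge 2^{-(2d+1)}$ using $n_0\ge4(d+2)$.
\item \textbf{Star case.} Here $m=2n_0-3$ and $k=n_0-1$. A direct estimate of $\binom{2n_0-3-2d}{n_0-1-d}/\binom{2n_0-3}{n_0-1}$ again gives at least about $2^{-(2d+1)}$.
\end{itemize}

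In both cases $4\delta\le 2^{-(2d+4)}$ is strictly below this threshold, so $Q_i-Q_j$ vanishes on $S_i\cap S_j$. The delicate points are verifying this binomial estimate with explicit constants for all admissible $n_0$, and handling the slightly imbalanced slice in the star case. If the estimate proves too tight, the fallback is to embed the overlap further into a cube via an additional matching restriction and use \Cref{lem:SZ} together with \Cref{prop:zero-error}.
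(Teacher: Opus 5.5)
Your strategy is the paper's. The triangle inequality with $|S_i\cap S_j|\ge |S_i|/2$ bounds the disagreement on the overlap by $O(\delta)$, and the unconditional bound of \Cref{rem:sz-lem} then forces $Q_i\equiv Q_j$ there.

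The one step that fails as written is the star-case identification. $S_i\cap S_j$ is \emph{not} affinely $\bool^{n_1+1}\times\bool^{2n_0-3}_{n_0-1}$: on it, the coordinates outside $\{1,i+1,j+1\}$ have weight $n_0-2+y_1$, which depends on $y_1$. So the overlap is not a cube times a single slice, and \Cref{rem:sz-lem} cannot be applied to it in that form.

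The paper repairs this by splitting the overlap into $E=\{y_1=1\}$ and $F=\{y_1=0\}$. Each is exactly half of the overlap, and each is a cube times one of the two middle slices of length $2n_0-3$. The disagreement on each piece is then at most $8\delta\le 2^{-(2d+3)}$, which is below the $2^{-2d}$ guaranteed by \Cref{rem:sz-lem}.

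Alternatively, since $y_{j+1}=y_{i+1}=1-y_1$ on the overlap, deleting $y_1$ and $y_{j+1}$ identifies it affinely with the balanced domain $\bool^{n_1}\times\bool^{2n_0-2}_{n_0-1}$. Your $4\delta$ bound then applies directly.

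In every case the binomial ratio in \Cref{rem:sz-lem} is at least $4^{-d}$ exactly, because each pair of factors is at least $1/4$. So no fallback is needed. Your matching-case argument on the whole product $\bool^{n_1+2}\times\bool^{2n_0-4}_{n_0-2}$ is valid and avoids the paper's four-piece split.
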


	We defer the proof of the above pairwise consistency claim to later. For now, we assume that the claim holds and show the following:

	\begin{claim}[{\bf Global polynomial from pairwise consistencies}]\label{clm:global-2}
		There exists a degree-$d$ function $P\in \cP_d(D)$
		such that for all restrictions $i\in [t]$, we have $$P|_{S_i} \equiv Q_i.$$
	\end{claim}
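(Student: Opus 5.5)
The proof of \Cref{clm:global-2} will build $P$ incrementally, gluing in the $Q_i$ one at a time. After relabeling coordinates, all edge endpoints $a_i,b_i$ can be assumed to lie among the first few (at most $2t\le n_0/2$) coordinates of the slice. In that range the Anstee--R\'onyai--Sali basis of \Cref{prop:monomial-basis} behaves like the standard cube basis under the substitution $y_{b}=1-y_{a}$: after multilinearization, a basis monomial of $D$ becomes a combination of basis monomials of $D_i$ of no larger degree.

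\textbf{Step 1 (uniqueness).} Any two degree-$d$ functions on $D$ that agree on $\bigcup_i S_i$ are equal. Their difference is degree-$d$ and supported on $\bigcap_i\overline{S_i}$, which by \Cref{clm:prob-estimates} has density at most $2^{-(t-1)}\le 2^{-(d+1)}$. This contradicts \Cref{lem:dist}, since $n_0$ is large relative to $d$.

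\textbf{Step 2 (base case).} Set $P^{(1)}(\vecx,\vecy):=P_1(\vecx\circ y_{a_1},\vecy|_{[2n_0]\setminus\{a_1,b_1\}})$, reading the right-hand side as a polynomial in the $\vecy$-variables. This is a degree-$d$ function on $D$ with $P^{(1)}|_{S_1}\equiv Q_1$.

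\textbf{Step 3 (inductive step).} Suppose $P^{(k)}\in\cP_d(D)$ agrees with $Q_1,\dots,Q_k$ on $S_1,\dots,S_k$. Let $h:=Q_{k+1}-P^{(k)}|_{S_{k+1}}$, a degree-$d$ function on $S_{k+1}\cong D_{k+1}$. By \Cref{clm:overlap}, $h$ vanishes on $S_{k+1}\cap S_j$ for every $j\le k$. In the star case, on $S_{k+1}$ we have $y_1=1-y_{k+2}$, so $S_{k+1}\cap S_j$ is the locus $y_{j+1}= y_{k+2}$. In the matching case the locus is $y_{2j-1}\neq y_{2j}$ inside $S_{k+1}$. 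The aim is to write
\[
h \;=\; \prod_{j\le k}\ell_j\cdot h',
\]
where $\ell_j$ is the affine indicator $y_{j+1}-y_{k+2}$ (star case, which equals $\pm1$ or $0$) or $1-y_{2j-1}-y_{2j}+2y_{2j-1}y_{2j}$, i.e.\ the indicator of $y_{2j-1}=y_{2j}$ (matching case). Here $h'$ should be a degree-$(d-k)$ function on $D_{k+1}$, taken as $0$ if $k>d$.

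To get this factorization I would expand $h$ in the product basis of $D_{k+1}$ given by \Cref{lem:cart}. I would then pass to the variables $z_j:=y_{j+1}\oplus y_{k+2}$ (respectively the parity of each pair), exploiting that the coordinates involved lie at the front where the ballot constraint is inactive. This makes $D_{k+1}$ look locally like a cube in these variables, so that the standard cube argument applies: a polynomial vanishing whenever some $z_j=0$ is divisible by $\prod z_j$, with degree drop $k$.

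Given the factorization, define the correction $R$ on $D$ as $\prod_{j\le k}\widetilde{\ell}_j\cdot \widetilde{h'}$. Here $\widetilde{\ell}_j$ is an affine or indicator expression that vanishes on $S_j$ and agrees with $\ell_j$ on $S_{k+1}$; in the star case take $y_{1}+y_{j+1}-1$, up to sign. The factor $\widetilde{h'}$ is the lift of $h'$ as in Step 2. Then $R$ has degree at most $d$ in the star case, vanishes on $S_1,\dots,S_k$, and equals $h$ on $S_{k+1}$, so $P^{(k+1)}:=P^{(k)}+R$ extends the induction. Once $k\ge d+1$ the factorization forces $h\equiv 0$, so no further corrections are needed and $P:=P^{(t)}$ works.

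\textbf{Main obstacle.} The hard step is the factorization of $h$ with exact degree control on the slice part. It is not automatic that a degree-$d$ function on $D_{k+1}$ vanishing on these loci is divisible with a degree-$(d-k)$ cofactor. This is precisely where I expect to need the restriction-compatibility of the ARS basis (the analogue of \Cref{lem:restrn-basis}), reducing to the cube statement of \cite{BSS,ASS}.

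In the matching case each $\ell_j$ is quadratic, so the naive degree count fails. There I would instead glue on the cube $\bool^{n_1+t}\times(\text{rest})$ obtained by applying all $t$ matching edges simultaneously, and lift back using \Cref{prop:zero-error}-style arguments.
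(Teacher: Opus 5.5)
Your incremental correction scheme is sound in outline. In the star case the correction $R=\prod_{j\le k}(y_1+y_{j+1}-1)\cdot\widetilde{h'}$ does have the required properties. But everything hinges on the factorization $h=\prod_{j\le k}\ell_j\cdot h'$ on $D_{k+1}$ with $\deg h'\le d-k$, which you leave unproved, and that factorization is the whole algebraic content of the claim.

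The route you sketch for it does not work as written. The variable $z_j=y_{j+1}\oplus y_{k+2}$ is quadratic in the original variables, so a degree drop of $k$ measured in the $z_j$ gives no control on the degree of $h'$ in $(\vecx,\vecy)$. What is needed is an affine, triangular change of basis. In the product basis of $D_{k+1}$ from \Cref{lem:cart}, replace each $y_{j+1}$ ($j\le k$) by $y_{j+1}+y_1-1$. This has unit leading coefficients, does not raise degrees, and keeps the lower-order terms in the basis by downward closure, since $y_1$ is a cube coordinate of $D_{k+1}$. Then restrict to $y_{j+1}=1-y_1$, treating $y_1=0$ and $y_1=1$ separately because the restricted domain is a union of two different slices. \Cref{lem:restrn-basis} shows that the surviving basis functions still form part of a basis there. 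This kills every coefficient whose index set misses $j+1$, and iterating over $j\le k$ gives the factorization.

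This is precisely the ingredient the paper uses in \Cref{clm:rijrji}: the split $R_{i,j}=y_1R^{(1)}_{i,j}+R^{(0)}_{i,j}$ together with \Cref{lem:restrn-basis} turns agreement of functions into the formal identity $R_{i,j}=R_{j,i}$. The paper then invokes the cube gluing result \Cref{clm:final-agreement} on formal polynomials, rather than correcting one edge at a time.

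The matching case is where the proposal actually breaks. Your premise that $\ell_j$ has to be quadratic is mistaken. The form $y_{2j-1}+y_{2j}-1$ is linear, vanishes exactly on $S_j$, and equals $\pm1$ off $S_j$, so it plays the same role as $y_1+y_{j+1}-1$ in the star case. It is the slice analogue of the variables $x_{2i}+x_{2i-1}-1$ in the proof of \Cref{clm:final-agreement}. With it, the same triangular change of basis goes through: moving an element of an index set of size at most $d$ between front coordinates preserves the ballot property, by the argument of Case 2 in the proof of \Cref{lem:restrn-basis}.

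The fallback you propose instead does not work. On $\bigcap_{i\le t}S_i$ all the $Q_i$ already coincide by \Cref{clm:overlap}, so gluing there is vacuous and says nothing about $P$ on $S_i\setminus\bigcap_{j}S_j$. Meanwhile \Cref{prop:zero-error} only certifies the degree of a function already defined on all of $D$; it provides no way to extend one.

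Finally, your Step 1 is not needed for existence. As stated, it also applies \Cref{lem:dist} outside its range $d\le n^{\alpha}$, which the hypothesis $n_0\ge 4(d+2)$ does not guarantee.
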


	\begin{proof}
		Note that if we are able to find such a ``global'' degree-$d$ function $P$, then it implies pairwise consistencies: that is for all $i\ne j\in [t]$,
		$$Q_i|_{S_i\cap S_j} \equiv P|_{S_i\cap S_j} \equiv Q_j|_{S_i\cap S_j}.$$ The high-level idea of~\Cref{clm:global-2} is to show the converse, which is much more involved. The idea is to express all the functions in a fixed monomial basis and construct (the coefficients of) $P$ by solving a linear system using the coefficients of the restricted functions; the pairwise consistency claim guarantees a solution to this linear system. In order to proceed with the formal proof, we need to set up some notation.\\

		\noindent
		Let $\cB_1 = \cB_1(\vecx) = \{\prod_{i\in S} x_i:S\subseteq [n_1]\}$ and $\cB_2 = \cB^*_{n_0}(\vecy)$ denote the downward-closed, graded monomial bases for $\bool^{n_1}$ and $\bool^{2n_0}_{n_0}$ given by \Cref{prop:monomial-basis} respectively. Using \Cref{lem:cart} we know that
		\begin{equation}\label{eqn:monbasis}
			\cB(\vecx,\vecy) = \{m_1(\vecx)\cdot m_2(\vecy):m_1\in \cB_1 \text{~and~}m_2\in \cB_2\}
		\end{equation} is a downward-closed, graded monomial basis for $D=\bool^{n_1} \times \bool^{2n_0}_{n_0}$. We use the shorthand notation $\cB = \cB_1 \times \cB_2$ to refer to the above equation.\\

		\noindent
		For $i\in [t]$,
		let $R_i(\vecx,\vecy)$ be a degree-$d$ polynomial over $G$ agreeing with $Q_i$ over its domain $S_{i}$ that only contains monomials from $\cB^{\le d}$
		(i.e., the monomials not in $\cB(\vecx,\vecy)^{\le d}$ have coefficient zero in $R_i$).
		For $i\ne j\in [t]$, let $R_{i,j}(\vecx,\vecy)$ be the degree-$d$ polynomial over $G$ obtained by substituting $y_{b_j} = 1-y_{a_j}$ and $y_{b_i} = 1-y_{a_i}$ in the polynomial $R_i$ (where we recall that $e_i=(a_i,b_i)$ are edges corresponding to a star or a matching) and simplifying/expanding out the resulting polynomial by multilinearizing the monomials since each variable only takes Boolean values; in particular, $R_{i,j}$ does not have the variables $y_{b_i}$ and $y_{b_j}$ appearing in it.
		Symbolically, we have
		\begin{align*}
			R_{i,j} \; = \; R_i|_{y_{b_j} = 1-y_{a_j}, \; y_{b_i} = 1-y_{a_i}}.
		\end{align*}
		For an arbitrary $i\in [t]$, from the definition of $R_{i,j}$, it is unclear whether the polynomials $R_{i,j}$ and $R_{j,i}$ are equal or not, because they are expressed in a particular basis $\mathcal{B}$. However, we now show that they are indeed equal.

		\begin{claim}[{\bf Pairwise consistency of local polynomials}]\label{clm:rijrji}
			For all $i\ne j \in [t]$,
			$$R_{i,j} = R_{j,i}.$$
		\end{claim}
		\begin{proof}
			We divide our argument into two cases --- the star case and the matching case.

			\paragraph{The star case:}Suppose the graph $\cG$ has a star with $(t+1)$ vertices, where the edges are $(1,2), \ldots, (1,t+1)$. Then for arbitrary $i\ne j\in [t]$ let us decompose the polynomial $R_{i,j}$ as follows:
			\begin{equation}\label{eqn:rij}
				R_{i,j} \; = \; y_1 \cdot R_{i,j}^{(1)} \, + \, R_{i,j}^{(0)},
			\end{equation}
			where $R_{i,j}^{(1)}$ and $R_{i,j}^{(0)}$ are polynomials in the variables $\vecx$ and $\vecy|_{[2n_0]\setminus {\{1,i+1,j+1\}}}$. Here $R_{i,j}^{(1)}$ is of degree at most $(d-1)$ and $R_{i,j}^{(0)}$ is of degree at most $d$. Pairwise consistency (\Cref{clm:overlap}) implies that whenever $y_{1} = 1-y_{i+1} = 1-y_{j+1}$ for some point $(\vecx,\vecy)\in D$, we must have
			\begin{equation}\label{eqn:qij}
				Q_i(\vecx,\vecy) = Q_j(\vecx,\vecy).
			\end{equation}
			Now observe that if we set $y_1 = 1-y_{i+1} = 1-y_{j+1} =0$ for $(\vecx,\vecy)\in D$, the domain of the remaining coordinates is $\bool^{n_1} \times \bool^{[2n_0]\setminus \{1,i+1,j+1\}}_{n_0-2}$. Hence, using \eqref{eqn:rij} and \eqref{eqn:qij}, we get for all $(\vecx,\vecy)\in \bool^{n_1} \times \bool^{[2n_0]\setminus \{1,i+1,j+1\}}_{n_0-2}$ that:
			\begin{align}\label{eqn:r0}R_{i,j}^{(0)}(\vecx,\vecy) = R_{j,i}^{(0)}(\vecx,\vecy).\end{align} Similarly for all $(\vecx,\vecy)\in \bool^{n_1} \times \bool^{[2n_0]\setminus \{1,i+1,j+1\}}_{n_0-1}$ (this corresponds to setting $y_1=1-y_{i+1}=1-y_{j+1}=1$) we obtain:
			\begin{align}\label{eqn:r1}
				R_{i,j}^{(1)}(\vecx,\vecy) + R_{i,j}^{(0)}(\vecx,\vecy) = R_{j,i}^{(1)}(\vecx,\vecy) + R_{j,i}^{(0)}(\vecx,\vecy).
			\end{align}

			\paragraph*{}We now use the properties of our monomial basis $\cB$ from~\eqref{eqn:monbasis} to conclude that the equations~\eqref{eqn:r0} and~\eqref{eqn:r1} imply that the corresponding polynomials on both sides of the two equations are identical. More specifically, let $\cB_2'=\cB^*_{n_0-2}(\vecy|_{[2n_0]\setminus \{1,i+1,j+1\}})$ be the monomial basis for $\bool^{[2n_0]\setminus \{1,i+1,j+1\}}_{n_0-2}$ given by~\Cref{prop:monomial-basis}.\\

			\noindent
			Since we are dealing with restrictions of slices obtained by fixing some coordinates, it will be useful to derive monomial bases of such restrictions using monomial bases of the original domain. In particular, the following lemma (which we prove later), roughly states that under appropriate restrictions, the basis monomials continue to be in the basis of the restricted domain.\\

			\begin{lemma}[{\bf Monomial bases of restrictions}]\label{lem:restrn-basis}
				Let $n\ge 1$ be an integer and $\cB=\cB^*_{\lfloor n/2\rfloor}(\vecx)$ denote the monomial basis for $\bool^n_{{\lfloor n/2 \rfloor}}$ given by~\Cref{prop:monomial-basis}. Let $1\le d\le n/2$ be an integer, $S\subseteq [n-2d]$ be a subset,  so that we have $d\le k\le |\overline{S}|-d$ for some integer $k$ (where $\overline{S}:=[n]\setminus S$).

				Let $\cB'=\cB^*_k(\vecx|_{\overline{S}})$ denote the monomial basis for $\bool^{\overline{S}}_{k}$ given by~\Cref{prop:monomial-basis} obtained by ordering the coordinates of $\overline{S}$ in increasing order. Then for all monomials $m(\vecx)\in \cB(\vecx)^{\le d}$ that do not contain any variable from $S$, it holds that
				\begin{align*}
					m(\vecx) \in \cB'(\vecx|_{\overline{S}})^{\le d}.
				\end{align*}
			\end{lemma}

			\noindent

			Now we apply~\Cref{lem:restrn-basis} (for the subset $S=\{1,i+1,j+1\}$) to claim that all the monomials contained in $R_{i,j}^{(0)}$ and $R_{j,i}^{(0)}$ are contained in $\cB_1\times \cB_2'^{\le d} \subseteq \cB_1 \times \cB'_2$. Now, since $\cB_1\times \cB_2'$ is a monomial basis for $\bool^{n_1} \times \bool^{[2n_0]\setminus \{1,i+1,j+1\}}_{n_0-2}$, using~\eqref{eqn:r0}, we conclude that indeed the polynomials $R_{i,j}^{(0)}$ and $R_{j,i}^{(0)}$ are identical to each other:
			$$R_{i,j}^{(0)} = R_{j,i}^{(0)}.$$

			We now note that all the monomials in $R_{i,j}^{(1)}$ and $R_{j,i}^{(1)}$, because of the downward-closed nature of the monomial basis $\cB$, are contained in $\cB^{\le d}$. Now, repeating the same argument as above, we conclude that:
			\begin{align*}
				R_{i,j}^{(1)} \, + \, R_{i,j}^{(0)} \; = \; R_{j,i}^{(1)} \, +  \, R_{j,i}^{(0)}.
			\end{align*}
			Putting everything together, we get
			\begin{align*}
				R_{i,j}  = y_1\cdot R_{i,j}^{(1)} + R_{i,j}^{(0)} = y_1\cdot R_{j,i}^{(1)} + R_{j,i}^{(0)} = R_{j,i}.
			\end{align*}
			This concludes the argument that in the star case, for every $i \neq j \in [t]$, the polynomials $R_{i,j}$ and $R_{j,i}$ are identical.

			\paragraph{The matching case:} Now, suppose we are in the matching case. We perform a similar argument as the star case by setting some variables to zeros and ones to conclude again that $R_{i,j} = R_{j,i}$ for all $i\ne j\in [t]$. Let us decompose the polynomial $R_{i,j}$ as follows:
			$$R_{i,j} = y_{2i-1} y_{2j-1} \cdot R^{(3)}_{i,j} + y_{2i-1} \cdot R_{i,j}^{(2)} + y_{2j-1} \cdot R_{i,j}^{(1)} + R_{i,j}^{(0)},$$ where $R_{i,j}^{(3)},R_{i,j}^{(2)},R_{i,j}^{(1)}$ and $R_{i,j}^{(0)}$ are polynomials in $\vecx$ and $\vecy|_{[2n_0]\setminus \{2i-1,2i,2j-1,2j\}}$ of degree at most $d-2$, at most $d-1$, at most $d-1$ and at most $d$ respectively. By the downward-closed property of $\cB$, all the monomials appearing in $R_{i,j}^{(3)},R_{i,j}^{(2)},R_{i,j}^{(1)}$ or $R_{i,j}^{(0)}$ are contained in $\cB^{\le d}$.

			Pairwise consistency of $Q_i$ and $Q_j$ implies that for every choice of $(\vecx,\vecy)\in D$ such that $y_{2i-1} = 1-y_{2i}$ and $y_{2j-1} = 1-y_{2j}$, we have
			\begin{align}\label{eqn:qijmatching} Q_i(\vecx,\vecy) = Q_j(\vecx,\vecy). \end{align}
			In particular, setting $y_{2i-1}=y_{2j-1}=1-y_{2i} =1-y_{2j}= 0$ and using the fact that $R_{i,j}$ and $R_{j,i}$  compute $Q_i$ and $Q_j$ under this restriction, we obtain for all $(\vecx,\vecy)\in \bool^{n_1} \times \bool^{[2n_0]\setminus \{2i-1,2i,2j-1,2j\}}_{n_0-2}$ that:
			\begin{align}\label{rij-equal}R_{i,j}^{(0)}(\vecx,\vecy) = R_{j,i}^{(0)}(\vecx,\vecy).\end{align}

			Letting
			$$\cB_2'=\cB{^*_{n_0-2}}(\vecy|_{[2n_0]\setminus \{2i-1,2i,2j-1,2j\}})$$
			denote the monomial basis for $\bool^{[2n_0]\setminus \{2i-1,2i,2j-1,2j\}}_{n_0-2}$ given by~\Cref{prop:monomial-basis}, we note that the monomials appearing in $R_{i,j}^{(0)}$ and $R_{j,i}^{(0)}$ are all contained in $\cB_1 \times \cB_2'^{\le d} \subseteq  \cB_1\times \cB_2'$ by applying~\Cref{lem:restrn-basis} (for the subset $S=\{2i-1,2i,2j-1,2j\}$). However, since $\cB_1\times \cB_2'$ is a monomial basis for $\bool^{n_1} \times \bool^{[2n_0] \setminus \{2i-1,2i,2j-1,2j\}}_{n_0-2}$, from~\eqref{eqn:qijmatching}, we get the following equality of polynomials: $$R_{i,j}^{(0)} = R_{j,i}^{(0)}.$$ By repeating the same argument for the other three assignments $(y_{2i-1},y_{2j-1})$ to $(0,1),(1,0)$ and $(1,1)$, we get the following additional relations respectively:
			\begin{align*}
				R_{i,j}^{(1)} + R_{i,j}^{(0)}                                 & =  R_{j,i}^{(1)} + R_{j,i}^{(0)}                                 \\
				R_{i,j}^{(2)} + R_{i,j}^{(0)}                                 & =  R_{j,i}^{(2)} + R_{j,i}^{(0)}                                 \\
				R_{i,j}^{(3)} + R_{i,j}^{(2)} + R_{i,j}^{(1)} + R_{i,j}^{(0)} & = R_{j,i}^{(3)} + R_{j,i}^{(2)} + R_{j,i}^{(1)} + R_{j,i}^{(0)}.
			\end{align*}
			Hence, overall we get for all $i\ne j\in [t]$ that $R_{i,j}^{(i')}=R_{j,i}^{(i')}$ for all $i'\in \{0,1,2,3\}$, and thus $$R_{i,j} = R_{j,i}.$$ This finishes the proof of~\Cref{clm:rijrji}.
		\end{proof}
		We now use the following claim about multilinear polynomials to derive the global degree-$d$ polynomial $P$ that agrees with every $R_i$ under the corresponding restriction $e_i$ (and so it agrees with $Q_i$ over $S_i$), for $i\in [t]$.
		\begin{claim}[{\bf Global polynomial from restrictions},~{\cite[Claim 3.8]{ASS}}]\label{clm:final-agreement}
			Let $n\ge 2t$ and $t\ge d+2\ge 3$ and $(e_i=(a_i,b_i))_{i\in [t]}$ (where $a_i < b_i \in [n]$) denote the edges of a star or a matching of size $t$ over vertices $[n]$. Suppose there exist degree-$d$ polynomials $(R_i(z_1,\dots, z_{n}))_{i\in [t]}$ over $G$ such that for all $i\ne j\in [t]$, we have that $$R_{i,j} = R_{j,i},$$ where $R_{i,j} = R_i|_{z_{b_i} = 1-z_{a_i}, z_{b_j} = 1-z_{a_j}}$ denotes the polynomial over $G$ obtained by substituting $z_{b_i} = 1-z_{a_i}$ and $z_{b_j} = 1-z_{a_j}$ in the polynomial $R_i(z_1,\dots, z_{n})$. Then, there exists a degree-$d$ polynomial $P(z_1,\dots, z_{n})$ such that for all $i\in [t]$, if we substitute $z_{b_i} = 1-z_{a_i}$ in both $P$ and $R_i$, we get the same polynomial, i.e., $$P|_{z_{b_i} = 1-z_{a_i}} = R_i|_{z_{b_i}=1-z_{a_i}}.$$
		\end{claim}
		The above claim is proved by~\cite{BSS,ASS} (for a more general case of functions defined over grids); we reproduce the proof in~\Cref{app:large-final-agreement} for the sake of completeness.
		Using the above claim for the polynomials $(R_i(\vecx,\vecy))_{i\in [t]}$ and $n=n_1+2n_0$, we conclude that there exists a ``global'' degree-$d$ polynomial $P$ over the variables $\vecx,\vecy$ such that $P(\vecx,\vecy) = R_i(\vecx,\vecy)$ for all $(\vecx,\vecy)\in D$ such that $y_{b_i} = 1-y_{a_i}$ (Note that the polynomial $P$ guaranteed by~\Cref{clm:final-agreement} can potentially have monomials outside the monomial basis for $D$, but that is not a concern for us). Since $R_i$ computes $Q_i$ over $S_i$, this gives us that $P|_{S_i} \equiv Q_i$ which finishes the proof of~\Cref{clm:global-2}.
	\end{proof}
	This also finishes the proof of~\Cref{lem:global} since we have found a $P\in \cP_d(D)$ agreeing with $Q_i$ over its domain for all $i\in [t]$.
\end{proof}

We now prove~\Cref{lem:restrn-basis} which roughly says that under appropriate restrictions, the basis monomials remain to be in the basis of the restricted domain.

\begin{proof}[Proof of~\Cref{lem:restrn-basis}]
	Let $T \subseteq \overline{S}$ denote the set of variables in $m(\vecx)\in \cB(\vecx)^{\le d}$. By~\Cref{prop:monomial-basis}, note that for all $i\in [n]$, we have
	\begin{align}\label{eqn:valid-t}|T\cap \{i,\dots, n\}| \le |([n]\setminus T)\cap \{i,\dots, n\}|.\end{align} We will show that $|T|\le \min\{k,|\overline{S}|-k\}$ and that $T$ has the ballot property (w.r.t.~$\overline{S}$). The first condition holds by our assumption on the range of $k$ (and the fact that $|T| \le d$). To show the second condition, let $i\in [n]$ be arbitrary and $U = \overline{S} \cap \{i,\dots, n\}$. We will show that
	\begin{align}\label{eqn:valid-u}|T\cap U| \le |(\overline{S} \setminus T)\cap U|\end{align} to finish the proof. We have two cases:
	\paragraph {\bf Case 1: $i > n-2d$.} Since $S\subseteq [n-2d]$, we have that $U = \{i,\dots, n\}$ and $\overline{S} \cap U = [n] \cap \{i,\dots, n\}$. Thus,~\eqref{eqn:valid-t} implies~\eqref{eqn:valid-u}.

	\paragraph {\bf Case 2: $i \le n-2d$.} We have $|T\cap U| \le |T| \le d$ since $T$ corresponds to a degree-$d$ monomial. On the other hand, $|(\overline{S} \setminus T)\cap U| \ge |\overline{S} \cap U| - |T\cap U| \ge 2d-d=d$, where we are using the fact that $\overline{S} \cap U \supseteq \{n-2d+1,\dots,n\}$ has size at least $2d$. This proves~\eqref{eqn:valid-u}.
\end{proof}

We now give a proof of the pairwise consistency claim below.

\begin{proof}[Proof of~\Cref{clm:overlap}]
	We will first handle the star case. Without loss of generality, let $i=1$ and $j=2$, i.e., the two restrictions correspond to the edges $(1,2)$ and $(1,3)$. Now, note that $S_i\cap S_j = E\cup F$, where we define
	$$E=\{(\vecx,\vecy)\in D:y_1=1,y_2=y_3=0\}\text{~and~}F=\{(\vecx,\vecy)\in D:y_1=0,y_2=y_3=1\}.$$
	We will show that $Q_1|_E \equiv Q_2|_E$ and $Q_1|_F \equiv Q_2|_F$ and thus conclude the proof in the star case. We only give the proof for $E$ as the proof for $F$ is analogous. Observe that $E$ and $F$ are disjoint sets of size equal to $2^{n_1}\cdot \binom{2n_0-3}{n_0-1}$. Hence we have the ratio $$\frac{|S_1\cap S_2|}{|S_1|} = \frac{2\cdot 2^{n_1}\cdot \binom{2n_0-3}{n_0-1}}{2\cdot 2^{n_1}\cdot \binom{2n_0-2}{n_0-1}} = \frac{1}{2}.$$ We now note that
	\begin{align*}
		\delta(Q_1|_E, Q_2|_E) \; & \le \; \dfrac{|S_1\cap S_2|}{|E|}\cdot \delta(Q_1|_{S_1\cap S_2},Q_2|_{S_1\cap S_2}) \quad \tag{using $E\subseteq S_1\cap S_2$}                \\
		                          & \le 2\cdot (\delta(f|_{S_1\cap S_2}, \; Q_1|_{S_1\cap S_2}) + \delta(f|_{S_1\cap S_2}, \; Q_2|_{S_1\cap S_2})) \quad \tag{triangle inequality} \\
		                          & \leq \; 2\cdot {\dfrac{\max\{|S_1|,|S_2|\} }{|S_1\cap S_2|}}\cdot (\delta(f|_{S_1},Q_1) \, + \, \delta(f|_{S_2},Q_2))                          \\
		                          & \leq \; 8 \cdot \delta \quad \tag{using $\delta(f|_{S_i},Q_i)\le \delta$}                                                                      \\
		                          & \le \frac{1}{2^{2d+2}}. \quad \tag{using $\delta\le 1/2^{2d+6}$}
	\end{align*}
	On the other hand, the polynomial distance lemma (in particular, \Cref{rem:sz-lem}) applied by treating $E$ as an appropriate Cartesian product of a Boolean cube and slice
	implies that if $Q_1|_E\not\equiv Q_2|_E$, $$\delta(Q_1|_E,Q_2|_E) \ge \frac{1}{2^{2d}} > \frac{1}{2^{2d+2}}.$$ Thus, we conclude that $Q_1|_E \equiv Q_2|_E$ (and similarly, $Q_1|_{F} \equiv Q_2|_{F}$) and this completes the proof of the pairwise consistency claim (\Cref{clm:overlap}) in the star case.

	We now turn to the matching case. Without loss of generality, again let $i=1$ and $j=2$, i.e., the two restrictions correspond to the edges $(1,2)$ and $(3,4)$. Now, we have $$S_1 \cap S_2 = E_1 \cup E_2 \cup E_3 \cup E_4,$$ where we define
	\begin{align*}
		E_1 = \{(\vecx,\vecy)\in D:y_1=1-y_2=0,y_3=1-y_4=0\}, \\
		E_2 = \{(\vecx,\vecy)\in D:y_1=1-y_2=0,y_3=1-y_4=1\}, \\
		E_3 = \{(\vecx,\vecy)\in D:y_1=1-y_2=1,y_3=1-y_4=0\}, \\
		E_4 = \{(\vecx,\vecy)\in D:y_1=1-y_2=1,y_3=1-y_4=1\}.
	\end{align*}
	Similar to the analysis in the star case, we have
	\begin{align*}
		\delta(Q_1|_{E_1},Q_2|_{E_1}) \; & \leq \; \frac{|S_1\cap S_2|}{|E_1|} \cdot \delta(Q_1|_{S_1\cap S_2},Q_2|_{S_1\cap S_2})                                   \\
		                                 & \leq \; 4 \cdot (\delta(f|_{S_1\cap S_2}, \; Q_1|_{S_1\cap S_2}) \, + \, \delta(f|_{S_1\cap S_2}, \; Q_2|_{S_1\cap S_2})) \\
		                                 & \leq \; 4\cdot \dfrac{\max\{|S_1|,|S_2|\}}{|S_1\cap S_2|}\cdot (\delta(f|_{S_1},Q_1)+\delta(f|_{S_2},Q_2))                \\
		                                 & \leq \; 16\cdot \delta \; \leq \; \dfrac{1}{2^{2d+2}}.
	\end{align*}
	However, the distance lemma (\Cref{lem:dist}) applied by treating $E_1$ as an appropriate Cartesian product of a Boolean cube and slice implies that if $Q_1|_{E_1}\not\equiv Q_2|_{E_1}$, then $$\delta(Q_1|_{E_1},Q_2|_{E_1}) \ge \frac{1}{2^{2d}} > \frac{1}{2^{2d+2}}.$$ Thus, we conclude that $Q_1|_{E_1} \equiv Q_2|_{E_1}$ (and by a similar argument, we obtain that $Q_1|_{E_j} \equiv Q_2|_{E_j}$ for all $j\in [4]$) and this completes the proof of the pairwise consistency claim (\Cref{clm:overlap}) in the matching case as well.
\end{proof}

This fully finishes the proof of the main lemma of this subsection, i.e.,~\Cref{lem:global} (and thus that of~\Cref{lem:restrn-large-dist} from~\Cref{subsec:global-poly}).

\section{Putting Everything Together: Balanced Slice}\label{sec:put}

Finally, we complete the analysis of our low-degree test over the slice (\Cref{thm:ldt-slices}).

\begin{proof}[Proof of~\Cref{thm:ldt-slices}]
	We recall our low-degree test $\cT_\text{slice}$ described in~\Cref{sec:overview} below:

	\begin{enumerate}
		\item Choose a perfect matching $M = \{(i_1,j_1),\dots, (i_{n/2},j_{n/2})\}$ over vertices $[n]$ uniformly at random.
		\item Obtain oracle access to the restricted function $f^{(M)}:\bool^{\{i_1,\dots, i_{n/2}\}} \to G$ defined by setting $x_{j_{i'}} = 1-x_{i_{i'}}$ for all $i'\in [n/2]$ (also defined in~\Cref{sec:prelims}).
		\item Check if $f^{(M)}$ is degree-$d$ using the test $\cT_\text{grid}$ given by~\Cref{thm:grids}.
	\end{enumerate}

	\paragraph*{Completeness:}If $f:\bool^{n}_{n/2}\rightarrow G$ is degree-$d$, then $f^{(M)}$ is degree-$d$ (by~\Cref{obs:rest}) for all $M$ and hence $\mathcal{T}_{\mathrm{slice}}$ always accepts.

	\paragraph*{Soundness:}For soundness of $\mathcal{T}_{\mathrm{slice}}$, it suffices to show that if $f$ is $\varepsilon$-far from degree-$d$ functions on the balanced slice, then $f^{(M)}$ is $\Omega_{d}(\varepsilon)$-far from degree-$d$ functions on the grid $\bool^{n/2}$ in expectation. Then, we can use the soundness guarantee of $\cT_\text{grid}$ to get the desired soundness guarantee of $\cTslice$. In particular, for suitable factors of $d$ hiding under the $\Omega_{d}(\cdot)$ notation, we have that
	\begin{align}\label{eqn:reject-prob}
		\Pr[\cTslice\text{~rejects~}f] & \geq \; \Pr_M\bigg[\cTgrid\text{~rejects~}f^{(M)} \bigg] \nonumber \tag{where the probability is both over $M$ and $\cTgrid$} \\
		                               & \ge \E_M\left[\Omega(\delta_d(f^{(M)})\right] \tag{using \Cref{thm:grids}}\\
                                       & \ge \Omega\left(\E_M[\delta_d(f^{(M)})]\right).
	\end{align}

	Thus, the core of the soundness analysis is to show that the expected value of $\delta_d(f^{(M)})$ is $\Omega_{d}(\varepsilon)$. Quantitatively, we will show that $$\E_M\left[\delta_d(f^{(M)})\right] \ge \varepsilon/\exp(d^{O(1)}).$$

	First, we will handle the extreme case corresponding to $n\le \poly(d)$: In this case, by applying~\Cref{lem:final-few}, we observe that there exists at least one matching $M$ such that $$\delta_d(f^{(M)}) \ge \min\{\delta_d(f),1/2^{d+3}\} \ge \varepsilon/2^{d+3}.$$ Since the total number of matchings is at most $n! \le \exp(n^{O(1)}) \le \exp(d^{O(1)})$, we get the desired lower bound of $\varepsilon/\exp(d^{O(1)})$ on the expected value of $\delta_d(f^{(M)})$.

	Hence, for the rest of the proof, we may assume that $d\ge 1$ and $n\ge (d+1)^C$ for a sufficiently large absolute constant $C$. We now have two cases based on whether $\delta_d(f)$ is ``small'' where we set the threshold to be $\varepsilon_1:=1/2^{4(d+2)}$.

	\paragraph{Small distance case:}
	Suppose $\delta_d(f) \le \varepsilon_1$.
	Applying the small distance lemma~(\Cref{lem:close-distance}), we get that
	$$\E_{M}\bigg[\delta_d(f^{(M)})\bigg] \ge \frac{\varepsilon}{2},$$
	as desired.

	\paragraph{Large distance case:}
	We have $\delta_d(f) > \varepsilon_1$. For $i'\in [n/2]$, let $f_{i'}$ be the function corresponding to the partial matching given by the first $i'$ edges of the test $\cTslice$, i.e.,
	$$f_{i'} = f^{(i_1,j_1)\dots(i_{i'},j_{i'})}.$$
	Following the same convention, let $f_0 = f$ so that $\delta_d(f_0) > \varepsilon_1$. Applying the (contrapositive of) large distance lemma (\Cref{lem:restrn-large-dist}) from~\Cref{subsec:global-poly} to the function $f_{i'}$ whose domain can be identified with $\bool^{i'}\times \bool^{n-2i'}_{n/2-i'}$, for all $i' \in [0,p]$ (where $p:=n/2-(d+1)^C$ for a sufficiently large absolute constant $C$) and partial matchings $(i_1,j_1),\dots, (i_{i'},j_{i'})$ over $[n]$ such that $ \delta_d(f_{i'}) > \varepsilon_1$, we have
	\begin{align}\label{eqn:large-cond}\Pr_{i_{i'+1}<j_{i'+1} \in [n]\setminus\{i_1,j_1,\dots, i_{i'},j_{i'}\} }\bigg[\delta_d(f_{i'}^{(i_{i'+1},j_{i'+1})}) > \varepsilon_0 \bigg] \ge 1-\frac{2t^2}{\binom{n-2i'}{2}},\end{align} where we choose $\varepsilon_0>0$ and $t\ge 1$ appropriately so that~\Cref{lem:restrn-large-dist} is applicable. In particular, we set $t:=20d$ and $\varepsilon_0 := 1/2^{20d}$ so that we have ${1}/{2^{t-1}} + \varepsilon_0 \le 1/2^{4d+8} = \varepsilon_1$.

	Now, for $i'\in [0,p]$, consider the event $\delta_d(f^{(M)}) \ge \varepsilon_0/2$ conditioned on the event $\varepsilon_0 \le \delta_d(f_{i'}) \le \varepsilon_1$. In this regime, since the small distance lemma (\Cref{lem:close-distance}) is applicable to $f_{i'}$ (since $n/2-i' \ge (d+1)^C$), for any arbitrary choice of the first $i'$ edges such that $\varepsilon_0 \le \delta_d(f_{i'}) \le \varepsilon_1$, we have
	\begin{align}\label{eqn:small-cond}\Pr\bigg[\delta_d(f_{i'}^{(i_{i'+1},j_{i'+1}),\dots, (i_{n/2},j_{n/2})}) \ge \frac{\varepsilon_0}{2}\bigg] \ge \frac{1}{2}.\end{align}

	Chaining the inequalities~\eqref{eqn:large-cond} and~\eqref{eqn:small-cond} from $i'=0$ to $i'=p$, we get
	\begin{align*}
		\Pr\bigg[\bigg(\delta_d(f^{(M)}) \ge \frac{\varepsilon_0}{2}\bigg) \cup \bigg(\delta_d(f_{p+1}) > \varepsilon_0\bigg) \bigg] & \ge \frac{1}{2}\cdot \prod_{i'=1}^p \bigg(1-\frac{400d^2}{(n/2-i'-1)^2}\bigg)     \\
		                                                                                                                             & \ge \frac{1}{2}\cdot \bigg(1-\sum_{i'=1}^p \frac{400d^2}{(n/2-i'-1)^2}\bigg)      \\
		                                                                                                                             & \ge \frac{1}{2}\cdot \bigg(1-\sum_{i'={n/2}-p-1}^\infty \frac{400d^2}{i'^2}\bigg) \\
		                                                                                                                             & \ge \frac{1}{2}\cdot \bigg(1-\frac{400d^2}{(d+1)^C-2}\bigg)                       \\
		                                                                                                                             & \ge \frac{1}{4}.
	\end{align*}
	Thus, either we have that \begin{align}\label{eqn:case1}\Pr[\delta_d(f^{(M)}) \ge \varepsilon_0/2] \ge 1/8,\end{align} or \begin{align}\label{eqn:case2}\Pr[\delta_d(f_{p+1}) > \varepsilon_0] \ge 1/8.\end{align} If we are in the first case (i.e.,~\eqref{eqn:case1}), we are done since we obtain $\E_M[\delta_d(f^{(M)})] \ge \varepsilon_0/16$. However, the second case (i.e.,~\eqref{eqn:case2}) needs some care. We observe that indeed if we set the final few edges of the matching at random, with good probability, there exists {\em some} choice of these few restrictions that has large distance from low-degree, and in particular, with good probability, the final restricted function is far from low-degree if the original function was far from low-degree. More precisely, we make use of~\Cref{lem:final-few} to conclude that
	\begin{align}\label{eqn:final-cond}
		\Pr\bigg[\delta_d(f^{(M)}) \ge \frac{\varepsilon_0}{2}\bigg] & \ge \Pr\bigg[\delta_d(f_{p+1}) > \varepsilon_0 \bigg] \cdot \Pr\bigg[\delta_d(f^{(M)}) > {\varepsilon_0}~\bigm\vert~ \delta_d(f_{p+1}) > {\varepsilon_0} \bigg] \\
		                                                             & \ge \frac{1}{8}\cdot \frac{1}{(n-2(p+1))!}                                                                                                                      \\
		                                                             & \ge \frac{1}{8(2(d+1)^C)!},
	\end{align} where we are using the fact that the number of perfect matchings over $n-2i'$ elements is at most $(n-2i')!$ and the fact that $\min\{\varepsilon_0,1/2^{d+3}\} =  \varepsilon_0$.

	Hence, in the large distance case, we have that
	$$\Pr_M\bigg[\delta_d(f^{(M)}) \ge \frac{\varepsilon}{2^{2d+7}} \bigg] \ge \Pr_M\bigg[\delta_d(f^{(M)}) \ge \frac{\varepsilon_0}{2}\bigg] \ge \frac{1}{8(2(d+1)^C)!}\ge \frac{1}{\exp(d^{O(1)})}.$$

	To wrap up the proof of~\Cref{thm:ldt-slices}, recall that we have shown in all cases, we have that
	$$\Pr_M\bigg[\delta_d(f^{(M)}) \ge \frac{\varepsilon}{2^{2d+7}}\bigg] \ge \frac{1}{\exp(d^{O(1)})},$$ and hence $$\E_M[\delta_d(f^{(M)})] \ge \frac{\varepsilon}{\exp(d^{O(1)})}.$$
	Thus, recalling the discussion at the beginning of the proof, we have that
	$$\Pr_M\bigg[\cTslice\text{~rejects~}f\bigg] \ge \frac{\varepsilon}{\exp(d^{O(1)})}.$$
	Hence, we get a low-degree test for the balanced slice making $\exp(d^{O(1)})$ queries and rejection probability $\Omega_d(\varepsilon)$, finishing the proof of~\Cref{thm:ldt-slices}.
\end{proof}

\section{Imbalanced Slices}\label{sec:unbal-slices}

In this section, we prove our low-degree testing theorem over slices that are not necessarily balanced, i.e.,~\Cref{thm:unbal-slices}.
We describe the low-degree test first.

\vspace{3mm}

\begin{algorithm}[H]
	\caption{$\cT^n_k$ (Low-degree test over $k$-th slice)}
	\label{algo:test-kslice}

	\DontPrintSemicolon

	\KwIn{Oracle access to $f: \bool^n_k \to G$ (where $n\ge 2k$), degree parameter $d\ge 0$}

	Sample a uniformly random subset $U$ of $[n]$ of size $2k$\;
	Let $f^{(U)}:\bool^U_{k} \to G$ be obtained from $f$ by setting the coordinates in $\overline{U}=[n]\setminus U$ to 0 \footnote{Minor abuse of notation since the restrictions corresponding to matchings $M$ are also denoted by $f^{(M)}$.}\footnote{Note that $f^{(U)}$ is a function over the balanced slice in $2k$ coordinates.}\;
	Accept $f$ if and only if $\cTslice(f^{(U)},d)$ accepts.
\end{algorithm}

\vspace{3mm}



Similar to the case of the balanced slice (\Cref{sec:large-dist}), the key lemma is to show that if \(\delta_{d}(f^{(U)})\)
is small with high probability (over the choice of $U$),
then \(\delta_{d}(f)\) is also small (i.e.,~\Cref{lem:restrict}). Our proof of this will again crucially use the monomial basis for slices we discussed in the earlier sections. Additionally, we will also use an agreement theorem of Dinur, Filmus and Harsha~\cite{DFH} (which can be thought of as a higher degree version of the direct product tester of~\cite{DinurSteurer} that was used by~\cite{DDGKS17} in their linearity test over slices) in ``gluing'' the local polynomials to obtain a ``global'' one. Now we state our lemma below.

\begin{lemma}\label{lem:restrict}
	Let \change{$n\ge 2k \ge 8d$ be positive integers}, \(f :\bool^{n}_{k} \to G\), and $\eta>0$. If it holds that
	\begin{align*}
		\E_{U \in \binom{[n]}{2k}} \left[ \delta_{d}\left(f^{(U)}\right)\right] \le \eta,
	\end{align*}
	then we have
	\begin{align}\label{eqn:upper-bd}
		\delta_{d}(f) \leq O_{d}(\eta).
	\end{align}
\end{lemma}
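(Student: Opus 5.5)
We plan to follow the approach of~\cite{DDGKS17}, with the Dinur--Filmus--Harsha agreement theorem~\cite{DFH} replacing the direct product test. Think of $f$ as an assignment of local degree-$d$ functions to the $2k$-subsets $U\subseteq[n]$: for each $U$, let $P_U\in\cP_d(\bool^U_k)$ be a closest degree-$d$ function to $f^{(U)}$. Note that $f^{(U)}$ is just $f$ restricted to the points of $\bool^n_k$ whose support lies in $U$. Also, a uniformly random $\vecx\in\bool^n_k$ together with a uniformly random $U\supseteq \mathrm{supp}(\vecx)$ of size $2k$ gives a uniformly random pair ($U$, point of $\bool^U_k$). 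Hence
$$\Pr_{\vecx,U}\bigl[f(\vecx)\ne P_U(\vecx)\bigr]=\E_U[\delta_d(f^{(U)})]\le\eta.$$

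\textbf{Step 1: pairwise agreement.} Pick $U,U'$ from a correlated distribution in which $|U\cap U'|$ is a constant fraction of $2k$; for instance, take a random $W$ of size $2k+\ell$ and two random $2k$-subsets of $W$. By averaging, with probability $1-O(\eta)$ both $P_U$ and $P_{U'}$ agree with $f$ on most points of $\bool^{U\cap U'}_k$, using the same sampling/averaging identity as above. When the common part $\bool^{U\cap U'}_k$ is a slice with $|U\cap U'|\ge 2k-\ell$ and $k\ge 4d$, the distance lemma (\Cref{rem:sz-lem}) shows that two disagreeing degree-$d$ functions there differ on at least a $2^{-2d}$ fraction. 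So, as long as $O(\eta)$ stays below this threshold (otherwise the claim is trivial), we get $P_U\equiv P_{U'}$ on the overlap, except with probability $O_d(\eta)$.

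\textbf{Step 2: convert functions to coefficient tables.} To apply~\cite{DFH}, we need local objects that are \emph{functions on subsets} whose agreement is checked by restriction to the intersection. We will encode $P_U$ in the monomial basis $\cB^*_k$ of $\bool^U_k$ given by \Cref{prop:monomial-basis}. The intended encoding is to use the ``set the remaining coordinates to $0$'' viewpoint. A degree-$d$ function on $\bool^n_k$ restricted to $\bool^U_k$ is the polynomial obtained by dropping all monomials that involve variables outside $U$. So we aim to associate with each $U$ a map $c_U:\binom{U}{\le d}\to G$, namely the coefficients of the canonical representation, and then argue that agreement as functions on $\bool^{U\cap U'}_k$ forces $c_U$ and $c_{U'}$ to agree on $\binom{U\cap U'}{\le d}$.

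This is where the basis properties, in the style of \Cref{lem:restrn-basis}, must be used. The ballot basis depends on coordinate order, so the canonical representation needs to be taken relative to a fixed global order. We also need that monomials of a basis of $\bool^U_k$ that avoid some coordinates remain basis monomials for the smaller slice. This requires enough ``slack'' among the coordinates; that is where $k\ge 4d$ is needed, and possibly a small fixed set of padding coordinates. We expect this \textbf{Step 2} (exactly matching up the coefficient data) to be the main obstacle. Our first attempt would be to prove an analogue of \Cref{lem:restrn-basis} in which the last $2d$ coordinates of $U$ are always retained in the intersection. We would enforce this through the choice of distribution, for example by conditioning on a common block of $2d$ top coordinates.

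\textbf{Step 3: glue and conclude.} Given agreement of the $c_U$'s on intersections with probability $1-O_d(\eta)$, the agreement theorem of~\cite{DFH} yields a global $c:\binom{[n]}{\le d}\to G$ such that $c|_{U}=c_U$ for all but an $O_d(\eta)$ fraction of $U$. We then define $P=\sum_{|S|\le d}c(S)\prod_{i\in S}x_i$, which is a degree-$d$ function on $\bool^n_k$ with $P^{(U)}=P_U$ for those good $U$. Finally,
$$\delta(f,P)=\Pr_{\vecx,U}[f(\vecx)\ne P(\vecx)]\le \Pr[f\ne P_U]+\Pr[U\text{ bad}]\le \eta+O_d(\eta),$$
which gives~\eqref{eqn:upper-bd}.
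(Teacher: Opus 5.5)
Your plan is the same as the paper's: closest local polynomials $P_U$, pairwise consistency from the slice distance lemma, coefficient tables in the ballot basis, the Dinur--Filmus--Harsha agreement theorem, and a global $P$ assembled from the majority-decoded table. The obstacle you flag in Step~2 is resolved exactly as you guess. By averaging, fix a single $D\subseteq[n]$ of size $2d$ with $\E_{U\supseteq D}[\delta_d(f^{(U)})]\le\eta$, relabel so that $D=\{n-2d+1,\dots,n\}$, and use only sets $U=S\cup D$ with $S\in\binom{[n]\setminus D}{2k-2d}$. Then \Cref{lem:restrn-basis}, applied with removed set $S_1\setminus S_2$ (which avoids the last $2d$ coordinates of $U_1$), shows that dropping monomials outside $U_1\cap U_2$ leaves a representation in the ballot basis of $\bool^{U_1\cap U_2}_k$. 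Hence two tables agree on the overlap if and only if the two functions do (\Cref{clm:poly}).

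What your write-up does not yet do is carry this conditioning through Steps~1 and~3, and two things break there.

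First, the agreement theorem must now be applied over the ground set $[n]\setminus D$, with sets $S$ of size $\ell=2k-2d$. So your tables $c_U:\binom{U}{\le d}\to G$ have to be re-encoded as functions on $\binom{S}{\le d}$. The paper folds the $D$-part into the alphabet: $L_S(C)\in G^{\le d}[\vecx|_D]$ is the coefficient of $\prod_{i\in C}x_i$ when $P_{S\cup D}$ is grouped by its $S$-variables. It then notes that the majority value $M(C)$ still has degree at most $d-|C|$, so $P=\sum_C M(C)\prod_{i\in C}x_i$ is degree-$d$. Relatedly, the correlated pair in Step~1 should match the theorem you invoke. The paper uses $S_1,S_2\supseteq T$ with $|T|=\lfloor 3\ell/4\rfloor$, which keeps $|U_1\cap U_2|\ge 3k/2$ so the distance lemma applies.

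Second, your uniform-pair identity is no longer available. Because $D$ always survives, a uniform point of $\bool^{U_1\cap U_2}_k$ is not uniform on $\bool^{U_1}_k$. Likewise, a uniform point of a random $U\supseteq D$ is not uniform on $\bool^n_k$. Both the pairwise estimate and your final bound $\delta(f,P)\le\eta+O_d(\eta)$ need the paper's correction. For $|U|=2k\ge 8d$ and each $D'\subseteq D$, the points $\vecx\in\binom{U}{k}$ with $\vecx\cap D=D'$ form at least a $3^{-2d}$ fraction. Conditioned on $\vecx\cap D=D'$, the relevant distributions coincide, so you lose only a factor $9^d$ (\Cref{clm:imbal-prob-bd} and the last step of the paper's proof).

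With these additions your argument becomes the paper's.
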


Assuming this, we immediately have the main theorem of this section:

\begin{proof}[Proof of~\Cref{thm:unbal-slices}]
	We claim that the test $\cT^n_k$ described above is the desired low-degree test.
	As usual, the completeness of the test is easy to prove: if $f$ can be represented by a degree-$d$ polynomial $P$,
	then a degree-$d$ representation for $f^{(U)}$ can be obtained by substituting the variables in $\overline{U}$ to 0 in $P$ (and then use the completeness of $\cTslice$).
	In order to show that the test is sound,
	assume that $f$ is $\varepsilon$-far from degree-$d$. Now, $\eta$ to be sufficiently small such that the term on the RHS of~\eqref{eqn:upper-bd} is at most $\varepsilon$. In particular, we can achieve this for some $\eta\ge \Omega_{d}(\varepsilon)$.
	Then by (the contrapositive of) \Cref{lem:restrict}, we conclude that
    \begin{align}\label{eqn:contrapos-imbal-key}\E_{U}\left[\delta_d\left(f^{(U)}\right)\right] \ge \eta.
    \end{align}
	Then
	we obtain, by invoking our low-degree test over balanced slice (\Cref{thm:ldt-slices}) for $f^{(U)}$ that:
	\begin{align}\label{eqn:imbal}
		\Pr[\cT^n_k\text{~rejects~}f] & \ge \Pr_U[\cTslice\text{~rejects~}f^{(U)}] \nonumber \\
        & \ge \E_U\left[\Omega_d(\delta_d(f^{(U)}))\right] \nonumber \\
		                              & \ge \Omega_{d}(\eta)
		= \Omega_{d}(\varepsilon), \tag{using~\eqref{eqn:contrapos-imbal-key}}
	\end{align}
	where the probability in the first inequality is both over the internal randomness of the test and $U$, thus finishing the proof.
\end{proof}

We now prove the key lemma used above.

\begin{proof}[Proof of~\Cref{lem:restrict}]
	At a high level, the proof is similar to the ``Global from local'' lemma (\Cref{lem:restrn-large-dist}) used for the balanced slice case.
	We use the shorthand notation $\delta_U := \delta_d(f^{(U)})$.
	Then for every subset \(U\), we pick a degree-$d$ function \(P_{U} \in \cP_d(\{0,1\}^U_k)\)
	closest to \(f^{(U)}\) (i.e., $\delta_U = \delta(f^{(U)},P_U)$).
	We then find a degree-$d$ function \(P: \bool^{n}_{k} \to G\),
	such that on many subsets $U$, we have \(P^{(U)} \equiv P_{U}\).
	We then lastly note that many of the points in \(\bool^{n}_{k}\)
	lie inside one of these subsets, which will be used to complete the proof. We provide more details next.

	\paragraph{Constructing global polynomial.}

	We will use the following theorem of~\cite{DFH} (instantiated for our setting) to construct a global polynomial. Recall that ${[m]\choose \le d}$ denotes the set of subsets of $[m]$ of size at most $d$, and for a function $g:{[m]\choose \le d}\to \Sigma$ and $S\in {[m]\choose \ell}$, we let $g|_S:{S\choose \le d}\to \Sigma$ denote the function $g$ restricted to subsets containing only the elements of $S$. 

	\begin{theorem}[{\bf Agreement theorem in high dimensions}, {\cite[\change{Theorem 5.1}]{DFH}}]\label{thm:agr-test}
		For $\gamma>0$ and positive integers $m,\ell,d$, $t=\lfloor3\ell/4\rfloor$ such that $d\le t<\ell \le m$, and any alphabet $\Sigma$, the following holds: Let $\paren{L_S:{S\choose \le d} \to \Sigma}_{S\in {[m]\choose \ell}}$ be an ensemble of local functions satisfying
		\begin{align}
			\Pr_{\substack{\change{T\in {[m]\choose t}} \\\change{T\subseteq S_1,S_2\in {[m]\choose \ell}}}} \bigg[L_{S_1}|_{\change{T}} \not\equiv  L_{S_2}|_{\change{T}}\bigg] \le \gamma.
		\end{align}
		Then the majority-decoded function $M:{[m]\choose \le d} \to \Sigma$ satisfies 
		\begin{align}
			\Pr_{S\in {[m]\choose \ell}}\bigg[L_S \not\equiv M|_S\bigg] \le \exp(d^{O(1)})\cdot \gamma,
		\end{align} where the majority-decoded function $M$ is the one given by the ``popular vote'', namely for each $A \in {[m]\choose \le d}$, we set $M(A)$ to be the most frequently occurring value among $(L_S(A))_{S\supseteq A}$ (breaking ties arbitrarily).
	\end{theorem}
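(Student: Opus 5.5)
The plan is to follow the Dinur--Steurer template for direct-product testing, adapted to the ``downward'' alphabet $\binom{S}{\le d}$. I would work in two stages. First I show that for each fixed $A\in\binom{[m]}{\le d}$ the values $L_S(A)$ are nearly constant over $S\supseteq A$. Then I upgrade this ``pointwise'' agreement to the statement that a random $L_S$ agrees with $M$ on \emph{all} of $\binom{S}{\le d}$. The first stage is where the spectral gap of Johnson graphs enters, and the second stage is where the $\exp(d^{O(1)})$ loss comes from.

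\textbf{Stage 1 (agreement per set $A$).} Fix $A$ with $|A|=a\le d$. Let $\gamma_A$ be the probability of $L_{S_1}(A)\neq L_{S_2}(A)$ in the test, conditioned on $A\subseteq T$. Under this conditioning, the pair $(S_1,S_2)$ is a step of a random walk on $\{S\supseteq A\}$, which is a Johnson-type scheme $J(m-a,\ell-a)$. Two consecutive sets share the random $(t-a)$-set $T\setminus A$, with $t/\ell\approx 3/4$. This walk has spectral gap bounded below by an absolute constant as long as $a\le d\le t$. Colour each $S$ by the value $L_S(A)$. The expander mixing lemma then gives
\begin{align*}
\Pr_{S\supseteq A}\bigl[L_S(A)\neq M(A)\bigr]\le O(\gamma_A).
\end{align*}
I would also record that $\Pr_T[A\subseteq T]\ge (t/m)^{a}$ relative to $\Pr_S[A\subseteq S]=(\ell/m)^a$ roughly. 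So the conditional-on-$A\subseteq S$ disagreement of the test is at least $(3/4)^{d}\gamma_A$ up to constants. By averaging, this yields $\E_A[\gamma_A]\le 2^{O(d)}\gamma$ under the natural measure on $A$ induced by a random $S$.

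\textbf{Stage 2 (from ``most $A$'' to ``all $A$'').} This is the step I expect to be the main obstacle. A naive union bound over the $\binom{\ell}{\le d}$ sets $A\subseteq S$ loses a factor $\ell^d$, which is not allowed. Instead I would argue directly. Suppose $L_S(A)\neq M(A)$ for some $A\subseteq S$. Pick $T\subseteq S$ at random; then $A\subseteq T$ with probability at least $\binom{t}{a}/\binom{\ell}{a}\ge (3/4)^d/\poly(d)$, i.e.\ $2^{-O(d)}$.

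Call $A$ \emph{strongly decoded} if $M(A)$ is taken by at least a $2/3$ fraction of $\{L_{S'}(A)\}_{S'\supseteq A}$. If the offending $A$ is strongly decoded, then a random $S_2\supseteq T$ satisfies $L_{S_2}(A)=M(A)\neq L_S(A)$ with constant probability. Here I must ensure that conditioning on $T$ instead of just $A$ does not destroy popularity, and I would use Stage 1 applied at the level of $T$ to guarantee this. This gives
\begin{align*}
\Pr_S\bigl[\exists\text{ strongly decoded }A\subseteq S:\ L_S(A)\neq M(A)\bigr]\le 2^{O(d)}\gamma.
\end{align*}

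It remains to control $S$ containing some $A$ that is not strongly decoded. For such $A$ we have $\gamma_A=\Omega(1)$. I would handle these by a downward induction on $|A|$, which is where the polynomial-in-$d$ exponent accumulates. Consider a minimal weakly-decoded $A$ inside $S$. A random $T\subseteq S$ captures it with probability $2^{-O(d)}$. Conditioned on capturing it, the test fails with constant probability, because $\gamma_A=\Omega(1)$ and $T$ is a typical superset of $A$. So the probability that $S$ contains a weakly decoded set is also $2^{O(d)}\gamma$ times the number of induction levels. Summing the two contributions and iterating over $|A|\le d$ gives the claimed bound $\exp(d^{O(1)})\cdot\gamma$.
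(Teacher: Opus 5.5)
The paper does not prove this statement; it imports it as a black box from Dinur--Filmus--Harsha, which the paper describes as a higher-degree version of the Dinur--Steurer direct-product theorem. So there is no in-paper proof to compare against, and your proposal has to stand on its own.

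\textbf{Stage 1 is sound.} In fact, for each size $a$ you get $\E_{A\in\binom{[m]}{a}}[\gamma_A]\le\gamma$ directly, since the indicator that some $A\subseteq T$ is a disagreement dominates its average over $A\in\binom{T}{a}$. The $2^{-O(d)}$ lower bound on $\Pr[A\subseteq T]$ is also fine.

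\textbf{The gap is Stage 2.} Its key step claims that for a \emph{fixed} bad $S$ with witness $A$, going down to a random $T$ with $A\subseteq T\subseteq S$ and back up to $S_2\supseteq T$ gives $L_{S_2}(A)=M(A)$ with constant probability. The spectral gap gives no such pointwise statement. Expander mixing only bounds the escape probability from $\mathcal{F}_A=\{S\supseteq A: L_S(A)\ne M(A)\}$ on average over $\mathcal{F}_A$, and the bad sets (measure $O(\gamma)$) can be atypical. Here is a concrete failure.
\begin{itemize}
\item Take $d=1$, $A=\{1\}$, and a low-density $W\subseteq[m]\setminus\{1\}$ with $|W|\ge \ell-1$.
\item Let $\mathcal{F}=\{S\ni 1: |S\cap W|\ge t-1\}$. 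Set $L_S(\{1\})=b$ on $\mathcal{F}$ and $L_S(\{1\})=a$ otherwise, with all other entries taken from one global function.
\item Then $\{1\}$ is strongly decoded with $M(\{1\})=a$.
\item Yet for $S\in\mathcal{F}$ with $S\setminus\{1\}\subseteq W$, every $T$ with $1\in T\subseteq S$ has $|T\cap W|=t-1$. Hence every $S_2\supseteq T$ lies in $\mathcal{F}$, and the probability you need to be constant is $0$.
\end{itemize}
The test does fail near the boundary of $\mathcal{F}$, so the theorem is not contradicted. But the example shows you must average over the bad $S$.

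Once you average, the witness $A(S)$ varies with $S$. Summing the per-$A$ expander-mixing bounds then only yields
$\E\bigl[\#\{A\subseteq T: L_{S_1}(A)\ne M(A)\}\bigr]\le C\cdot\E\bigl[\#\{A\subseteq T: L_{S_1}(A)\ne L_{S_2}(A)\}\bigr]$.
This compares expected \emph{numbers} of offending sets, not probabilities that one exists. The hypothesis bounds only the probability that a disagreement exists. A failing test may have up to $\binom{t}{\le d}$ disagreeing $A$'s, so the right side is only known to be at most $\binom{t}{\le d}\gamma$. This is exactly the $\ell$-dependent union-bound loss you set out to avoid.

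Your appeal to ``Stage 1 at the level of $T$'' does not repair this. Here $|T|=t$ can exceed $d$. Moreover, showing that the plurality of $L_{S'}|_T$ over $S'\supseteq T$ equals $M|_T$ for all but $O(\gamma)$ of the $T$'s is the same simultaneous-agreement problem, one level down.

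The weakly-decoded branch has the same defect: $T$ is a random subset of a specific, possibly atypical $S$, not a typical superset of $A$. It also never bounds the fraction of $S$ that contain \emph{some} weakly decoded $A$. You only know such $A$ have density $O(\gamma)$ among $a$-sets, and passing to $S$ again costs a factor of $\binom{\ell}{a}$; a downward induction on $|A|$ does not touch this.

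What is missing is a genuinely global argument that converts average per-$A$ agreement into an $\ell$-independent bound on $\Pr_S[\exists A\subseteq S: L_S(A)\ne M(A)]$. That is the substance of the Dinur--Steurer theorem (essentially the case $d=1$) and of its Dinur--Filmus--Harsha extension, and your proposal does not supply it.
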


	We will use the theorem in the following way.
	We first pick a subset \(D \subseteq [n]\) of size \(2d\),
	such that 
    \begin{align}\label{eqn:delta-u-eta}\E_{U \in {[n]\choose 2k}\text{~and~}U\supseteq D} \left[ \delta_U \right] \le \eta.\end{align} Such a \(D\) must exist by an averaging argument.
	Roughly speaking, the intuition for this conditioning on all good subsets containing a fixed subset $D$ is the following: Since we will be dealing with polynomials evaluated over slices and monomial bases for the same, it will be useful to have a slack of few (in this case, $2d$) coordinates in order to ensure that the polynomials have a unique representation. 

	Furthermore, we assume without loss of generality that
	\begin{align*}
		D =  \left\{n-2d+1, \ldots, n\right\}.
	\end{align*}

	If we now let \(m := n-2d\) and \(\ell:= 2k-2d\),\footnote{We will apply~\Cref{thm:agr-test} with these parameters.}
	then every subset \(U\) containing \(D\) is of the form
	\(S\cup D\), where \(S\) is a subset of \([m]\) of size \(\ell\).
	We have that every \(P_{S \cup D}\) is a polynomial defined over
	\(\left\{0,1\right\}^{S \cup D}_{k}\),
	which by \Cref{prop:monomial-basis} has a certain graded basis.
	We denote this basis by \(\mathcal{B}_{U}\).
	We can then write \(P_{U}\) as
	\begin{align}\label{eqn:pu}
		P_{U} = \sum_{b \in \mathcal{B}^{\leq d}_{U}} c_{b} \cdot  b
		= \sum_{C \in \binom{S}{\leq d}} p_{U,C}(\bx|_{D}) \cdot \prod_{i\in C} x_i,
	\end{align}
	where \(p_{U,C}(\bx|_{D})\)
	is the sum over the basis elements
	which share the same variables of \(S\),
	so \(p_{U,C} \in G^{\leq d}[\bx|_{D}]\) is a polynomial in the ``\(D\)-variables''
	of degree at most \(d\); in fact its degree will be at most $d-| C|$ as the degree of $P_U$ is at most $d$.
	We can then for every \(S \in \binom{[m]}{\ell}\)
	define a local function
	\begin{align}\label{eqn:ls-defn}
		L_{S}: \binom{S}{\leq d} & \to G^{\leq d}[\bx|_{D}] \nonumber  \\
		C   \quad                & \mapsto \; p_{S\cup D,C}(\bx|_{D}).
	\end{align} 
	We then want to apply \Cref{thm:agr-test} to this ensemble of local functions,
	so we have to get an upper bound on the probability that
	\(L_{S_{1}}|_{S_{1} \cap S_{2}} \not\equiv L_{S_{2}}|_{S_{1} \cap S_{2}}\).
	The following claim restates it	to something we can more easily work with.
	\begin{claim}\label{clm:poly}
		Let $S_1,S_2\in {[m]\choose \ell}$ and \(U_{1} = S_{1} \cup D\), \(U_{2} = S_{2} \cup D\).
		We have \(L_{S_{1}}|_{S_{1} \cap S_{2}} \equiv L_{S_{2}}|_{S_{1} \cap S_{2}}\)
		if and only if \(P_{U_{1}} \equiv P_{U_{2}}\).\footnote{Note that technically the domains of the functions $P_{U_1}$ and $P_{U_2}$ are different; however, by this notation, we mean that the two functions agree in the intersection of the two domains: $\bool^{U_1\cap U_2}_k$.}
	\end{claim}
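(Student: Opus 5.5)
We will prove both directions by comparing coefficients in a common monomial basis. The plan is to reduce everything to a single graded basis for the intersection domain $\bool^{U_1\cap U_2}_k$ and then invoke uniqueness of representation there. Write $W:=U_1\cap U_2=(S_1\cap S_2)\cup D$; this is a slice on $|W|=|S_1\cap S_2|+2d$ coordinates at weight $k$. Order $W$ increasingly, so that the $2d$ coordinates of $D=\{n-2d+1,\dots,n\}$ are the last ones, exactly as they are in $U_1$ and in $U_2$.

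\textbf{Step 1: restricting the basis of $U_1$ (and $U_2$) to $W$.} By~\eqref{eqn:pu}, $P_{U_1}=\sum_{C\in\binom{S_1}{\le d}} p_{U_1,C}(\bx|_D)\prod_{i\in C}x_i$, and every monomial appearing is a basis monomial of $\cB_{U_1}^{\le d}$. I would apply \Cref{lem:restrn-basis} to the slice on $U_1$, with the removed set being $U_1\setminus W=S_1\setminus S_2$. This set lies inside the first $|U_1|-2d$ coordinates, because all of $D$ is kept. The lemma's indexing of $n$ versus $\lfloor n/2\rfloor$ matches since $|U_1|=2k$, so we are on the balanced slice. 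The lemma then says: every monomial of $\cB_{U_1}^{\le d}$ avoiding $S_1\setminus S_2$ lies in the basis $\cB^*_k(\bx|_W)^{\le d}$, provided $d\le k\le |W|-d$. We have $k\ge d$ because $2k\ge 8d$. The upper condition $k\le|W|-d$ is where I expect trouble, since $|W|$ may be close to $2d$ when $S_1\cap S_2$ is small. In that case I would instead note that the restriction to $\bool^W_k$ makes sense only when $|W|\ge k$. To handle this, I would restrict attention to pairs with $|S_1\cap S_2|\ge t=\lfloor 3\ell/4\rfloor$, which is the regime \Cref{thm:agr-test} actually uses, so that $|W|\ge 3k/2+O(d)\ge k+d$. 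The second hypothesis of the lemma, $|T|\le\min\{k,|W|-k\}$, also follows since $|T|\le d$.

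\textbf{Step 2: the two directions.} Consider the restriction of $P_{U_1}$ to $\bool^W_k$, i.e.\ set $x_i=0$ for $i\in S_1\setminus S_2$. Every monomial containing a variable from $S_1\setminus S_2$ dies. The survivors are exactly the terms $p_{U_1,C}(\bx|_D)\prod_{i\in C}x_i$ with $C\subseteq S_1\cap S_2$, i.e.\ the data of $L_{S_1}|_{S_1\cap S_2}$. By Step 1 these survivors are written in the basis $\cB^*_k(\bx|_W)^{\le d}$. The same holds for $P_{U_2}$.

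If $L_{S_1}|_{S_1\cap S_2}\equiv L_{S_2}|_{S_1\cap S_2}$, the two restricted polynomials are literally identical, so $P_{U_1}\equiv P_{U_2}$ on $\bool^W_k$.

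Conversely, suppose the two functions agree on $\bool^W_k$. Both are expressed in the basis $\cB^*_k(\bx|_W)$, so by uniqueness of coefficients in \Cref{prop:monomial-basis} they have the same coefficient on every basis monomial. Grouping monomials by their $S$-part $C$ recovers $p_{U_1,C}=p_{U_2,C}$ as polynomials in $\bx|_D$ for each $C\subseteq S_1\cap S_2$. Here we need that grouping is injective: distinct basis monomials with the same $C$ differ in their $D$-part, so the $D$-parts, and hence the coefficients, are determined.

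\textbf{Main obstacle.} The main obstacle is Step 1: verifying the hypotheses of \Cref{lem:restrn-basis}, namely the positional condition that removed coordinates lie before the last $2d$ coordinates and the range condition on $k$. Choosing $D$ as the final $2d$ coordinates was made precisely to guarantee the positional condition. For the range condition, I would first try the intersection-size assumption described above; if needed, I would state the claim only for such pairs.
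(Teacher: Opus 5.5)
Your proposal is correct and follows essentially the paper's proof: zero out the variables in $S_1\setminus S_2$, apply \Cref{lem:restrn-basis} with $S=S_1\setminus S_2$ so that the surviving monomials lie in the graded basis of $\bool^{U_1\cap U_2}_k$ (the positional condition holds because $D$ is the last $2d$ coordinates), and conclude from uniqueness of the basis representation. Your explicit check of the range condition $k\le |U_1\cap U_2|-d$, which you get by restricting to $|S_1\cap S_2|\ge t$, is something the paper leaves unchecked. It is genuinely needed: for small intersections, e.g.\ $S_1\cap S_2=\emptyset$, the slice $\bool^{U_1\cap U_2}_k$ is empty and the claim can fail. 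It does hold in the regime where the claim is actually applied.
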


	We are then left to upper bound the following quantity where $t:=\lfloor 3\ell /4\rfloor$:
	\begin{align}\label{eqn:local-agr}
		\gamma:=\Pr_{\substack{\change{T\in {[m]\choose t}} \\ \change{T\subseteq S_1,S_2\in {[m]\choose \ell}}}}\bigg[P_{S_{1}\cup D} \not\equiv P_{S_{2}\cup D}\bigg].
	\end{align}
	\change{Now for any $U_1,U_2\in {[n]\choose 2k}$ such that $u:=|U_1\cap U_2|\ge t+2d$ and $P_{U_1}\not\equiv P_{U_2}$}, by applying the polynomial distance lemma
	(in particular, the weak version given by \Cref{rem:sz-lem}) over the $k$-th slice of $U_1\cap U_2$,
	we get the following:
	\begin{align}\label{eqn:sz-u1u2}
		\Pr_{\vecx \in \bool^{U_1\cap U_2}_k}\bigg[P_{U_1}(\vecx) \ne P_{U_2}(\vecx)\bigg]
		\ge
		\frac{{\change{u-2d}\choose k-d}}{{\change{u}\choose k}}
		\ge
		\paren{\frac{\min\{k-d,\change{u-d-k}\}}{\change{u}}}^{2d}
		\ge
		\Omega_{d}(1),
	\end{align} where the last inequality is using $k\ge 4d$ and $\change{2k\ge u \ge t+2d \ge 3k/2}$.
	Combining~\eqref{eqn:local-agr} and~\eqref{eqn:sz-u1u2}, we then have
	\begin{equation}\label{eqn:agreement-1}
		\Pr_{\substack{\change{T\in {[m]\choose t}}\\ \change{T\subseteq S_1,S_2\in {[m]\choose \ell}}\\ \change{\vecx\in \bool^{(S_1\cap S_2)\cup D}_k}}}\bigg[P_{S_1\cup D}(\vecx) \ne P_{S_2\cup D}(\vecx)\bigg]
		\ge
		\gamma \cdot \Omega_d(1) = \Omega_d(\gamma) .
	\end{equation}
	We will now establish an upper bound on the LHS of the above expression:
	For $S\in {[m]\choose \ell}$ and \change{$k-2d\le i \le \ell$}, let $\cD_{S,\change{i}}$ denote the distribution over ${S\cup D \choose k}$ corresponding to picking a subset $\change{S'}\in {S\choose \change{i}}$ uniformly at random and then outputting $\vecx \in {\change{S'}\cup D \choose k}$ uniformly and independently.\footnote{\change{We can interchangeably also treat $\vecx$ as an element of $\bool^{S'\cup D}_k$.}}
	While $\cD_{S,\change{i}}$ need not be the uniform distribution over ${U \choose k}$ (where $U= S\cup D$), we claim below
	that $f$ and $P_U$ continue to be close\footnote{We are interchangeably using $f$ and $f^{(U)}$ here since the domain of the latter function can be treated as a domain of the former.} under this distribution for most $U$.
	\begin{claim}\label{clm:imbal-prob-bd}
		For all $U\supseteq D$ in ${[n]\choose 2k}$ and $\cD:=\cD_{U\setminus D,\change{i}}$ where $i\in [k-2d,\ell]$ is arbitrary, the following holds:
		$$\Pr_{\vecx\sim \cD}[f(\vecx) \ne P_U(\vecx)]\le 9^{d}\cdot \Pr_{\vecx\sim \bool^U_k}[f(\vecx) \ne P_U(\vecx)].$$
	\end{claim}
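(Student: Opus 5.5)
The plan is to prove the claim pointwise. For every $\vecx\in\bool^U_k$ we will show that $\Pr_{\cD}[\vecx]\le 9^d\cdot\binom{2k}{k}^{-1}$; the claim then follows by summing this bound over the set of $\vecx$ with $f(\vecx)\ne P_U(\vecx)$. We identify $\vecx$ with a $k$-subset $X\subseteq U$ and write $S:=U\setminus D$, so $|S|=\ell=2k-2d$, and $a:=|X\cap D|\le 2d$. Under $\cD$, the point $X$ is output exactly when $X\subseteq S'\cup D$ (i.e. $X\cap S\subseteq S'$) and $X$ is then the chosen $k$-subset of $S'\cup D$. Hence
\[
\Pr_{\cD}[X] \;=\; \frac{\Pr_{S'}\bigl[X\cap S\subseteq S'\bigr]}{\binom{i+2d}{k}} .
\]

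The next step is to reduce to the worst case $a=2d$. The probability $\Pr_{S'}[X\cap S\subseteq S']=\prod_{j=0}^{k-a-1}\frac{i-j}{\ell-j}$ is a product of factors at most $1$. The number of factors decreases as $a$ grows, so the probability is largest at $a=2d$, where it equals $\prod_{j=0}^{k-2d-1}\frac{i-j}{\ell-j}$. (If $i<k-a$ the probability is $0$ and there is nothing to prove.)

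To compare with the uniform measure, we rewrite $\binom{i+2d}{k}/\binom{2k}{k}$ as a hypergeometric probability. This quantity equals $\Pr[X\subseteq W]$ for a uniformly random $(i+2d)$-subset $W\subseteq U$, which is $\prod_{j=0}^{k-1}\frac{i+2d-j}{2k-j}$. We split this product into its first $2d$ factors and its remaining $k-2d$ factors. Substituting $j=j'+2d$ in the tail, and using $2k-2d=\ell$, gives exactly $\prod_{j'=0}^{k-2d-1}\frac{i-j'}{\ell-j'}$. This is the same product as in the worst-case numerator, so it cancels, and we obtain
\[
\Pr_{\cD}[X]\cdot\binom{2k}{k}\;\le\;\prod_{j=0}^{2d-1}\frac{2k-j}{i+2d-j}.
\]

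It remains to bound this product using $i\ge k-2d$ and $k\ge 4d$. The first condition gives $\frac{2k-j}{i+2d-j}\le\frac{2k-j}{k-j}$. The ratio $\frac{2k-j}{k-j}$ is increasing in $j$, so every factor is at most $\frac{2k-2d+1}{k-2d+1}$. Using $k\ge 4d$, this is at most $\frac{6d+1}{2d+1}\le 3$. With $2d$ factors, the product is at most $3^{2d}=9^d$, as required.

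The main obstacle is the exact cancellation between the two hypergeometric products. Once that is seen, the rest is elementary. The edge cases, namely $i$ close to $k-2d$ and small $a$, are handled by the monotonicity in $a$ and by the trivial zero-probability case.
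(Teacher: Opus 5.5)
Your proof is correct; every step checks out, including the monotonicity in $a$, the exact cancellation (which needs $i\ge k-2d$ so the cancelled factors are positive) and the final estimate via $k\ge 4d$. It reaches the claim by a somewhat different route than the paper.

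You compute the point mass $\Pr_{\cD}[X]$ exactly and reduce to the worst case $|X\cap D|=2d$. The tail of the hypergeometric product for $\binom{i+2d}{k}/\binom{2k}{k}$ then cancels against $\Pr_{S'}[X\cap S\subseteq S']$. This leaves a pointwise likelihood-ratio bound $\prod_{j=0}^{2d-1}\frac{2k-j}{i+2d-j}\le 9^d$.

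The paper never computes $\Pr_{\cD}[X]$. It partitions $\binom{U}{k}$ by $D'=\vecx\cap D$. It observes that, conditioned on $\vecx\cap D=D'$, both $\cD$ and the uniform distribution make $\vecx\cap(U\setminus D)$ a uniformly random $(k-|D'|)$-subset of $U\setminus D$, since a random subset of a random subset is random. It then lower-bounds the uniform mass of every class by $\binom{2k-2d}{k}/\binom{2k}{k}\ge 3^{-2d}$. Hence each conditional bad probability is at most $9^d$ times the overall one, and averaging over $D'$ with $\cD$'s weights finishes.

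The paper's route is shorter and more conceptual. It replaces your computation with a symmetry observation and the crude bound $\Pr_{\cD}[\vecx\cap D=D']\le 1$. Your route gives an explicit density bound that depends on $i$. That bound equals $1$ at $i=\ell$, where $\cD$ is uniform. At the extreme $i=k-2d$ it coincides exactly with the paper's constant $\binom{2k}{k}/\binom{2k-2d}{k}$. Your pointwise domination also transfers immediately to the convex combinations of the $\cD_{U\setminus D,i}$ that the paper uses right after the claim.
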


	Now, consider the joint probability distribution over $\change{T\in {[m]\choose t},}~ S_1,S_2\in {[m]\choose \ell}$
	and $\vecx \in {(S_1\cap S_2)\cup D\choose k}$
	on the LHS of~\eqref{eqn:agreement-1}. \change{Further let $U_1=S_1\cup D$ and $U_2=S_2\cup D$.}
	Note that the marginal distribution of $U_1$ (resp.~$U_2$) is uniform over $U\in {[n]\choose 2k}$ such that $U\supseteq D$. For each fixing of $U_1$, we note that the conditional distribution of $\vecx$ is given by \change{some convex combination of the distributions $(\cD_{U_1\setminus D,i})_{i}$}. The same also holds for $U_2$. Thus, using~\Cref{clm:imbal-prob-bd} for $U_1,U_2,\vecx$ sampled as above and~\eqref{eqn:delta-u-eta},
	we have that
	$$\Pr[P_{U_1}(\vecx) \ne P_{U_2}(\vecx)] \le \Pr[f(\vecx) \ne P_{U_1}(\vecx)] + \Pr[f(\vecx) \ne P_{U_2}(\vecx)] \le 2\cdot  9^d\cdot \eta.$$
	On the other hand,~\eqref{eqn:agreement-1}
	implies that
	$$\Pr[P_{U_1}(\vecx) \ne P_{U_2}(\vecx)] \ge \Omega_{d}(\gamma),$$
	which together imply that $\gamma\le O_{d}(\eta)$. \change{Therefore, using~\Cref{clm:poly}, we observe that the hypothesis of~\Cref{thm:agr-test} is satisfied for \(\gamma = O_{d}(\eta)\).}

	Hence, we get that the majority-decoded function
	$M:{[m]\choose \le d} \to G^{\leq d}[\bx|_{D}]$
	satisfies
	$$
		\Pr_{S\in {[m]\choose \ell}}\bigg[L_S\not\equiv M|_S\bigg]
		\le
		\exp(d^{O(1)})\cdot \gamma \le O_{d}(\eta).
	$$
	Similar to how \(P_{S \cup D}\) gave rise to the local function \(L_{S}\),
	we can use \(M\) to define a polynomial $P$ over variables $\vecx=(x_1,\dots, x_n)$
	by
	\begin{align}\label{eqn:poly-p-defn}
		P(\bx) := \sum_{C \in \binom{[m]}{\leq d}} M(C) \cdot \prod_{i\in C} x_i.
	\end{align}
	Since \(M(C)\) is the majority vote among \(L_{S}(C)\)
	which all have degree at most \(d - |C|\),
	we note that \(M(C)\) is also degree at most \(d - |C|\),
	which shows \(P\) indeed has degree at most \(d\).

	\paragraph{Showing that \(P\) and \(f\) are close.}
	Now, it remains to show that $P$ is actually close to $f$.
	For this, we first show that $P$ becomes identical to the local polynomials $P_U$ for many $U\in {[n]\choose 2k}$:
	\begin{claim}\label{clm:p-good-u}
		There exists $\eta'\le O_d(\eta)$ such that at least for a $1-\eta'$ fraction of subsets $U\in {[n] \choose 2k}$ such that $U\supseteq D$, it holds that $$P|_{U} \equiv P_U,$$ where $P|_U$ denotes the polynomial obtained by setting the variables in $\overline{U}$ to $0$ in $P$.
	\end{claim}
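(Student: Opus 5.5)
The plan is to transfer the agreement guarantee for the local functions $L_S$ to the polynomials themselves, using the correspondence of \Cref{clm:poly}. By \Cref{thm:agr-test} together with the bound $\gamma\le O_d(\eta)$ established above, the set of $S\in{[m]\choose \ell}$ with $L_S\equiv M|_S$ has density at least $1-\eta'$, where $\eta':=\exp(d^{O(1)})\cdot\gamma=O_d(\eta)$. Since $U\mapsto U\setminus D$ is a bijection between $\{U\in{[n]\choose 2k}:U\supseteq D\}$ and ${[m]\choose\ell}$, it then suffices to show: for every $U=S\cup D$ with $L_S\equiv M|_S$, the function $P|_U$ agrees with $P_U$ on $\bool^U_k$.

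Fix such an $S$. First I compute $P|_U$. Setting the variables in $\overline U=[m]\setminus S$ to $0$ in \eqref{eqn:poly-p-defn} kills every monomial $\prod_{i\in C}x_i$ with $C\not\subseteq S$, since such a monomial contains a variable indexed by $[m]\setminus S$. The $D$-variables are untouched because each $M(C)$ is a polynomial in $\bx|_D$ only. Hence
$$P|_U=\sum_{C\in{S\choose\le d}}M(C)(\bx|_D)\prod_{i\in C}x_i=\sum_{C\in{S\choose \le d}}L_S(C)\prod_{i\in C}x_i=\sum_{C\in{S\choose\le d}}p_{U,C}(\bx|_D)\prod_{i\in C}x_i,$$
where the middle equality uses $L_S\equiv M|_S$ and the last uses the definition \eqref{eqn:ls-defn}. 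By \eqref{eqn:pu}, the right-hand side is exactly the polynomial $P_U$ written in the basis $\cB_U^{\le d}$, so $P|_U$ and $P_U$ coincide as formal polynomials, and in particular as functions on $\bool^U_k$.

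The only step needing care is the meaning of ``$\equiv$'' here. If it is read as equality of functions on $\bool^U_k$, the computation above already proves it. If it is read as equality of representations, the grouping in \eqref{eqn:pu} is a rewriting of the unique basis expansion, so the identity is again immediate. I also want to check that the grouping loses no information. By the ballot structure, with $D$ the last $2d$ coordinates, every monomial in $\cB_U^{\le d}$ factors as a monomial in $S$ times a monomial in $D$. Because $L_S$ was defined from that same expansion, the reconstruction is exact and no uniqueness argument is needed. I therefore expect no genuine obstacle; the claim follows with $\eta'=\exp(d^{O(1)})\cdot O_d(\eta)=O_d(\eta)$.
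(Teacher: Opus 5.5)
Your proposal is correct and follows the paper's own proof essentially verbatim. You set the variables of $\overline{U}=[m]\setminus S$ to zero in \eqref{eqn:poly-p-defn} to get $\sum_{C\in\binom{S}{\le d}}M(C)\prod_{i\in C}x_i$, compare this with $P_U=\sum_{C\in\binom{S}{\le d}}L_S(C)\prod_{i\in C}x_i$, and invoke \Cref{thm:agr-test} with $\gamma\le O_d(\eta)$ so that $L_S\equiv M|_S$ for all but an $\exp(d^{O(1)})\cdot\gamma$ fraction of $S$. You also correctly index $S$ by $\binom{[m]}{\ell}$ through the bijection $U\mapsto U\setminus D$, where the paper's write-up loosely writes $S\in\binom{[n-2d]}{k-2d}$.
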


	Now, we use the above claim along with the fact that $f^{(U)}$ is close to $P_U$ for many $U$ to conclude that $f$ is close to $P$. Suppose for some $U\in {[n]\choose 2k}$ such that $U\supseteq D$, it holds that $P|_U \equiv P_U$ (By~\Cref{clm:p-good-u}, this occurs with probability at least $1-\eta'$). Then we obtain that $\Pr_{\vecx\in \bool^U_k}[f^{(U)}(\vecx) \ne P(\vecx)] = \Pr_{\vecx\in \bool^U_k}[f^{(U)}(\vecx) \ne P_U(\vecx)] = \delta_{U}$. Therefore, using~\eqref{eqn:delta-u-eta}, we get
	\begin{align}\label{eqn:prob-bound}
		\Pr_{\substack{U\in {[n]\choose 2k} \\ U\supseteq D\\ \vecx\in \bool^U_k}}[f(\vecx) \ne P(\vecx)] \le \eta' +  \E_{U\in {[n]\choose 2k} \text{~and~}U\supseteq D}\left[\delta_U\right] \le \eta' + \eta.
	\end{align}
	We are almost done, except for the fact that the marginal distribution of $\vecx$ on the LHS is not uniformly random over $\bool^{n}_k$. We remedy this below.

	We will treat $\vecx$ as a subset of $[n]$ instead of a Boolean vector.
	Then, we have for all $D'\subseteq D$ and $U\subseteq {[n]\choose 2k}$ such that $U\supseteq D$, the number of $\vecx \in {U\choose k}$ such that $\vecx \cap D = D'$ is at least
	$${|U\setminus D| \choose k-|D'|} \ge {|U\setminus D| \choose k} \ge \paren{1-\frac{k}{|U|-|D|}}^{|D|} \cdot {|U|\choose k} \ge \frac{1}{3^{2d}} \cdot {{|U| \choose k}} .$$ Hence, for every $D'\subseteq D$,~\eqref{eqn:prob-bound} implies that
	$$\Pr_{\substack{\vecx\in {[n]\choose k}\\\vecx \cap D = D'}}\bigg[f(\vecx) \ne P(\vecx)\bigg]\le 9^d\cdot (\eta + \eta') \le O_d(\eta),$$
	which when we take a union over all $D'\subseteq D$ yields the bound:
	$$\Pr_{\vecx \in \bool^n_k}[f(\vecx)\ne P(\vecx)] \le O_{d}(\eta).$$
	This finishes the proof of our key lemma~\Cref{lem:restrict}.
\end{proof}

It only remains to prove the three claims used in the proof of
the above lemma. We start with the proof of~\Cref{clm:poly}.

\begin{proof}[Proof of~\Cref{clm:poly}]
	Recall from~\eqref{eqn:pu} that for $i\in [2]$, we have
	$$P_{U_i} = \sum_{C \in \binom{S}{\leq d}} p_{i,C}(\bx|_{D}) \cdot \prod_{j\in C} x_j,$$ for some polynomials $p_{i,C}\in G^{\le d}[\vecx|_D]$ with the guarantee that all the monomials on the RHS (when expanded into a monomial representation) belong to the monomial basis $\cB_{U_i}$ for the domain $\bool^{U_i}_k$ given by~\Cref{prop:monomial-basis}. Observe that one can derive a polynomial representation of $P_{U_1}$ (resp.~$P_{U_2}$) restricted to the domain $\bool^{U_1 \cap U_2}_k$ by setting the variables in $S_1\setminus S_2$ (resp.~$S_2\setminus S_1$) to zero in the above representation. Let us call this restricted polynomial as $Q_1 \in G^{\le d}[\vecx|_{U_1\cap U_2}]$ (resp.~$Q_2$).

	 We show the forward direction of the equivalence first: If \(L_{S_1}|_{S_1\cap S_2} \equiv L_{S_2}|_{S_1\cap S_2}\), from our definition of $L_S$ using the above representation (see~\eqref{eqn:ls-defn}), we have that $p_{1,C} = p_{2,C}$ for all $C\in {S_1\cap S_2 \choose \le d}$. Therefore, \(P_{U_{1}}\) and \(P_{U_{2}}\) have the same polynomial
	representations when restricted to \(\bool^{U_{1} \cap U_{2}}_{k}\). That is $Q_1 = Q_2$ and so $P_{U_1} \equiv P_{U_{2}}$.

	Now we prove the backward direction of the equivalence: Suppose
	$L_{S_1}|_{S_1\cap S_2} \not\equiv L_{S_2}|_{S_1\cap S_2}$.
	Then there exists some \(C \in \binom{S_{1} \cap S_{2}}{\leq d}\) such that
	\begin{align*}
		p_{1,C}(\bx|_{D}) \neq p_{2,C}(\bx|_{D}),
	\end{align*}
	This implies that the basis representations of \(P_{U_{1}}\) and \(P_{U_{2}}\)
	restricted to the variables \(U_{1} \cap U_{2}\)
	are different, i.e,
	$$Q_1(\vecx|_{U_1\cap U_2})\ne Q_2(\vecx|_{U_1\cap U_2}).$$
	Now if we know that the polynomials $Q_1$ and $Q_2$ only contain monomials from a graded basis for
	\(\bool^{U_{1} \cap U_{2}}_{k}\),
	then this would imply they cannot represent the same function,
	and must be different functions on \(\bool^{U_{1} \cap U_{2}}_{k}\).
	Indeed, this is immediately implied by~\Cref{lem:restrn-basis} applied
	with \(n = |U_{1}|\) and \(S = S_{1} \setminus S_{2}\)\footnote{This is the reason we had to introduce
		the set \(D\), instead of applying \Cref{thm:agr-test} directly to the full set \([n]\).
		In that case we would not be guaranteed that different polynomial representations
		correspond to different functions on their overlap.} (resp.~$n=|U_2|$ and $S=S_2\setminus S_1$). Here we are using the fact that $Q_1$ (resp.~$Q_2$) only contains monomials from the basis $\cB_{U_1}$ (resp.~$\cB_{U_2}$) that do not contain variables from $S_1\setminus S_2$ (resp.~$S_2\setminus S_1$). We have thus shown that $Q_1$ and $Q_2$ are distinct functions over $\bool^{U_1\cap U_2}_k$, and so the same holds for $P_{U_1}$ and $P_{U_2}$ over this domain. That is, $P_{U_1} \not\equiv P_{U_2}$.
\end{proof}

We now prove~\Cref{clm:imbal-prob-bd}.

\begin{proof}[Proof of~\Cref{clm:imbal-prob-bd}]
	Let $\varepsilon' := \Pr_{\vecx \in {U\choose k}}[f(\vecx)\ne P_U(\vecx)]$.
	Consider the partition of the set ${U \choose k}$ corresponding to its intersection with $D$: For each $D'\subseteq D$, the number of $\vecx \in {U\choose k}$ such that $\vecx \cap D = D'$ is equal to
	$${|U\setminus D| \choose k-|D'|} \ge {|U\setminus D| \choose k} \ge \paren{1-\frac{k}{|U|-|D|}}^{|D|} \cdot {|U|\choose k} \ge \frac{1}{3^{2d}} \cdot {{|U| \choose k}},$$ using the facts that $|D'| \le |D| = 2d$ and $|U| = 2k \ge 8d$. Hence, \change{combined with the definition of $\varepsilon'$,} we have for each $D'\subseteq D$ that
	$$\Pr_{\substack{\vecx \in {U\choose k}\\ \vecx\cap D=D'}}\bigg[f(\vecx) \ne P_U(\vecx)\bigg] \le 3^{2d}\cdot \varepsilon'$$
	Now, conditioned on $\vecx\cap D = D'$ (for arbitrary $D'\subseteq D$), we observe that the conditional distribution of $\vecx$ for the priors being the uniform distribution over ${U\choose k}$ and $\cD_{U\setminus D\change{,i}}$ are identical. This follows from the fact that picking a random subset of size $k-|D'|$ of a random set in ${U\setminus D\choose i}$ produces a random set in ${U \setminus D\choose k-|D'|}$ (assuming $i\ge k-|D'|$). Hence, \change{by now taking a union over all $D'\subseteq D$ such that $i\ge k-|D'|$}, this finishes the proof of~\Cref{clm:imbal-prob-bd}.
\end{proof}

Finally we provide a proof of~\Cref{clm:p-good-u}.

\begin{proof}[Proof of~\Cref{clm:p-good-u}]
	Let $S \in {[n-2d] \choose k-2d}$ and let $U=S\cup D$. Recalling our definition of the polynomial $P$ from the majority-decoded function $M$ (i.e.,~\eqref{eqn:poly-p-defn}) we have that
	$$P(\bx) := \sum_{C \in \binom{[m]}{\leq d}} M(C) \cdot \prod_{i\in C} x_i.$$

	Now one can obtain $P|_U$ by setting the variables in $\overline{U}$ to 0 in the RHS of the above equation as the following:

	$$P|_U(\vecx|_U) = \sum_{C\in {S\choose \le d}} M(C) \cdot \prod_{i\in C} x_i.$$

	On the other hand, by the definition of $L_S$ (see~\eqref{eqn:ls-defn}) we have:
	$$P_U(\vecx|_U) = \sum_{C\in {S\choose \le d}} L_S(C)\cdot \prod_{i\in C} x_i.$$
	Now, applying the conclusion of~\Cref{thm:agr-test}, at least for a $1-\exp(d^{O(1)})\cdot \gamma = 1-O_d(\gamma)$ fraction of subsets $S\in {[n-2d]\choose k-2d}$, we have that
	\begin{align}\label{eqn:lsms}L_S \equiv M|_S.\end{align}
	Therefore, for all such $S$, using the above polynomial expressions, we obtain for $U=S\cup D$ that
	$$P|_U\equiv P_U,$$ which completes the proof of~\Cref{clm:p-good-u}.
\end{proof}

\section*{Acknowledgments}

We acknowledge the use of AI (specifically ChatGPT 5.4 Thinking) to guide us in the search for a proof of a version of~\Cref{lem:restrn-large-dist}.

\begin{sloppypar}
	\printbibliography[
		heading=bibintoc,
		title={References}
	] 
\end{sloppypar}

\appendix

\section{\texorpdfstring{Proofs from~\Cref{sec:prelims}}{Proofs from Preliminaries}}\label[appendix]{app:prelims}

We give a proof of~\Cref{lem:cart} giving a way to generate a graded basis for a Cartesian product of two domains.

\begin{proof}[Proof of~\Cref{lem:cart}]\label[appendix]{prf:cart}
	The set \(\mathcal{B}(\vecx,\vecy)\) is clearly downward-closed and monomial if both
	\(\mathcal{B}_{1}(\vecx), \mathcal{B}_{2}(\vecy)\) have these properties,
	so it remains to check that it
	is a monomial basis over \(D_{1} \times D_{2}\),
	and that it is graded if both \(\mathcal{B}_{1}(\vecx), \mathcal{B}_{2}(\vecy)\) are.

	Since the elements of \(\mathcal{B}_{1}(\bx)\) do not depend on \(D_{2}\),
	if a monomial \(a_{1}(\bx)\) has a representation
	\begin{align*}
		a_{1}(\bx) \equiv \sum_{m_{1} \in \mathcal{B}_{1}(\bx)} \alpha_{m_{1}} \cdot m_{1}
	\end{align*}
	as a function \(D_{1} \to G\), then it is also a representation
	as a function \(D_{1} \times D_{2} \to G\),
	and the same for a monomial \(a_{2}(\by)\)
	\begin{align*}
		a_{2}(\by) \equiv \sum_{m_{2} \in \mathcal{B}_{2}(\by)} \beta_{m_{2}} \cdot m_{2}.
	\end{align*}
	We then have that their product has the representation
	\begin{align*}
		a_{1}(\bx)a_{2}(\by)
		 & \equiv
		\left(\sum_{m_{1} \in \mathcal{B}_{1}(\bx)} \alpha_{m_{1}} \cdot m_{1}\right)
		\left( \sum_{m_{2} \in \mathcal{B}_{2}(\by)} \beta_{m_{2}} \cdot m_{2}\right) \\
		 & =
		\sum_{m_{1} \cdot m_{2} \in \mathcal{B}(\bx, \by)}
		\alpha_{m_{1}} \beta_{m_{2}} \cdot m_{1}m_{2}.
	\end{align*}
	This shows that every monomial lies in the
	span of \(\mathcal{B}(\bx,\by)\), and by linearity every polynomial
	function \(D_{1}\times D_{2}\to G\) lies in the span of
	\(\mathcal{B}(\bx,\by)\).

	We also need to show that any non-zero sum does not vanish on
	\(D_{1} \times D_{2}\).
	Let
	\begin{align*}
		\sum_{m_{1} \cdot m_{2} \in \mathcal{B}(\bx, \by)}
		c_{m_{1}m_{2}} \cdot m_{1}m_{2}
		= \sum_{m_{2} \in \mathcal{B}_{2}(\mathbf{y})} p_{m_{2}}(\mathbf{x}) \cdot
		m_{2}
	\end{align*}
	be a non-zero sum of the basis elements. Since it is non-zero,
	there exists at least one \(m_{2}\) such that \(p_{m_{2}}(\mathbf{x}) \neq 0\).
	Since \(\mathcal{B}_{1}(\mathbf{x})\) is a basis,
	there exists \(x_{0} \in D_{1}\) so that \(p_{m_{2}}(x_{0}) \neq 0\).
	Then we have
	\begin{align*}
		\sum_{m_{2} \in \mathcal{B}_{2}(\mathbf{y})} p_{m_{2}}(x_{0}) \cdot m_{2}
	\end{align*}
	is a non-zero sum of basis elements of \(\mathcal{B}_{2}(\mathbf{y})\),
	and so there exists \(y_{0} \in D_{2}\) so that the sum does not vanish on
	that point.
	It follows that the whole sum does not vanish at \((x_{0}, y_{0})\),
	which shows that \(\mathcal{B}(\mathbf{x}, \mathbf{y})\) is a basis.

	Lastly,
	if both \(\mathcal{B}_{1}\) and \(\mathcal{B}_{2}\) are graded,
	then the above proof shows that the product of the representations of
	$a_1(\vecx)$ and $a_2(\vecy)$ gives a representation of the correct degree,
	showing that \(\mathcal{B}(\bx, \by)\) is also graded.
\end{proof}

We now give a proof of the polynomial distance lemma (\Cref{lem:dist}) for a Cartesian product of a Boolean cube and slice.

\begin{proof}[Proof of~\Cref{lem:dist}]
	The proof is by an induction on $n_1$. The base case $n_1=0$ is proved by~{\cite[Lemma 5.1.6]{ABPSS25}} (for some absolute constant $\alpha>0$).

	Now, for $n_1\ge 1$, assume that the bound holds for $n_1-1$.
	Let $$P(\vecx,\vecy) = P_0(\vecx|_{[n_1-1]},\vecy) + x_{n_1} \cdot P_1(\vecx|_{[n_1-1]},\vecy),$$ for $(\vecx,\vecy)\in D$, where \(P_0\) and \(P_1\) are polynomials that do not contain the variable \(x_{n_1}\), and are of degree at most $d$ and $d-1$ respectively. If \(P_1|_{D} \equiv 0\), the statement follows by induction hypothesis since $P_0$ can be thought of as a function over $\bool^{n_1-1} \times \bool^n_{k}$. Otherwise \(P_1|_D \not\equiv 0\), in which case, applying the induction hypothesis (for degree parameter $d-1$), we obtain
	\begin{align}\label{eqn:induct-d-1}
		\Pr_{(x_1,\dots, x_{n_1-1},\by) \in \bg[n_{1}-1] \times \bool^{n}_k}[P_1(x_1,\dots, x_{n_1-1},\by) \neq 0]
		\geq \frac{1}{2^{d-1}}\cdot \paren{1-\frac{1}{n^\alpha}}.
	\end{align}
	Observe that whenever \(P_1(x_1,\dots, x_{n_1-1},\by) \neq 0\),
	the two values $$P(x_1,\dots, x_{n_1-1},0,\vecy) = P_0(x_1,\dots, x_{n_1-1},\vecy)$$ and $$P(x_1,\dots, x_{n_1-1},1,\vecy) = P_0(x_1,\dots, x_{n_1-1},\vecy)+ P_1(x_1,\dots, x_{n_1-1},\vecy)$$
	are distinct, so \(P(\vecx,\vecy) = 0\) for at most one of the two choices.
	Therefore, using~\eqref{eqn:induct-d-1}, we get
	\begin{align*}
		\Pr_{(\bx,\vecy) \in D}[P(\bx,\vecy) \neq 0]
		 & \geq
		\Pr_{(\vecx,\vecy)\in D}[P_1(\vecx|_{[n_1-1]},\by) \neq 0] \cdot \Pr_{(\vecx,\vecy)\in D}\bigg[P(\vecx,\vecy) \neq 0 ~\mid~ P_1(\vecx|_{[n_1-1]},\by) \neq 0\bigg] \\
		 & \geq \frac{1}{2^{d-1}}\cdot \paren{1-\frac{1}{n^\alpha}} \cdot \frac{1}{2}
		\ge \frac{1}{2^{d}}\cdot \paren{1-\frac{1}{n^\alpha}}.
	\end{align*}
\end{proof}

We also prove the bounds claimed in the ``consequently'' part of~\Cref{rem:sz-lem}.

\begin{proof}[Proof of~\Cref{rem:sz-lem}]
	Applying the polynomial distance lemma given by~{\cite[Lemma 5.1.6]{ABPSS25}} for the middle slice(s), for $n_1=0$, we directly have
	$$\frac{{n-2d \choose \lceil n/2 \rceil -d}}{{n\choose \lceil n/2 \rceil }} = \frac{{n-2d \choose \lfloor n/2 \rfloor -d}}{{n\choose \lfloor n/2 \rfloor }} = \frac{\lfloor n/2 \rfloor \cdots (\lfloor n/2 \rfloor-d+1)}{(n-1)(n-3)\dots (n-2d+1)} \cdot \frac{\lceil n/2 \rceil\dots(\lceil n/2 \rceil-d+1)}{n(n-2)\dots(n-2d+2)} \ge \frac{1}{2^{2d}}.$$
	Now, extending the bound from the slice to a Cartesian product of a cube and slice proceeds exactly like in the proof of~\Cref{lem:dist} that we gave just above this proof, except we change the induction hypothesis appropriately.
\end{proof}

\section{\texorpdfstring{Proofs from~\Cref{sec:small-dist}}{Proofs from Small Distance Section}}\label[appendix]{app:small-dist}

We justify here why the proof of~\Cref{lemma:main-informal} follows.

\begin{proof}[Proof of~\Cref{lemma:main-informal}]
	This proof is essentially the same as \cite[Lemma 3.2]{ABSS25-SZ-Lemma},
	where the indicator function \(\mathbbm{1}_{S}\) is replaced with a
	general function \(g\) with codomain \([0,1]\).
	The proof still holds, since the only property
	of \(\mathbbm{1}_{S}\) used is that
	\begin{align*}
		\parallel \mathbbm{1}_{S} \parallel_{p} \; = \rho^{1/p},
	\end{align*}
	but for general \(f\) we still have the upper bound
	\begin{align*}
		\parallel g \parallel_{p} \; \leq \rho^{1/p}.
	\end{align*}

\end{proof}

\section{\texorpdfstring{Proofs from~\Cref{sec:large-dist}}{Proofs from Large Distance Section}}

\subsection{Probability estimates}

Here, we provide a proof of~\Cref{clm:prob-estimates}.

\begin{proof}[Proof of~\Cref{clm:prob-estimates}]
	We start with estimating the first quantity. It is easy to see that
	\begin{align*}
		\Pr_{\mathbf{x} \in \bool^{2n}_{n}}[x_1 = x_2 = \dots = x_{t+1}] \; = \; 2 \cdot \Pr_{\mathbf{x} \in \bool^{2n}_{n}}[x_1 = x_2 = \dots = x_{t+1} = 0].
	\end{align*}
	So it suffices to focus on the event that $x_{1} , \cdots , x_{t+1}$ are all equal to $0$. The probability of this event occurring is equal to
	\begin{align*}
		\dfrac{\binom{2n-(t+1)}{n}}{\binom{2n}{n}}
		= \; \dfrac{\paren{n} \cdots \paren{n - t}}{(2n) \cdots \paren{2n-t}} \; \leq \; \paren{\frac{1}{2}}^{t+1}.
	\end{align*}
	Thus we get,
	\begin{align*}
		\Pr_{\mathbf{x} \in \bool^{2n}_{n}}[x_1 = x_2 = \dots = x_{t+1}] \; \leq \; 2 \cdot \dfrac{1}{2^{t+1}} \; = \; \dfrac{1}{2^{t}}.
	\end{align*}
	This finishes the probability estimate of the first event. Now we discuss the probability estimate of the second event. For the second event, setting $x_{1}, x_{3}, \ldots, x_{2t-1}$ also fixes the value of $x_{2}, x_{4}, \ldots, x_{2t}$. Varying over the number of $1$'s in the first $2t$ coordinates, we get,
	\begin{align*}
		\Pr_{\mathbf{x} \in \bool^{2n}_{n}}[x_{2i-1} = x_{2i} \; \text{ for all } \; i\in [t]] \; = \; \sum_{i = 0}^{t} \; \binom{t}{i} \cdot \dfrac{\binom{2n-2t}{n-2i}}{\binom{2n}{n}} \; \leq \; \; 2^{t} \cdot \max_{i=0}^t  \dfrac{\binom{2n-2t}{n-2i}}{\binom{2n}{n}}.
	\end{align*}
	Since $\binom{2n-2t}{n-2i}$ is always upper bounded by $\binom{2n-2t}{n-t}$, we get,
	\begin{align*}
		\Pr_{\mathbf{x} \in \bool^{2n}_{n}}[x_{2i-1} = x_{2i} \; \text{ for all } \; i\in [t]] \; \leq \; 2^{t} \cdot \dfrac{\binom{2n-2t}{n-t}}{\binom{2n}{n}}
	\end{align*}
	Expanding both numerator and denominator, we get an upper bound of
	$$2^t\cdot\paren{\frac{n-t+1}{2n-2t+1}}^{2t} \le \frac{1}{2^t}\cdot \paren{1+\frac{1}{n-t}}^{2t} \le \frac{1}{2^t}\cdot \exp(2t/(n-t)) \le \frac{1}{2^{t-1}},$$ where we are using $t\le n/4$ in the last step.
\end{proof}

\subsection{Global polynomial from pairwise agreements}\label[appendix]{app:large-final-agreement}

In this section, we reproduce the proof of~\cite{ASS} of~\Cref{clm:final-agreement} that guarantees the existence of a ``global'' degree-$d$ polynomial under an assumption of many ``pairwise consistent'' degree-$d$ polynomials.

\begin{proof}[Proof of~\Cref{clm:final-agreement}]
	The idea is to represent polynomials in a basis (not necessarily a monomial basis) such that one can ``read off'' the polynomial corresponding to the restrictions $z_{b_i} = 1-z_{a_i}$ from this representation.

	\paragraph{The star case:} In this case, we can assume that $(a_i,b_i)=(1,i+1)$ for $i\in [t]$, without loss of generality. For $\vecx = (x_1,\dots, x_n)$ denoting Boolean variables, abusing notation slightly, we also let $x_i : \bool^n \to \Z$ denote the function that outputs the $i$-th bit of its input for $i\in [n]$. We then note that the set of functions
	$$\cB = \bigg\{ \prod_{i\in S} x_i : S\subseteq [n] \bigg\}$$ forms a graded basis for $\bool^n$. We will now modify the above basis in a way that allows us to handle the restrictions of the form $z_{i+1} = 1-z_{1}$ for $i\in [t]$.

	For $i\in [n]$, define the function $y_i:\bool^n \to \Z$ as follows:
	\begin{align} \label{eqn:yis}
		y_i = \begin{cases}
			      x_{i} + x_1 - 1, & \text{ if }i\in [2,t+1], \\
			      x_i,             & \text{ otherwise.}
		      \end{cases}
	\end{align}
	We claim that the set of functions $$\cB' = \bigg\{ \prod_{i\in S} y_i : S\subseteq [n]\bigg\}$$ also forms a graded basis for $\bool^n$. To see that they form a graded basis, we will first show that, for $d'\le n$, each degree-$d'$ function of $\cB$ can be expressed as a linear combination of degree-$d'$ functions from $\cB'$: for $S\subseteq [n]$ of size $d'$, we have $$\prod_{i\in S} x_i = \prod_{i\in S\cap [2,t+1] }(y_i-y_1+1)\cdot \prod_{i\in S\cap ([n]\setminus [2,t+1])} y_i,$$ and by expanding out the RHS, we observe that indeed we get a linear combination of functions from $\cB'$ of degree at most $d'$. Furthermore, we note that the function $\prod_{i\in S} y_i$ appears only from the expansion of the monomial $\prod_{i\in S} x_i$. Thus, if we start with a function that is non-zero, since some function $\prod_{i\in S} x_i$ must have a non-zero coefficient (since $\cB$ is known to be a basis), we obtain that the corresponding coefficient (i.e., the one of $\prod_{i\in S} y_i$) must be non-zero. This completes the proof that $\cB'$ is a graded basis for $\bool^n$.

	Now, we represent the functions computed by $R_i$ and $R_{i,j}$ in the basis $\cB'$. In particular, let $(\alpha_{i,S})_{i\in [t],S\subseteq [n]} \in G$ be such that $$R_i(\vecx) = \sum_{S\subseteq [n]:|S|\le d} \alpha_{i,S}\cdot \prod_{i'\in S} y_{i'}.$$
	Using the above expression, we observe that if we set $z_{i+1} = 1-z_1$ in the polynomial $R_i(z_1,\dots, z_n)$, since all the terms containing $y_{i+1}$ (which is equivalent to $x_{i+1} + x_1 - 1$ by definition) vanish under this substitution, we get the following function for $i\in [t]$:
	$$R_{i}|_{z_{i+1}=1-z_1}(\vecx) = \sum_{S\subseteq [n]:|S|\le d\text{~and~}S\not\ni i+1} \alpha_{i,S}\cdot \prod_{i'\in S} y_{i'}.$$
	In fact, our choice of using the basis $\cB'$ instead of $\cB$ was motivated by the above simple way to express restricted functions. By the same reasoning, we have for $i\ne j\in [t]$:
	$$R_{i,j}(\vecx) = \sum_{S\subseteq [n]\setminus \{i+1,j+1\}:|S|\le d} \alpha_{i,S}\cdot \prod_{i'\in S} y_{i'}.$$
	Therefore, using the fact that $R_{i,j}(\vecx) = R_{j,i}(\vecx)$, the above expression yields that for $\vecx\in \bool^n$, we must have:
	$$\sum_{S\subseteq [n]\setminus \{i+1,j+1\}:|S|\le d} \alpha_{i,S}\cdot \prod_{i'\in S} y_{i'} = \sum_{S\subseteq [n]\setminus \{i+1,j+1\}:|S|\le d} \alpha_{j,S}\cdot \prod_{i'\in S} y_{i'}.$$
	However, since $\cB'=\{\prod_{i\in S} y_i:S\subseteq [n]\}$ is a basis for $\bool^n$, comparing coefficients on both sides, we deduce that
	$$\alpha_{i,S} = \alpha_{j,S}$$
	for all $i\ne j\in [t]$ and $S\subseteq [n]$ of size at most $d$ such that $S\subseteq [n]\setminus \{i+1,j+1\}$.

	Now, let $P(z_1,\dots, z_n)$ be the degree-$d$ polynomial corresponding to the function:
	$$P(\vecx) = \sum_{S\subseteq [n]:|S|\le d} \beta_{S} \cdot \prod_{i'\in S} y_{i'},$$
	where $\beta_S = \alpha_{i,S}$
	for arbitrary indices $i\ne j\in [t]$ such that
	$S\subseteq [n]\setminus \{i+1,j+1\}$;
	since $|S|\le d \le t-2$, such indices always exist.
	It is important to notice here that the exact choice of $i$ and $j$ does not matter in that the resulting function $P$ is always the same due to the condition that $\alpha_{i,S} = \alpha_{j,S}$ whenever $S\subseteq [n]\setminus\{i+1,j+1\}$.

	For this function $P$, we will now show that we get the same function under the substitution $z_{i+1} = 1-z_{1}$. We have for all $i\in [t]$ that: $$P|_{z_{i+1}=1-z_1}(\vecx) = \sum_{S\subseteq [n]:|S|\le d\text{~and~}S\not\ni i+1} \beta_{S}\cdot \prod_{i'\in S} y_{i'},$$ and $$R_i|_{z_{i+1}=1-z_1}(\vecx) = \sum_{S\subseteq [n]:|S|\le d\text{~and~}S\not\ni i+1} \alpha_{i,S}\cdot \prod_{i'\in S} y_{i'}.$$
	For every $S\subseteq[n]$ of size at most $d$ such that $S\not\ni i+1$, we recall that $\beta_S$ (almost by definition) is equal to $\alpha_{i,S}$. Thus, since the terms in the RHS of the above equations come from a basis (namely $\cB'$), we conclude that indeed $P|_{z_{i+1}=1-z_1} = R_i|_{z_{i+1}=1-z_1}$ as polynomials. Hence $P$ is the desired polynomial, finishing the proof of~\Cref{clm:final-agreement} in the star case.

	\paragraph{The matching case:} The argument is quite similar to the star case. First, without loss of generality, we assume that the matching is formed by the edges $(a_i,b_i) = (2i-1,2i)$ for $i\in [t]$. We now define the basis $\cB'$ different from the star case as follows: For $i\in [n]$, define the function $y_i:\bool^n \to \Z$ as
	\begin{align} \label{eqn:yis-matching} y_i =
		\begin{cases}
			x_{i} + x_{i-1} - 1, & \text{~if~}i\in \{2,4,\dots, 2t\}, \\
			x_i,                 & \text{otherwise.}
		\end{cases}
	\end{align} and the family $\cB'$ as
	$$\cB' =\bigg\{\prod_{i\in S} y_i : S\subseteq [n] \bigg\}.$$
	Similar to the star case, it can be argued that $\cB'$ is a graded basis for $\bool^n$.

	Now, let $(\alpha_{i,S})_{i\in [t], S\subseteq [n]}\in G$ be such that
	$$R_i(\vecx) = \sum_{S\subseteq [n]:|S|\le d} \alpha_{i,S}\cdot \prod_{i'\in S} y_{i'}.$$ Using the above expression, we have the following expression for the polynomial obtained on substituting $z_{2i} = 1-z_{2i-1}$ where $i\in [t]$:
	$$R_i|_{z_{2i}=1-z_{2i-1}}(\vecx) = \sum_{S\subseteq [n]:|S|\le d \text{~and~}S\not\ni 2i} \alpha_{i,S}\cdot \prod_{i'\in S} y_{i'}.$$ Similarly, for $i\ne j\in [t]$, we have:
	$$R_{i,j}(\vecx) = \sum_{S\subseteq [n]\setminus \{2i,2j\}:|S|\le d} \alpha_{i,S} \cdot \prod_{i'\in S} y_{i'}.$$ Therefore, using the fact that $R_{i,j}(\vecx) = R_{j,i}(\vecx)$, we get for $\vecx\in \bool^n$ that
	$$\sum_{S\subseteq [n]\setminus \{2i,2j\}:|S|\le d} \alpha_{i,S} \cdot \prod_{i'\in S} y_{i'} = \sum_{S\subseteq [n]\setminus \{2i,2j\}:|S|\le d} \alpha_{j,S} \cdot \prod_{i'\in S} y_{i'}.$$ Thus, since the RHS terms are from the basis $\cB'$, we conclude for all $i\ne j\in [t]$ and $S\subseteq [n]\setminus \{2i,2j\}$ of size at most $d$, it holds that $$\alpha_{i,S} = \alpha_{j,S}.$$

	Now, let $P(z_1,\dots, z_n)$ be the degree-$d$ polynomial corresponding to the function:
	$$P(\vecx) = \sum_{S\subseteq [n]:|S|\le d} \beta_{S} \cdot \prod_{i'\in S} y_{i'},$$ where $\beta_S = \alpha_{i,S}$ for arbitrary indices $i\ne j\in [t]$ such that $S\subseteq [n]\setminus \{2i,2j\}$; since $|S|\le d \le t-2$, such indices always exist. It is important to notice here that the exact choice of $i$ and $j$ does not matter in that the resulting function $P$ is always the same due to the condition that $\alpha_{i,S} = \alpha_{j,S}$ whenever $S\subseteq [n]\setminus\{2i,2j\}$.

	For this function $P$, we will now show that we get the same function under the substitution $z_{2i} = 1-z_{2i-1}$. We have for all $i\in [t]$ that: $$P|_{z_{2i}=1-z_{2i-1}}(\vecx) = \sum_{S\subseteq [n]:|S|\le d\text{~and~}S\not\ni 2i} \beta_{S}\cdot \prod_{i'\in S} y_{i'},$$ and $$R_i|_{z_{2i}=1-z_{2i-1}}(\vecx) = \sum_{S\subseteq [n]:|S|\le d\text{~and~}S\not\ni 2i} \alpha_{i,S}\cdot \prod_{i'\in S} y_{i'}.$$
	For every $S\subseteq[n]$ of size at most $d$ such that $S\not\ni 2i$, we recall that $\beta_S$ (by definition) is equal to $\alpha_{i,S}$. Thus, since the terms in the RHS of the above equations come from a basis (namely $\cB'$), we conclude that indeed $P|_{z_{2i}=1-z_{2i-1}} = R_i|_{z_{2i}=1-z_{2i-1}}$ as polynomials. Hence $P$ is the desired polynomial, finishing the proof of~\Cref{clm:final-agreement} in the matching case as well.
\end{proof}

\end{document}